\newcommand{\blind}{1}
\definecolor{purple}{RGB}{250,000,180}
\def\ttop{^{\top}}
\def \R {\mathbb{R}}
\def \P {\mathbf{P}}
\def \E {\mathbf{E}}
\newcommand{\ts}{\textstyle}
\newcommand{\var}{\textup{var}}
\newcommand{\Var}{\textup{var}}
\newcommand{\Cov}{\textup{cov}}
\newcommand{\dK}{d_{\textup{K}}}
\renewcommand{\L}{\mathcal{L}}
\newtheorem{lemma}{Lemma}
\newtheorem{proposition}{Proposition}
\newtheorem{assume}{Assumption}
\newtheorem{theorem}{Theorem}
\begin{document}

\def\spacingset#1{\renewcommand{\baselinestretch}%
{#1}\small\normalsize} \spacingset{1.5}

%%%%%%%%%%%%%%%%%%%%%%%%%%%%%%%%%%%%%%%%%%%%%%%%%%%%%%%%%%%%%%%%%%%%%%%%%%%%%%

\if1\blind
{
  \title{\Large\bf Robust Max Statistics for High-Dimensional Inference}
  \author{Mingshuo Liu and Miles E. Lopes\\
    %Department of Statistics,
    University of California, Davis}
\date{}
  \maketitle
} \fi

\if0\blind
{
  \bigskip
  \bigskip
  \bigskip
  \begin{center}
      {\LARGE\bf Robust Max Statistics\\[0.2cm] for High-Dimensional Inference}
\end{center}
  \medskip
} \fi

\begin{abstract}
\singlespacing
Although much progress has been made in the theory and application of bootstrap approximations for max statistics in high dimensions, the literature has largely been restricted to cases involving light-tailed data. To address this issue, we propose an approach to inference based on \emph{robust max statistics}, and we show that their distributions can be accurately approximated via bootstrapping when the data are both high-dimensional and heavy-tailed. In particular, the data are assumed to satisfy an extended version of the well-established $L^{4}$-$L^2$ moment equivalence condition, as well as a weak variance decay condition. In this setting, we show that \emph{near-parametric} rates of bootstrap approximation can be achieved in the Kolmogorov metric, \emph{independently of the data dimension}. Moreover, this theoretical result is complemented by encouraging empirical results involving both Euclidean and functional data.

\end{abstract}

\noindent%
%{\it Keywords:} \spacingset{1} high-dimensional statistics; robustness;  bootstrap; simultaneous inference; median-of-means

%\vfill

%\newpage

\spacingset{1} % DON'T change the spacing!
\section{Introduction}\label{sec:intro}
Over the past decade, distributional approximation results for max statistics have become a prominent topic in high-dimensional inference. A prototypical example of such a statistic has the form $\max_{1\leq j\leq p}\sqrt n(\bar X_j-\mu_j)$, where $\bar X\in\R^p$ is the sample mean vector of $n$ observations and $\mu=\E(\bar X)$, but numerous variants arise in diverse contexts. Indeed, one of the main drivers of research on this topic is that many high-dimensional inference tasks can be unified within the problem of approximating the distribution of $\max_{1\leq j\leq p}\sqrt n(\bar X_j-\mu_j)$, or some adaptation of it. For instance, such approximations can be directly applied to construct simultaneous tests and confidence intervals for  coordinate-wise means $\mu_1,\dots,\mu_p$. More broadly, other applications include detection of treatment effects~\citep{Sun:2022}, error estimation for sample covariance matrices~\citep{Lopes:Bernoulli}, post-selection inference~\citep{Kuchibhotla:2020:PSI}, change-point detection~\citep{Yu:2021}, confidence bands in non-parametric regression~\citep{Singh:2023}, tests for shape restrictions~\citep{Chetverikov:shape}, and more. Meanwhile, another major reason why max statistics have attracted growing interest is that
bootstrap methods can accurately approximate their distributions when $p$ is much larger than the sample size $n$, which has been demonstrated by a cascade of theoretical advances~\cite{CCK:2013, CCK:2017,Deng:2020,Kuchibhotla2020high,lopes2020bootstrapping,Kuchibhota_2021,Lopes:2022:AOS,CCK:2023,Fang:2023,Koike:2024}.

 Despite the substantial innovations that have been made in bootstrap approximations for max statistics, there is an Achilles heel that continues to hinder much of the research in this area. Namely, there is a widespread reliance on the assumption that the covariates have light tails, e.g.,~sub-Gaussian or sub-exponential. 
Moreover, there are empirical and theoretical results suggesting that light tails \emph{are necessary} for bootstrap methods to successfully approximate the distributions of conventional max statistics in high dimensions~\citep{zhang2017gaussian,Giessing:2020,Preinerstorfer:2024}.
For instance, the simulations in~\citep{Giessing:2020} show that the Gaussian multiplier bootstrap performs poorly for $\max_{1\leq j\leq p}\sqrt n|\bar X_j-\mu_j|$ when the covariates have heavy tails and $p\gg n$. From a theoretical standpoint, it has also been proven that there is a moment-dependent phase transition governing the success of Gaussian approximations for $\max_{1\leq j\leq p}\sqrt n(\bar X_j-\mu_j)$~\citep{zhang2017gaussian,Preinerstorfer:2024}. That is, if $W\in\R^p$ is a centered Gaussian vector having the same covariance matrix as $\sqrt n(\bar X-\mu)$, then the Kolmogorov distance between $\max_{1\leq j\leq p}W_j$ and $\max_{1\leq j\leq p}\sqrt n(\bar X_j-\mu_j)$  may or may not vanish in the limit that $n$ and $p$ jointly diverge, depending on whether the covariates have enough moments. This breakdown of Gaussian approximations suggests that similar behavior should occur for bootstrap approximations---especially in the case of the Gaussian multiplier bootstrap, which seeks to mimic the behavior of $\sqrt n(\bar X-\mu)$ by generating random vectors from a centered Gaussian distribution whose covariance matrix is an estimate for that of $\sqrt n(\bar X-\mu)$.

Due to the issues just mentioned, there are strong motivations to extend bootstrap methods involving max statistics so that they can be applied reliably to high-dimensional data with heavy tails.
However, the research in this direction is still at a very early stage, and there are just a couple of previous works that have given it attention. The first of these works briefly outlined an approach that combines truncation with permutation-based sampling~\citep{Lou:2017}, but it was ultimately not pursued as a practical method for inference. More recently, the state-of-the-art paper~\citep{Fan:2023} proposed a weighted bootstrap for a max statistic of the form $\max_{1\leq j\leq p}\sqrt n|\hat\theta_j-\theta_j|$, where $\theta_j$ denotes the so-called ``pseudomedian'' of the $j$th covariate, and $\hat \theta_j$ is the classical Hodges-Lehmann estimator for $\theta_j$~\citep{Hodges:1963}.

While the approach in~\citep{Fan:2023} achieved major progress by delivering robust simultaneous inference for $\theta_1,\dots,\theta_p$, it still has some essential limitations. One is that the  pseudomedians can be unsatisfactory substitutes for the means $\mu_1,\dots,\mu_p$, particularly in cases of asymmetric distributions, for which $\theta_j$ and $\mu_j$ may be quite different. A related issue is that an approach based on pseudomedians does not extend naturally to suprema of zero-mean empirical processes, which appear frequently in applications of bootstrap approximations for max statistics~\citep{CCK_bands,Chen:ridges,Han:Bernoulli,Chen:Jackknife,Dette:Banach,Lopes:Bernoulli,Giessing:processes}. Another issue is that the method in~\citep{Fan:2023} produces simultaneous confidence intervals for $\theta_1,\dots,\theta_p$ that are only theoretically justified when they all have the same width, which is impractical if the covariates fluctuate over different scales. Lastly, the available theoretical analysis for $\max_{1\leq j\leq p}\sqrt n|\hat\theta_j-\theta_j|$ establishes a near $n^{-1/4}$ rate of bootstrap approximation in the Kolmogorov metric, which does not align with other recent results for max statistics that establish near $n^{-1/2}$ rates in the setting of light-tailed data~\cite{lopes2020bootstrapping,Lopes:2022:AOS,CCK:2023,Fang:2023,Koike:2024}.

In the current paper, we propose to bootstrap a robust max statistic that enables simultaneous inference on the means $\mu_1,\dots,\mu_p$ and overcomes the difficulties described above. Our approach is designed in terms of three ingredients: truncation, partial standardization, and the median-of-means (\textsc{mom}) technique~ \citep{Nemirovsky1983,lugosi2019mean}. To briefly lay out the main ideas, let $X_1,\dots,X_{n}\in\R^p$ be i.i.d.~observations with $\var(X_{1j})=\sigma_j^2$, and  $\mu_j=\E(X_{1j})$ as before. Also, let $\hat\sigma_1^2,\dots,\hat\sigma_p^2$ denote variance estimates that will be constructed from a small hold-out set via \textsc{mom}, and  define the truncation function $\varphi_t(x)=\textup{sgn}(x)(|x|\wedge t)$ for any $x\in\R$ and $t\geq 0$, where $a\wedge b=\min\{a,b\}$. In this notation, the proposed robust max statistic is defined by
\begin{equation}\label{eqn:maxstatdef}
\small
\mathcal{M}_n=\max_{1\leq j\leq p} \sum_{i=1}^n 
 \frac{\varphi_{\hat t_j}(X_{ij}-\mu_j)}{\hat \sigma_{j}^{\tau} n^{1/2}},
 \end{equation}
 where $\hat t_j=\sqrt n \hat\sigma_j$ for $j=1,\dots,p$, and $\tau\in[0,1]$ is a fixed partial standardization parameter. 

 Importantly, there is a direct link between distributional approximations for $\mathcal{M}_n$ and inference on the means $\mu_1,\dots,\mu_p$. This is due to the monotonicity of the functions $\varphi_{\hat t_j}(\cdot)$, which makes it straightforward to construct simultaneous confidence intervals for the means using quantile estimates for $\mathcal{M}_n$ and its corresponding min statistic, as discussed in Section~\ref{sec:method}. The robust variance estimates $\hat\sigma_j^2$ also play an essential role, because they ensure that the confidence intervals induced by $\mathcal{M}_n$ are automatically adapted to the scale of the covariates, which is an issue that has often been neglected in the literature on max statistics. 

For the purpose of bootstrapping $\mathcal{M}_n$, let $\tilde X_j$ denote a hold-out \textsc{mom} estimate of $\mu_j$ to be defined later, and let $\bar \varphi_j=\frac{1}{n}\sum_{i=1}^n \varphi_{\hat t_j}(X_{ij}-\tilde X_j)$. In addition, let $\xi_1,\dots,\xi_n\sim N(0,1)$ be i.i.d.~Gaussian multipliers generated independently of the data. Putting these pieces together, we define a bootstrap sample of $\mathcal{M}_n$ as
\begin{equation}\label{eqn:bootdef}
\small
\mathcal{M}_n^*=\max_{1\leq j\leq p}\sum_{i=1}^n  \frac{\xi_i\big(\varphi_{\hat t_j}(X_{ij}-\tilde X_j)-\bar \varphi_j\,\big)}{\hat \sigma_{j}^{\tau}n^{1/2}}.
\end{equation}

With regard to theoretical analysis, we focus on a setting where the tails of the data are quantified by a variant of the  $L^4$-$L^2$ moment equivalence condition, which has gained increasing currency in the high-dimensional robustness literature~\cite{lugosi2019mean, ke2019user,mendelson2020robust, roy2021empirical, %bahmani2021nearly, 
abdalla2022covariance}. Specifically, we assume there is some $\delta>0$ such that the bound $\|\langle v,X_1-\mu\rangle\|_{L^{4+\delta}}\lesssim \|\langle v,X_1-\mu\rangle\|_{L^2}$ holds for all $v\in\R^p$, and in Proposition~\ref{prop:moments}, we show that this condition is satisfied by heavy-tailed instances of well-known models. The other primary structural assumption in our analysis is that the covariates have a weak variance decay property of the form $\sigma_{(j)}^2\asymp j^{-2\beta}$ for some fixed $\beta>0$, where $\sigma_{(1)}^2\geq \cdots\geq \sigma_{(p)}^2$ are the sorted coordinate-wise variances. Notably, the decay is referred to as weak because the parameter $\beta$ is allowed to be \emph{arbitrarily small}. Furthermore, it is known that this type of structure arises naturally in a variety of high-dimensional contexts that are related to principal components analysis and functional data analysis, among others~\citep{lopes2020bootstrapping}. 
Under the complete set of conditions given in Assumption~\ref{A:model}, our main result shows that with high probability, the Kolmogorov distance $\sup_{s\in \R}|\P(\mathcal{M}_n\leq s)-\P(\mathcal{M}_n^*\leq s|X)|$ is nearly of order $n^{-1/2}$, where $\P(\cdot|X)$ denotes probability that is conditional on all of the observations. 

From a practical standpoint, the proposed method has several strengths. First, the method does not require fine tuning, which is demonstrated by the fact that we use the simple choices of $\hat t_j=\sqrt n \hat\sigma_j$ and $\tau=0.9$ throughout all of the experiments presented in Section~\ref{sec: simu}. Second, we show that the method reliably produces well-calibrated tests and confidence intervals across many conditions---including heavy-tailed data generated from separable and elliptical distributions, as well as heavy-tailed functional data with rough sample paths. Third, the simulation results reveal that the proposed method performs favorably in comparison to the pseudomedian approach in~\citep{Fan:2023}.\\

\noindent\textbf{Notation.} If $A$ is a real matrix, its Frobenius norm is $\|A\|_F=\sqrt{\textup{tr}(A\ttop A)}$, and its operator norm $\|A\|_{\textup{op}}$ is the same as its largest singular value. If $x$ and $y$ are Euclidean vectors of the same dimension, then $\langle x,y\rangle$ denotes the Euclidean inner product, and $\|x\|_2=\sqrt{\langle x,x\rangle}$. If $\xi$ is a scalar random variable and $1\leq q<\infty$, we write $\|\xi\|_{L^q}=(\E|\xi|^q)^{1/q}$, and in the case when $q=\infty$, we use $\|\xi\|_{L^{\infty}}$ to refer to the essential supremum. If $f$ is a scalar-valued function on $\R$, the notation $\|f\|_{L^{\infty}}$ is understood analogously with respect to Lebesgue measure.  If $\{a_n\}$ and $\{b_n\}$ are sequences of non-negative real numbers, then the relations $a_n \lesssim b_n$ and $a_n = \mathcal{O}(b_n)$ are equivalent, and mean that there is a constant $c>0$ not depending on $n$, such that $a_n \leq cb_n$ holds for all large $n$.  If $a_n\lesssim b_n$ and $b_n\lesssim a_n$ both hold, then we write $a_n\asymp b_n$. Lastly, let $a_n \vee b_n=\max\{a_n,b_n\}$.

\section{Method}
\label{sec:method}
Here, we provide the details for constructing the bootstrap sample $\mathcal{M}_n^*$, as well as simultaneous confidence intervals $\hat{\mathcal{I}}_1,\dots,\hat{\mathcal{I}}_p$ for the coordinate-wise means $\mu_1,\dots,\mu_p$. Further applications of these intervals to various testing problems will be covered later in Section~\ref{sec: simu}.

In addition to the observations $X_1,\dots,X_n$ discussed above, let $X_{n+1},\dots,X_{n+m_n}$ denote an independent set of i.i.d.~hold-out observations generated from the same distribution. For simplicity, the number of hold-out observations $m_n$ is assumed to be even, and in all of our numerical experiments, we will take $m_n$ to be about 10\% of $n$. The hold-out observations are used to construct robust estimators $\tilde X_1,\dots,\tilde X_p$ and $\hat\sigma_1^2,\dots,\hat\sigma_p^2$ for the coordinate-wise means and variances, which are the only ingredients for generating $\mathcal{M}_n^*$ that were not addressed previously in Section~\ref{sec:intro}. Taking an \textsc{mom} approach, we partition the hold-out indices $\{n+1,\dots,n+m_n\}$ into $b_n$ blocks $\mathcal{B}_1,\dots,\mathcal{B}_{b_n}$, with each block containing an even number of $\ell_n$ indices such that $m_n=\ell_n b_n$. More specifically, let $\mathcal{B}_1=\{n+1,\dots,n+\ell_n\}$, $\mathcal{B}_2=\{n+\ell_n+1,\dots,n+2\ell_n\}$, and so on. 
For the $l$th block, let 
\begin{equation}
    \bar X_j(l)=\frac{1}{\ell_n}\sum_{i\in \mathcal{B}_l} X_{ij}
\end{equation}
 denote the block-wise sample mean of the $j$th coordinate, and define the \textsc{mom} estimator of $\mu_j$ as
\begin{equation}
\small
    \tilde X_j = \textup{median}(\bar X_j(1),\dots,\bar X_j(b_n)).
\end{equation}
Likewise, we construct an \textsc{mom} estimate for each $\sigma_j^2$ along similar lines. The $l$th blockwise estimate for $\sigma_j^2$ is obtained by averaging the squared differences of $\ell_n/2$ pairs of observations
\begin{equation}\label{eqn:barsigmadef}
\small
    \bar\sigma_j^2(l)
    = \frac{1}{\ell_n/2}\sum_{ \substack{i,i'\in \mathcal{B}_l\\ i'-i=\ell_n/2}} \frac{1}{2}(X_{ij}-X_{i'j})^2,
\end{equation}
and then $\hat\sigma_j^2$ is defined to be the median of the block-wise estimates
\begin{equation}
\small
\hat\sigma_j^2=\textup{median}(\bar\sigma_{j}^2(1),\dots,\bar\sigma_{j}^2(b_n)).
\end{equation}

Next, we turn to the construction of simultaneous confidence intervals $\hat{\mathcal{I}}_1,\dots,\hat{\mathcal{I}}_p$ for $\mu_1,\dots,\mu_p$.  Let $1-\alpha$ denote the nominal simultaneous coverage probability, and let \smash{$\hat q_+(1-\alpha/2)$} denote the empirical $(1-\alpha/2)$-quantile of a collection of bootstrap samples generated in the manner of $\mathcal{M}_n^*$. Also, let $\underline{\mathcal{M}}_n^*$ denote the counterpart of $\mathcal{M}_n^*$ that is obtained by replacing $\max_{1\leq j\leq p}$ with $\min_{1\leq j\leq p}$ in equation~\eqref{eqn:bootdef}, and let $\hat q_{-}(\alpha/2)$ denote the empirical $(\alpha/2)$-quantile of a collection of bootstrap samples generated in the manner of $\underline{\mathcal{M}}_n^*$. In this notation, the confidence interval $\hat{\mathcal{I}}_j$ is defined by
\begin{equation}
\label{equ:Ij}
\hat{\mathcal{I}}_j=\bigg\{x\in\R \, : \, \hat q_{-}(\alpha/2)\leq \frac{1}{ \sqrt n\hat \sigma_{j}^{\tau}}\sum_{i=1}^n \varphi_{\hat t_j}(X_{ij}-x)\leq \hat q_{+}(1-\alpha/2)\bigg\}.
\end{equation}
Due to the fact that the functions $\varphi_{\hat t_1}(\cdot),\dots,\varphi_{\hat t_p}(\cdot)$ are monotone, it is straightforward to compute all the endpoints of $\hat{\mathcal{I}}_1,\dots,\hat{\mathcal{I}}_p$.

To comment on the role of the partial standardization parameter $\tau\in[0,1]$, it provides a way to balance two opposing effects that occur in the extreme cases when $\tau$ is equal to 0 or 1. When $\tau=0$, all of the intervals $\hat{\mathcal{I}}_1,\dots,\hat{\mathcal{I}}_p$ have the same width, which is clearly undesirable when the covariates fluctuate over different scales. Alternatively, when $\tau=1$, all of the covariates will be on approximately ``equal footing'', which will tend to make the max statistic $\mathcal{M}_n$ sensitive to all $p$ dimensions. This is undesirable in high-dimensional situations where the covariates fluctuate over different scales, because it eliminates a form of low-dimensional structure that can simplify the behavior of $\mathcal{M}_n$ when $\tau<1$. To see this, consider a case where $\tau=0$ and  $\sigma_1,\dots,\sigma_d$ are much larger than $\sigma_{d+1},\dots,\sigma_p$ for some $d\ll p$. In this case, the maximizing index for $\mathcal{M}_n$ is likely to reside in the small subset $\{1,\dots,d\}\subset\{1,\dots,p\}$. Thus, the behavior of $\mathcal{M}_n$ will be mainly governed by the first $d$ covariates, which intuitively reduces the effective dimension of the problem of approximating the distribution of $\mathcal{M}_n$. Accordingly, as was originally proposed in~\citep{lopes2020bootstrapping}, it is natural to select an intermediate value of $\tau$ between 0 and 1 that can mitigate the unwanted effects that occur at $\tau\in\{0,1\}$.

\section{Theory}
\label{sec:theory}
Our theoretical analysis is framed in terms of a sequence of models that are implicitly embedded in a triangular array whose rows are indexed by $n$. In this context, all model parameters are allowed to vary with $n$, except when stated otherwise. In particular, the dimension $p=p(n)$ is regarded as a function of $n$, and hence, if a quantity does not depend on $n$, then it does not depend on $p$ either. 

To state our model assumptions, recall that the sorted coordinate-wise variances of $X_1$ are denoted as $\sigma_{(1)}^2\geq \cdots\geq \sigma_{(p)}^2$, and for any $d\in \{1,\dots,p\}$, let $J(d)$ be a set of $d$ indices in $\{1,\dots,p\}$ that satisfies $\{\sigma_j^2\,|\,j\in J(d)\}=\{\sigma_{(1)}^2,\dots,\sigma_{(d)}^2\}$. In addition, let $R(d)$ denote the $d \times d$ correlation matrix associated with the covariates $\{X_{1j}\}_{j \in J(d)}$.

\begin{assume}\label{A:model}{} The observations $X_1,\dots,X_{n+m_n}\in\R^p$ are i.i.d., and there are constants $C\geq 1$, $\beta>0$, and $\delta \geq \epsilon>0$ not depending on $n$ such that the following conditions hold:
\begin{enumerate}[label=(\roman*)]

\item \label{A:moments}
%(Moments). 
For all $v\in\R^p$, $\big\|\big\langle v,X_1 -\E(X_1)\big\rangle\big\|_{L^{4+\delta}} \, \leq \,  C \big\|\big \langle v,X_1 -\E(X_1)\big\rangle\big\|_{L^{2}}$ holds.

\item \label{A:conti}
%(Continuity). 
For all $j=1,\dots,p$, the random variable $X_{1j}/\sigma_j$ has a Lebesgue density $f_j$ such that $\|f_j\|_{L^{\infty}}\leq C$.

\item \label{A:var}
%(Variance decay). 
For all $j=1,\dots,p$, the inequalities $\frac{1}{C}\sigma_{(1)}^2 j^{-2\beta}\leq \sigma_{(j)}^2\leq C \sigma_{(1)}^2 j^{-2\beta}$ hold.
\item
\label{A:cor}
%(Correlations.) 
If $l_n=\big\lceil n^{\frac{\epsilon}{24(\beta \vee 1)}}\!\wedge p\big\rceil$, then $\|R(l_n)\|_F^2\leq C l_n^{2-\frac{1}{C}} $. 
\end{enumerate}
\end{assume}
\noindent\textbf{Remarks.} All of the conditions in Assumption~\ref{A:model} are invariant to shifting $X_1\mapsto X_1+v$ for fixed $v\in\R^p$, and scaling $X_1\mapsto c X_1$  for fixed $c\neq 0$. The following paragraphs provide several examples that address each of the conditions (\ref{A:moments})-(\ref{A:cor}). Also, it is straightforward to combine the examples to construct a wide assortment of data-generating distributions that satisfy all of the conditions in Assumption~\ref{A:model} simultaneously.\\

\noindent\textbf{Examples of $L^{4+\delta}$-$L^2$ moment equivalence.} In recent years, moment assumptions similar to
condition (\ref{A:moments}) have been adopted in many analyses of robust statistical methods for high-dimensional data~\cite{lugosi2019mean, ke2019user,mendelson2020robust, roy2021empirical, %bahmani2021nearly, 
abdalla2022covariance}. As shown in Proposition~\ref{prop:moments} below, condition~(\ref{A:moments}) is compatible with the classes of \emph{elliptical} and \emph{separable} models (also known as independent component models), which are widely used in areas such as multivariate analysis, random matrix theory, and signal processing~\citep{kotz2019continuous,Bai:2010,Comon:2010}.

To be precise, we say that an observation $X_1$ with mean $\mu$ and covariance matrix $\Sigma$ has an elliptical distribution if it can be represented as $X_1=\mu+\eta_1 \Sigma^{1/2} U_1$, where $U_1\in\R^p$ is uniformly distributed on the unit sphere, and $\eta_1$ is a non-negative scalar random variable that is independent of $U_1$ and normalized by $\E(\eta_1^2)=p$. On the other hand, we say that $X_1$ has a separable distribution if it can be represented as $X_1=\mu+\Sigma^{1/2}\zeta_1$, where $\zeta_1=(\zeta_{11},\dots,\zeta_{1p})$ has i.i.d.~entries with $\E(\zeta_{11})=0$, and $\Var(\zeta_{11})=1$.\label{modelpage}
\begin{proposition}\label{prop:moments} Conditions \ref{A:moments} and~\ref{A:conti} hold simultaneously if one of the following two conditions holds for some $\delta>0$ that is fixed with respect \smash{to $n$.}
\begin{itemize}
    \item[(1)] The observation $X_1$ is drawn from an elliptical distribution such that \smash{$\|\eta_1\|_{L^{4+\delta}} \lesssim  \sqrt{p}$}, and the random variable $X_{11}/\sigma_1$ has a Lebesgue density $f_1$  such that $\|f_1\|_{L^{\infty}}\lesssim 1$. 
   % (The density of $\sqrt{B}$ itself is bounded whenever $p\geq 3.)$

    \item[(2)] The observation $X_1$ is drawn from a separable distribution with\\ \smash{$\max_{1\leq j\leq p}\|\zeta_{1j}\|_{L^{4+\delta}}$ $\lesssim$ $1$,} and each random variable $\zeta_{1j}$ has a Lebesgue density $g_j$ such that $\max_{1\leq j\leq p}\|g_j\|_{L^{\infty}}\lesssim 1$.
\end{itemize}
\end{proposition}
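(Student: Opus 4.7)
The plan is to verify conditions~(\ref{A:moments}) and~(\ref{A:conti}) separately in each of the two models.

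\textbf{Elliptical case.} Since $X_1-\mu=\eta_1\Sigma^{1/2} U_1$ with $U_1$ uniform on $S^{p-1}$ and independent of $\eta_1$, the rotational symmetry of $U_1$ yields the distributional identity $\langle v,X_1-\mu\rangle \stackrel{d}{=} \|\Sigma^{1/2}v\|_2\,\eta_1 U_{11}$, where $U_{11}$ denotes the first coordinate of $U_1$. Because $U_{11}^2\sim\mathrm{Beta}(1/2,(p-1)/2)$ has $2m$-th moment of order $p^{-m}$, the ratio $\|U_{11}\|_{L^{4+\delta}}/\|U_{11}\|_{L^2}$ is bounded uniformly in $p$; combined with $\|\eta_1\|_{L^2}=\sqrt{p}$ and the hypothesis $\|\eta_1\|_{L^{4+\delta}}\lesssim\sqrt p$, this factorization delivers condition~(\ref{A:moments}). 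For condition~(\ref{A:conti}), the identity $(X_{1j}-\mu_j)/\sigma_j\stackrel{d}{=}\eta_1 U_{11}$ reduces the task to bounding the density of $\eta_1 U_{11}$. Conditioning on $\eta_1$ and using the explicit form $f_{U_{11}}(v)=c_p(1-v^2)^{(p-3)/2}$ with $c_p\asymp\sqrt p$, the density of $\eta_1 U_{11}$ at a point $y$ equals $\int_{|y|}^{\infty} s^{-1}c_p(1-y^2/s^2)^{(p-3)/2}\,dF_{\eta_1}(s)$. Bounding the integrand by $c_p/s$ and substituting $u=\eta_1^2/p$ dominates this by a constant multiple of $\E[(\eta_1^2/p)^{-1/2}]$, which is finite and bounded uniformly in $p$ by a layer-cake computation that uses the hypothesis $\P(\eta_1^2/p\leq t)\lesssim t$ for small $t$.

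\textbf{Separable case.} Setting $w=\Sigma^{1/2}v$ gives $\langle v,X_1-\mu\rangle=\sum_k w_k\zeta_{1k}$, a sum of independent centered random variables with $\|w_k\zeta_{1k}\|_{L^{4+\delta}}\leq C'|w_k|$. Rosenthal's inequality at exponent $q=4+\delta$ bounds $\|\sum_k w_k\zeta_{1k}\|_{L^q}^q$ by a constant multiple of $\max\{\sum_k|w_k|^q,(\sum_k w_k^2)^{q/2}\}$, and since $\sum_k|w_k|^q\leq(\max_k|w_k|)^{q-2}\|w\|_2^2\leq \|w\|_2^q$, condition~(\ref{A:moments}) follows. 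For condition~(\ref{A:conti}), write $(X_{1j}-\mu_j)/\sigma_j=\sum_k a_k\zeta_{1k}$ with $\|a\|_2=1$. The plan is to combine two complementary bounds on $f_S$: a convolution bound $\|f_S\|_{L^\infty}\leq K_0/\max_k|a_k|$, where $K_0=\max_j\|g_j\|_{L^\infty}$, and a local-CLT bound of Petrov type, $\|f_S\|_{L^\infty}\leq(2\pi)^{-1/2}+C''\max_k|a_k|$, which follows from the bounded-density and bounded-third-moment assumptions on the $\zeta_{1j}$. Taking the minimum of the two bounds as $\max_k|a_k|$ ranges over $(0,1]$ produces a constant upper bound on $\|f_S\|_{L^\infty}$ uniformly in $p$ and $a$.

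\textbf{Main obstacle.} The delicate step is condition~(\ref{A:conti}) in both models, where boundedness of the density of $X_{1j}/\sigma_j$---appearing as a product in the elliptical case or a weighted sum in the separable case---must be established \emph{uniformly} in $p$. In the elliptical case, $f_{U_{11}}$ has $L^\infty$-norm of order $\sqrt p$, and one must show that the small-ball hypothesis on $\eta_1^2/p$ precisely offsets this blow-up. In the separable case, neither the one-coordinate convolution bound nor the Gaussian local-CLT bound alone is uniformly good; a uniform bound only emerges after interpolating between the two bounds through the parameter $\max_k|a_k|$.
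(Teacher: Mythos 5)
Your elliptical moment argument via the Beta distribution of $U_{11}^2$ is valid and delivers the same bound $\|U_{11}\|_{L^q}\lesssim p^{-1/2}$ that the paper obtains by factoring $\|Z_{11}\|_{L^q}=\big\|\|Z_1\|_2\big\|_{L^q}\|U_{11}\|_{L^q}$ and applying Jensen; your separable moment argument is identical to the paper's Rosenthal step. The two density claims diverge from the paper and each contain a gap.

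For the elliptical density, your bound amounts to requiring $c_p\,\E(\eta_1^{-1})\asymp\E\big[(\eta_1^2/p)^{-1/2}\big]\lesssim 1$. This inverse half-moment bound does not follow from the stated hypothesis $\limsup_{t\to 0^+}t^{-1}\P(\eta_1^2/p\leq t)\lesssim 1$: take $\P(\eta_1=1)=p^{-1/4}$ and put the remaining mass near $\eta_1\approx\sqrt p$. Then $\eta_1\geq 1$ almost surely so the limsup is in fact zero, and $\|\eta_1\|_{L^{4+\delta}}\lesssim\sqrt p$ still holds, yet $\E\big[(\eta_1^2/p)^{-1/2}\big]\asymp p^{1/4}$ diverges. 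Your layer-cake computation silently needs the small-ball bound $\P(\eta_1^2/p\leq t)\lesssim t$ on an interval $(0,t_0]$ with $t_0$ not depending on $n$, which is strictly stronger than a $\limsup_{t\to 0^+}$ statement. The paper instead routes through the Lebesgue differentiation theorem and Anderson's lemma, turning the density bound into a genuine $t\to 0^+$ limit calculation and avoiding any moment of $\eta_1^{-1}$. You should, however, interrogate that route carefully: in the example above the density of $\eta_1 U_{11}$ at the origin is itself of order $p^{1/4}$, so the conclusion appears to require the uniform small-ball strengthening in any case, and once that is granted your direct moment computation is the clean argument.

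For the separable density, your convolution bound $\|f_S\|_{L^\infty}\leq K_0/\max_k|a_k|$ is correct and is the decisive input, but the Petrov-type local limit bound $\|f_S\|_{L^\infty}\leq(2\pi)^{-1/2}+C''\max_k|a_k|$ is not an off-the-shelf result for \emph{unequally} weighted sums $\sum_k a_k\zeta_{1k}$ under only bounded-density and third-moment hypotheses; establishing it would require its own characteristic-function estimate. The paper sidesteps this by citing Theorem~1.2 of Rudelson and Vershynin (2015), which gives $\|f_S\|_{L^\infty}\lesssim K_0$ directly for any unit vector $a$ once each $\zeta_{1k}$ has density bounded by $K_0$, with no moment hypothesis and no interpolation. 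You should either invoke that theorem, making your local-CLT step unnecessary, or supply a complete proof of the density-level Berry--Esseen estimate in the unequal-weight setting.
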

\noindent The proof is provided in Appendix G.\\

\noindent\textbf{Examples of variance decay.} There are a variety of settings where the sorted coordinate-wise variances $\sigma_{(1)}^2\geq \cdots \geq \sigma_{(p)}^2$ naturally exhibit a decay profile.\\

\noindent\emph{Principal components analysis}. In the context of principal components analysis, it is common to assume that the sorted eigenvalues $\lambda_1(\Sigma)\geq\cdots\geq \lambda_p(\Sigma)$ of $\Sigma$ satisfy $\lambda_j(\Sigma)\lesssim j^{-\gamma}$ for some $\gamma>0$, and it can be shown that this implies $\sigma_{(j)}^2\lesssim j^{-2\beta}$ for some other decay parameter $\beta>0$~\citep[][Proposition 2.1]{lopes2020bootstrapping}.\\

\noindent\emph{Mean-variance proportionality}.  Another scenario where variance decay arises is when the coordinate-wise means and variances are connected by a proportionality relationship of the form $\sigma_j^2\propto |\mu_j|^{\gamma}$, for some fixed exponent $\gamma>0$. This occurs within many sub-families of distributions, including Gamma, Weibull, inverse Gaussian, and Pareto. In applications that involve sparse  modelling of high-dimensional mean vectors, a classical assumption is that the sorted coordinate-wise means have a decay profile~\citep{Johnstone}, and thus, when such a proportionality relationship  holds, it follows that variance decay must also occur.\\

\noindent\emph{Functional data analysis}. One more set of examples is related to functional data analysis, where function-valued observations $\Psi_1,\dots,\Psi_n$ in a Hilbert space are often studied through their projections under a finite number of orthonormal basis functions $\{\phi_j\}_{1\leq j\leq p}$. That is, the $i$th projected observation has the form $X_i=(\langle \Psi_i,\phi_1\rangle,\dots,\langle\Psi_i,\phi_p\rangle)\in\R^p$. In connection with our work, the important point is that under standard assumptions in functional data analysis, it can be shown that the sorted coordinate-wise variances of $X_i$ have a decay profile~\citep{lopes2020bootstrapping}. In fact, this occurs even when the random functions $\Psi_1,\dots,\Psi_n$ have rough sample paths, which we illustrate empirically in Figure~\ref{fig:simu apple sd}.\\

\noindent\textbf{Examples of correlation matrices.} To interpret the condition (\ref{A:cor}), it should be noted that the inequality $\|R(l_n)\|_F^2 \leq l_n^2$ always holds, since $\|A\|_F^2\leq \textup{tr}(A)^2$ holds for any positive semidefinite matrix $A$. So, in this sense, condition (\ref{A:cor}) is quite mild, as $C$ may be taken to be arbitrarily large. Moreover, the correlation structure of the variables indexed by $\{1,\dots,p\}\setminus J(l_n)$ is \emph{completely unrestricted}. With regard to the constant 24 appearing in the definition of $l_n$, it has no special importance, and is used for theoretical convenience.  Below, we describe several classes of $p\times p$ correlation matrices $R=R(p)$ for which the sub-matrix $R(l_n)$ satisfies condition (\ref{A:cor}). \\

\noindent \emph{Decaying correlation functions}. Let $\rho: [0,\infty) \rightarrow [0,1]$ be any continuous convex function satisfying $\rho(0)=1$, and $\rho(t) \leq c t^{-\gamma}$ for some fixed constants $c>0, \gamma>0$, and all $t \geq 0$. By P\'{o}lya's criterion \citep{polya1949remarks}, a matrix whose $ij$ entry is defined by $\rho(|i-j|)$ is a correlation matrix that satisfies condition (\ref{A:cor}). Correlation matrices of this type include many well-known examples, such as those of the autoregressive and banded types, e.g.,~ $R_{ij}=r^{|i-j|}$ for some fixed $r\in (0,1)$, and $R_{ij}=\max \big\{0, 1-\frac{|i-j|}{b}\big\}$ for some fixed $b>0$.\\

\noindent\emph{Diverging operator norm}. If the operator norm of $R$ satisfies $\|R\|_{\textup{op}}\leq C l_n^{\frac{1}{2}-\frac{1}{2C}}$, then Assumption~\ref{A:model}(\ref{A:cor}) holds. This can be seen by noting that $\|R(l_n)\|_F^2\leq l_n\|R(l_n)\|_{\textup{op}}^2\leq l_n\|R\|_{\textup{op}}^2$. In particular, since $l_n$ increases with $n$ and $p$,  this shows that condition (\ref{A:cor}) can hold even when the operator norm of $R$ diverges asymptotically.\\

\noindent \emph{Block structure}. Suppose that $R$ is formed by concatenating $k$ correlation matrices along its diagonal, with sizes $\nu_1\times \nu_1,\dots,\nu_k\times \nu_k$, so that $\nu_1+\cdots+\nu_k=p$. If the condition  $\max\{\nu_1,\dots,\nu_k\}\leq C l_n^{1-\frac{1}{C}}$ holds, then so does condition (\ref{A:cor}). This follows from the observation that no row of $R$ can have a squared $\ell_2$ norm larger than $\max\{\nu_1,\dots,\nu_k\}$, and so $\|R(l_n)\|_F^2\leq l_n \max\{\nu_1,\dots,\nu_k\}$.\\

\noindent \emph{Convex combinations and permutations}. If $R$ and $R'$ denote any correlation matrices corresponding to the previous examples, then for any $t\in[0,1]$, the correlation matrix $t R+(1-t)R'$ satisfies condition (\ref{A:cor}). Furthermore, if $\Pi$ is a $p\times p$ permutation matrix, then $\Pi R \Pi\ttop$ is also a correlation matrix that satisfies condition (\ref{A:cor}). These operations considerably extend the examples that have decaying correlation functions or block structure.\\

The following theorem is our main result.

\begin{theorem}\label{thm:main}
   Fix any constant $\tau\in[0,1)$ with respect to $n$, and suppose that Assumption~\ref{A:model} holds with the values of $\delta\geq\epsilon>0$ stated there. In addition, suppose that the hold-out set consists of $m_n\asymp n$ observations that are partitioned into $b_n\asymp \log(n)$ blocks. Then, there is a constant $c> 0$ not depending on $n$ such that the event
\begin{equation*}
    \sup_{s\in\R}\Big| \P(\mathcal{M}_n\leq s) - \P(\mathcal{M}_n^*\leq s|X)\Big| \ \leq \ c\ \!n^{-\frac{1}{2}+\epsilon}
\end{equation*}
occurs with probability at least $1-c\ \! n^{-\delta/4}$.
\end{theorem}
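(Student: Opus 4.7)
I would introduce a population Gaussian proxy $W\in\R^p$ with covariance $\Sigma_\varphi$ defined by $(\Sigma_\varphi)_{jk}=\cov\!\left(\varphi_{t_j}(X_{1j}-\mu_j)/\sigma_j^\tau,\ \varphi_{t_k}(X_{1k}-\mu_k)/\sigma_k^\tau\right)$ with $t_j=\sqrt n \sigma_j$. Because the multipliers $\xi_i$ are Gaussian, the conditional law of $\mathcal{M}_n^*$ given $X$ is \emph{exactly} that of $M(W^*):=\max_j W^*_j$, where $W^*$ is centered Gaussian with covariance equal to the empirical matrix $\hat\Sigma^*$ of the variables $\varphi_{\hat t_j}(X_{ij}-\tilde X_j)/\hat\sigma_j^\tau$ (using the sample-mean centering built into \eqref{eqn:bootdef}). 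The triangle inequality then gives
\[
\sup_s\bigl|\Pbb(\mathcal{M}_n\leq s)-\Pbb(\mathcal{M}_n^*\leq s|X)\bigr| \;\leq\; A_n + B_n,
\]
where $A_n:=\sup_s|\Pbb(\mathcal{M}_n\leq s)-\Pbb(M(W)\leq s)|$ is a Gaussian approximation step and $B_n:=\sup_s|\Pbb(M(W)\leq s)-\Pbb(M(W^*)\leq s|X)|$ is a Gaussian-to-Gaussian comparison, both to be bounded by $n^{-1/2+\epsilon}$ on a single event of probability at least $1-cn^{-\delta/4}$.

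\textbf{Controlling $A_n$.} A standard median-of-means analysis, using the coordinate-wise $(4+\delta)$-moment from Assumption~\ref{A:model}(\ref{A:moments}) and $b_n\asymp\log n$ blocks in the hold-out set, yields $\max_j|\hat\sigma_j^2/\sigma_j^2-1|\vee\max_j|\tilde X_j-\mu_j|/\sigma_j=O(n^{-1/2+o(1)})$ on an event of the required probability. Combining this with the $1$-Lipschitz property and monotonicity of $\varphi$, together with the truncation-bias estimate $|\E\varphi_{t_j}(X_{1j}-\mu_j)|\lesssim \|X_{1j}-\mu_j\|_{L^{4+\delta}}^{4+\delta}/t_j^{3+\delta}\lesssim \sigma_j n^{-(3+\delta)/2}$ obtained from Markov with Assumption~\ref{A:model}(\ref{A:moments}), I would reduce $A_n$ to the Gaussian approximation for the idealized statistic $\tilde{\mathcal{M}}_n$ using true $\sigma_j$ and $t_j$. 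Passage between the two versions is handled in Kolmogorov distance via a Nazarov-type anti-concentration bound for $M(W)$, which requires the coordinate-wise density control in Assumption~\ref{A:model}(\ref{A:conti}). The Gaussian approximation itself would follow the variance-decay-based high-dimensional CLT machinery of \cite{lopes2020bootstrapping,Lopes:2022:AOS,CCK:2023,Koike:2024}: since the summand $\varphi_{t_j}(X_{1j}-\mu_j)/\sigma_j^\tau$ is bounded by $\sqrt n\sigma_j^{1-\tau}$ with variance $\asymp\sigma_j^{2(1-\tau)}$, partial standardization $\tau<1$ combined with Assumption~\ref{A:model}(\ref{A:var}) localizes the argmax onto the top-variance index set $J(l_n)$, reducing the effective dimension from $p$ to $l_n$. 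Assumption~\ref{A:model}(\ref{A:cor}) then controls the correlation structure on the relevant block and delivers the near-parametric rate $n^{-1/2+\epsilon}$ with no explicit $p$-dependence.

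\textbf{Controlling $B_n$ and the main obstacle.} For $B_n$, the Chernozhukov--Chetverikov--Kato Gaussian comparison inequality bounds the Kolmogorov distance between $M(W)$ and $M(W^*)$ by $\|\hat\Sigma^*-\Sigma_\varphi\|_{\max}$ multiplied by an anti-concentration factor for $M(W)$. On the good MOM event, $\hat\Sigma^*-\Sigma_\varphi$ decomposes into a plug-in perturbation piece (changing $\hat t_j,\hat\sigma_j,\tilde X_j$ to $t_j,\sigma_j,\mu_j$, bounded using MOM rates and the Lipschitz property of $\varphi$) and a genuine sample-covariance fluctuation piece. The latter is handled by a moment-based uniform bound for bounded, $L^{4+\delta}$-integrable products, entrywise on the block $J(l_n)\times J(l_n)$ and via the correlation structure elsewhere; both pieces are $O(n^{-1/2+o(1)})$, and after multiplying by the anti-concentration factor one obtains $B_n\lesssim n^{-1/2+\epsilon}$. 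The main obstacle lies inside the $A_n$ step: achieving the near-parametric Kolmogorov rate for a max of bounded-but-heavy truncated empirical sums without paying a factor growing with $p$ requires simultaneously exploiting the truncation scale $t_j=\sqrt n\sigma_j$, the partial standardization $\tau<1$, the polynomial variance decay, the density bound feeding anti-concentration, and the correlation control on $R(l_n)$, so that every error term produced by the CLT machinery, the anti-concentration passes, and the MOM plug-in is simultaneously of order $n^{-1/2+o(1)}$.
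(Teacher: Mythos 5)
Your overall Gaussian-approximation-plus-comparison strategy is in the right spirit, but as written the $B_n$ step would not give the claimed near-$n^{-1/2}$ rate, and this is where the paper's argument diverges essentially from yours.

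The Chernozhukov--Chetverikov--Kato Gaussian comparison inequality for maxima bounds $\sup_s|\Pbb(M(W)\le s)-\Pbb(M(W^*)\le s)|$ by a quantity of the form $\Delta^{1/3}(\log p)^{2/3}$ with $\Delta=\|\hat\Sigma^*-\Sigma_\varphi\|_{\max}$, not by $\Delta$ times an anti-concentration factor. Even granting the optimistic bound $\Delta\lesssim n^{-1/2+o(1)}$, this gives roughly $n^{-1/6}$ for $B_n$, which is far from $n^{-1/2+\epsilon}$. The paper avoids this bottleneck by never running a max-norm Gaussian comparison at full dimension. Its middle term in the three-way decomposition (eq.~\eqref{equ: piece2}) compares the Gaussian proxy $\tilde M_{k_n}(X)$ and the untruncated multiplier-bootstrap maximum $M_{k_n}^*(X)$ \emph{only over the reduced index set $J(k_n)$} with $k_n=l_n^5\wedge p$ a slowly growing polynomial in $n$. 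After whitening to a (low-rank) isotropic representation, Proposition~\ref{prop: gaus-compa} applies a Frobenius-norm Gaussian comparison (Lemma~\ref{lem:GaussianFrobenius}, a Spokoiny-type bound $\dK\le 2\|W_r-I_r\|_F$) together with entrywise Fuk--Nagaev tail control, giving directly a rate of the form $k_n\,n^{-1/2+\epsilon/2}\lesssim n^{-1/2+\epsilon}$. This only works because the dimension has already been cut down; the same Frobenius bound at dimension $p$ would be useless.

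Consequently the decomposition in your proposal needs to be restructured: the localization from $p$ to $k_n$ must happen not only inside $A_n$ (as you note) but also on the bootstrap side, \emph{before} any covariance-level Gaussian comparison. The paper does this with a three-term split: (i) Gaussian approximation from $M_p(\hat Y)$ down to the \emph{untruncated, reduced} Gaussian $\tilde M_{k_n}(X)$ (Proposition~\ref{prop: piece2}, which internally passes $p\to k_n$, truncated $\to$ untruncated, and empirical $\to$ Gaussian, each with its own lemma); (ii) the Frobenius-norm Gaussian/multiplier-bootstrap comparison at dimension $k_n$ (Proposition~\ref{prop: gaus-compa}); and (iii) a bootstrap-side expansion from $M_{k_n}^*(X)$ back out to $M_p^*(\hat Z)$ (Proposition~\ref{prop: XZ kn-p}), which mirrors the localization argument of step (i) for the conditional laws and absorbs the plug-in of $\hat t_j,\hat\sigma_j,\tilde X_j$ there. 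Relatedly, the paper's Gaussian proxy uses the \emph{untruncated} covariance of $X_1$ rather than your $\Sigma_\varphi$, which lets it go through a clean Bentkus Berry--Esseen argument for the low-dimensional isotropic piece and handle the truncation bias separately via coupling (Lemma~\ref{lem:couple_MkY_MkX}) rather than baking it into the Gaussian target. Your choice of $\Sigma_\varphi$ is not fatal but creates extra bookkeeping in both the anti-concentration minimum-variance bound and the CLT step. The $A_n$ discussion in your proposal gives a reasonable sketch of what is needed; the unresolved gap is the Gaussian comparison tool and the missing localization on the bootstrap side.
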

\noindent\textbf{Remarks.} In addition to the fact that the rate of approximation is near $n^{-1/2}$, it should be noted that the rate does not depend on the dimension $p$ or on the size of the variance decay parameter $\beta$. A key step in the proof is to ``localize'' the random maximizing index, say $\hat\j\in\{1,\dots,p\}$, that satisfies $\frac{1}{\hat\sigma_{\hat \j}^{\tau}n^{1/2}}\sum_{i=1}^n (\varphi_{\hat t_{\hat \j}}(X_{i\hat \j})-\mu_{\hat \j})=\mathcal{M}_n$. This involves showing that there is a non-random set $A\subset \{1,\dots,p\}$ with cardinality $|A|\ll p$ such that $\hat \j$ falls into $A$ with high probability. Consequently, the max statistic $\mathcal{M}_n$ can be analyzed as if the data reside in the low-dimensional space $\R^{|A|}$. In the proofs of Proposition \ref{prop: piece2} and Lemma \ref{Lemma: tilde MX}, we implement this strategy using a key technical tool (Lemma~\ref{lem:lowertail}), which  is a lower-tail bound for the maximum of correlated Gaussian variables, established in~\citep{lopes2022sharp}.

The proof also analyzes several effects that arise from heavy-tailed covariates and the structure of the robust max statistic $\mathcal{M}_n$. A particularly important example of such an effect is the bias that is introduced by the truncation functions $\varphi_{\hat t_j}$, and this is addressed in the proofs of Propositions \ref{Prop: practical kn} and \ref{prop: XZ kn-p}, as well as Lemmas ~\ref{Lemma: hat Y and Y Mkn}, ~\ref{Lemma: Y hatY diff}, and \ref{lemma: XZ kn-kn}. Furthermore, our work accounts for the fluctuations of the robust \textsc{mom} estimates $\tilde X_j$ and $\hat\sigma_j$ that are used to partially standardize $\mathcal{M}_n^*$, and this is done in Lemmas \ref{Lemma: MOM mean}-\ref{Lemma: MOM var anti-con}.

\section{Numerical experiments}
\label{sec: simu}
This section addresses the practical performance of the proposed method in the contexts of Euclidean and functional data with heavy tails. In addition, we include the performance comparisons with the method for robust inference proposed in~\citep{Fan:2023}.

\subsection{Euclidean data}
\label{sec: euclidean simu}
Here, we consider the task of constructing simultaneous confidence intervals for the entries of the mean vector $(\mu_1,\dots,\mu_p)=\E(X_1)$, based on i.i.d.~observations $X_1,\dots,X_{n+m_n}\in\R^p$.\\

\noindent\textbf{Design of experiments.} For each pair $(n+m_n,p)$ in the set $\{500\}\times \{100,500,1000\}$, and each of the data-generating distributions described below, we performed 500 Monte Carlo trials. 
Due to the fact that the proposed method and the method in~\citep{Fan:2023} are shift invariant, the mean vector $(\mu_1,\dots,\mu_p)$ was always chosen to be the zero vector without loss of generality.
 In all trials, the proposed confidence intervals $\hat{\mathcal{I}}_1,\dots,\hat{\mathcal{I}}_p$ were constructed according to~\eqref{equ:Ij}, using 500 bootstrap samples, and using choices of 90\% and 95\% for the nominal simultaneous coverage probability $1-\alpha$. Also, the proposed method was always applied with truncation parameters set to $\hat t_j=\sqrt n\hat\sigma_j$ for all $j=1,\dots,p$, and with the partial standardization parameter set to $\tau=0.9$. Lastly, in all trials, the number of hold-out observations was set to $m_n=50$, and the block length for the \textsc{mom} estimates was set to $\ell_n=10$.\\

\noindent\textbf{Data-generating distributions.} The data were generated in four ways, based on an elliptical distribution and a separable distribution with two choices of covariance matrices.\\

\noindent\emph{Elliptical distribution.} The elliptical observations have the form $X_1=\mu+\eta_1\Sigma^{1/2}U_1$, where $U_1\in\R^p$ is uniformly distributed on the unit sphere, and $\eta_1\geq 0$ is a random variable that is independent of $U_1$ such that $3\eta_1^2/(2p)$ follows an $F$ distribution with $p$ and 6 degrees of freedom. This distribution for $X_1$ is more commonly known as a \emph{multivariate $t$-distribution on 6 degrees of freedom}~\citep{muirhead2009aspects}.\\

\noindent\emph{Separable distribution.} The separable observations have the form $X_1=\mu+\Sigma^{1/2}\zeta_1$, where $\zeta_1=(\zeta_{11},\dots,\zeta_{1p})$ has i.i.d.~entries that are standardized Pareto random variables. Specifically, $\zeta_{11}=(\omega_{11}-\E(\omega_{11}))/\sqrt{\var(\omega_{11})}$, where $\omega_{11}$ is drawn from a Pareto distribution whose density is given by $x\mapsto 6x^{-7} 1\{ x \geq 1\}$.\\

\noindent\emph{Covariance matrices.} For both the elliptical and separable distributions, we constructed the covariance matrix of $X_1$ in the form $\Sigma=D^{1/2}RD^{1/2}$, where $D=\textup{diag}(\var(X_{11}),\dots,\var(X_{1p}))$, and $R$ is the correlation matrix of $X_1$. The correlation matrix was chosen to be one of the following two types
\begin{equation*}
\small
    R_{ij}=\begin{cases}0.5^{|i-j|} \ \ \ \ \  \ \ \ \ \ \ \ \ \ \   \ \ \ \ \ \ \ \ \ \ \  \ \ \text{(autoregressive)}\\
   \ts1\{i=j\}+\frac{1\{i\neq j\}}{4(i-j)^2} \ \  \ \ \ \ \ \ \ \ \ \ \ \ \text{(algebraic decay).}
   \end{cases}
\end{equation*}
In all cases, the matrix $D$ was chosen to satisfy $D_{jj}^{1/2}=j^{-1/2}$ for all $j=1,\dots,p$, ensuring that the entries of $X_1$ have variance decay.\\

\noindent\textbf{Discussion of results.} In Table~\ref{T:simu}, we report empirical simultaneous coverage probabilities and width measures, for both the proposed method (denoted PM) and the method based on the Hodges-Lehmann estimator (denoted HL) developed in~\citep{Fan:2023}. The simultaneous coverage probabilities were computed as the fraction of the 500 Monte Carlo trials in which all $p$ intervals of a given method contained the corresponding parameters $\mu_1,\dots,\mu_p$. The width measure was computed as the median width of the $p$ intervals, averaged over the 500 trials, and it is shown in parentheses below the simultaneous coverage probabilities. 

Across all of the settings, PM delivers simultaneous coverage probabilities that are within about 1 or 2 percent of the nominal level, demonstrating that it is reliably calibrated. On the other hand, HL is only well calibrated in the cases of elliptical models, where the covariates have distributions that are symmetric around 0. In the cases of separable models where the covariates have distributions that are \emph{not} symmetric around 0, the simultaneous coverage probabilities of HL are far from the nominal level. This can be explained by the fact that HL is designed for simultaneous inference on the coordinate-wise pseudomedians---which may differ from the coordinate-wise means if the covariates have asymmetric distributions.
By contrast, PM has a scope of application for inference on coordinate-wise means that is not limited by asymmetric distributions.

With regard to the width measure, Table~\ref{T:simu} shows that the intervals produced by PM are much tighter than those produced by HL, sometimes by a factor of 4 or more. This is to be expected, because the HL intervals have equal widths across all $p$ coordinates, whereas PM adapts to the scale of each covariate and takes advantage of variance decay. This difference between the methods is also reflected in another pattern---which is that the width measure for PM decreases as $p$ increases, whereas the width measure for HL stays essentially constant as $p$ increases, since HL uses unstandardized covariates.

\begin{table}
\centering
\caption{Comparison of simultaneous coverage probability and confidence interval width: PM refers to the proposed method, and HL refers to the method based on the Hodges-Lehmann estimator proposed in~\citep{Fan:2023}.} 
\label{T:simu}
\begin{tabular}{c | c| c |c c |cc|cc}
\hline
\hline
\multirow{2}{*}{$R$} & \multirow{2}{*}{\text{Distribution}} & \multirow{2}{*}{$\alpha$} & \multicolumn{2}{|c|}{\( p = 100 \)} & \multicolumn{2}{c|}{\( p = 500 \)} & \multicolumn{2}{c}{\( p = 1000 \)}\\ \cline{4-9} 
&  &  & PM & HL & PM & HL & PM & HL \\ \hline
\multirow{8}{*}{\shortstack{auto-\\regressive}} & \multirow{4}{*}{elliptical}
& \multirow{2}{*}{0.05}  & 0.942 & 0.942 & 0.956 & 0.938 & 0.962 & 0.962\\
& &  &  (0.057) & (0.165) & (0.031) & (0.165) & (0.023) & (0.166) \\
& & \multirow{2}{*}{0.1} & 0.914 & 0.892 & 0.918 & 0.900 & 0.908 & 0.916\\
& &  & (0.053) & (0.142) & (0.029) & (0.142) & (0.022) & (0.142) \\ \cline{2-9}
 & \multirow{4}{*}{separable}
 & \multirow{2}{*}{0.05}  & 0.956 & 0.006 & 0.946 & 0.002 & 0.954 & 0\\
& &  &  (0.065) & (0.143) & (0.035) & (0.144) & (0.027) & (0.143)\\
& & \multirow{2}{*}{0.1} & 0.922  & 0 & 0.904 & 0.002 & 0.910 & 0 \\
& &  &  (0.059)  & (0.123) & (0.033) & (0.124) & (0.025) & (0.123)\\ \hline
\multirow{8}{*}{\shortstack{algebraic\\decay}} & \multirow{4}{*}{elliptical}
& \multirow{2}{*}{0.05}  & 0.946 & 0.938 & 0.942 & 0.942 & 0.958 & 0.962\\
& &  &  (0.058) & (0.167) & (0.031) & (0.166) & (0.023) & (0.166) \\
& & \multirow{2}{*}{0.1} & 0.914 & 0.890 & 0.910 & 0.898 & 0.916 & 0.904\\
& &  & (0.053) & (0.144) & (0.029) & (0.143) & (0.022) & (0.144) \\ \cline{2-9}
 & \multirow{4}{*}{separable}
 & \multirow{2}{*}{0.05}  & 0.968 & 0 & 0.954 & 0 & 0.958 & 0\\
& &  & (0.067) & (0.141) & (0.037) & (0.142) & (0.028) & (0.141)\\
& & \multirow{2}{*}{0.1} &  0.922 & 0 & 0.910 & 0 & 0.912 & 0\\
& &  & (0.062) & (0.122) & (0.034) & (0.123) & (0.026) & (0.122) \\
\hline \hline
\end{tabular}
\end{table}

\subsection{Functional data}
\label{sec: simu bg}

 Beyond heavy-tailed Euclidean data, our approach to inference with robust max statistics can be applied to heavy-tailed functional data. Below, we study the problem of detecting a non-zero drift in functional observations that arise from geometric Brownian motion (GBM). This is a heavy-tailed stochastic process in the sense that its marginals are \smash{lognormal,} which is a standard example of heavy-tailed univariate distribution~\citep{Foss,nair}. The sample paths of GBM present additional challenges from the standpoint of functional data analysis, because they are rough. Furthermore, GBM is of broad interest in financial applications, where it is widely used for modelling securities prices~\citep{Oksendal}.\\

\noindent \textbf{Problem formulation.} A sample path of GBM on the unit interval has the form $t\mapsto \exp\big( (\mu_0-\varsigma_0^2/2)t+\varsigma_0 W(t)\big)$ for $t\in[0,1]$,
where $W(t)$ is a standard Brownian motion, $\mu_0\in\R$ is the drift parameter, and $\varsigma_0^2\geq 0$ is the volatility parameter. To formalize the detection of non-zero drift in a way that allows us to incorporate natural alternative hypotheses, we will allow for more general sample paths $S(t)$ of the form
\begin{align}\label{eqn:GBM}
\small
S(t)=\exp\big( (h \mu(t)-\varsigma_0^2/2)t+\varsigma_0 W(t)\big),
\end{align}
where $\mu(t)$ is a fixed real-valued function on $[0,1]$ such that $ S(t)$ resides in $L^2[0,1]$ almost surely, and $h\geq 0$ is a fixed parameter that measures the ``distance'' from the null hypothesis of zero drift that occurs when $h=0$. The parameters $\mu(t)$ and $\varsigma_0$ are treated as unknown. Under these conditions, we are interested in using a dataset $S_1(t),\dots,S_{n+m_n}(t)$ of i.i.d.~samples  of $S(t)$ to address the hypothesis testing problem
\begin{equation}
\small
    \mathsf{H}_0: h=0 \text{ \ \ \ vs. \ \ \  } \mathsf{H}_1: h>0.
\end{equation}
In particular, different choices of the function $\mu(t)$ correspond to different alternatives, and later on, we will present numerical results for several choices.\\

\noindent \textbf{Testing procedure.} The sample path formula~\eqref{eqn:GBM} implies that $\E(S(t))=\exp(h\mu(t)t)$ for all $t\in[0,1]$. For this reason, our procedure will seek to detect whether or not the function $\E(S(t))-1$ is identically 0. This will be done by expanding \smash{$\E(S(t))-1$} in the form $\sum_{j=1}^{\infty} \beta_j \phi_j(t)$, where $\{\phi_j(t)\}_{j\geq 1}$ is the Fourier cosine basis for $L^2[0,1]$. 
To proceed, note that $\E(S(t))-1$ is equal to the zero function in the $L^2[0,1]$ sense if and only if $\beta_j=0$ for all $j\geq 1$. This motivates a procedure based on testing the simultaneous hypotheses
\begin{equation}
\small
    \mathsf{H}_{0,j}:\beta_j=0 \text{ \ \ \ for \ \ \ } j=1,\dots,p,
\end{equation}
where $p$ is an integer large enough so that the coefficients $\beta_{p+1},\beta_{p+2},\dots,$ are negligible for practical purposes. In particular,  $\mathsf{H}_0$ implies that $\mathsf{H}_{0,1},\dots,\mathsf{H}_{0,p}$ hold simultaneously, and so a procedure that controls the simultaneous type I error rate for these hypotheses leads to one that controls the ordinary type I error rate for $\mathsf{H}_0$. 

If we define $X_i\in\R^{p}$ to contain the first $p$ coefficients of $S_i(t)-1$ with respect to $\{\phi_j(t)\}_{j\geq 1}$, then it follows that $\E(X_i)=(\beta_1,\dots,\beta_p)$ for every $i=1,\dots,n+m_n$. Hence, we may apply our proposed method from Section~\ref{sec:method} to $X_1,\dots,X_{n+m_n}$ in order to construct confidence intervals $\hat{\mathcal{I}}_1,\dots,\hat{\mathcal{I}}_p$ for $\beta_1,\dots,\beta_p$ with a nominal simultaneous coverage probability of $1-\alpha$. Altogether, this means that if we reject $\mathsf{H}_0$ when any of these intervals exclude 0, then this rejection rule corresponds to a test with a nominal level of at most $\alpha$.\\

\noindent\textbf{Design of experiments.} To construct four natural choices of the pair $(\mu(t),\varsigma_0)$, we used historical data for the stocks of Apple, Nvidia, Moderna, and JPMorgan, to be described later in the paragraph labeled `preparation of stock data'. With regard to the choice of $\varsigma_0^2$, note that if a stock price is modeled as a realization of $S(t)$, then the pointwise variance of the cumulative log return is $\var((h\mu(t)-\varsigma_0^2/2)t+\varsigma_0 W(t))= \varsigma_0^2 t$, and the time average of this quantity over the unit interval is $\varsigma_0^2/2$. (See~\citep{ruppertstatistics} for additional background.) Accordingly, for each of the four stocks, we selected $\varsigma_0$ such that $\varsigma_0^2/2=\int_0^1 s^2(t)\,\mathrm{d}t$, where $s^2(t)$ is the sample pointwise variance of the cumulative log return of the stock over many disjoint periods of unit length. Next, the selection of $\mu(t)$ was motivated by the fact that if a stock price is modeled with $S(t)$, and if $h=1$, then the pointwise expected cumulative log return is $(\mu(t)-\varsigma_0^2/2)t$. For a given value of $\varsigma_0$, we defined $\mu(t)$ to satisfy $(\mu(t)-\varsigma_0^2/2)t=\bar R(t)$, where $\bar R(t)$ is the sample average of the cumulative log return curves of a given stock over many disjoint periods of unit length.

For each of the four choices of $(\mu(t),\varsigma_0)$, and each value of $h$ in an equispaced grid, the following procedure was repeated in 500 Monte Carlo trials. We generated i.i.d.~realizations $S_1(t),\dots,S_{n+m_n}(t)$ of the sample path defined in~\eqref{eqn:GBM} with $n+m_n=300$ and $m_n=30$.  For each of these sets of functional observations, we constructed simultaneous $(1-\alpha)$-confidence intervals $\hat{\mathcal{I}}_1,\dots,\hat{\mathcal{I}}_p$ for the parameters $\beta_1,\dots,\beta_p$, as described above, with $p=100$ and $\alpha=5\%$. Whenever any of these intervals excluded 0, a rejection was recorded, and the rejection rate among the 500 trials was plotted as a function of $h$ in Figures~\ref{fig:simu apple}-\ref{fig:simu moderna}. The corresponding rejection rate based on the simultaneous confidence intervals proposed in the paper~\citep{Fan:2023}  was also plotted in the same way. In all four figures, the nominal level of $\alpha=5\%$ is marked with a dashed horizontal line.

To illustrate the characteristics of the simulated functional data, ten realizations of $S(t)$ based on $h=0$ with $\varsigma_0$ corresponding to Apple stock are plotted in Figure~\ref{fig:gbm null}. In the same setting, Figure~\ref{fig:simu apple sd} displays estimates of the sorted values $\sigma_{(1)}\geq\cdots\geq \sigma_{(p)}$, where $\sigma_j^2$ is the variance of the $j$th Fourier coefficient of $S_1(t)$. In particular, Figure~\ref{fig:simu apple sd}
shows a clear variance decay profile.\\

\begin{figure}[ht]
  \centering
  \begin{minipage}{.47\textwidth}
    \centering
    \begin{overpic}[width=\textwidth]{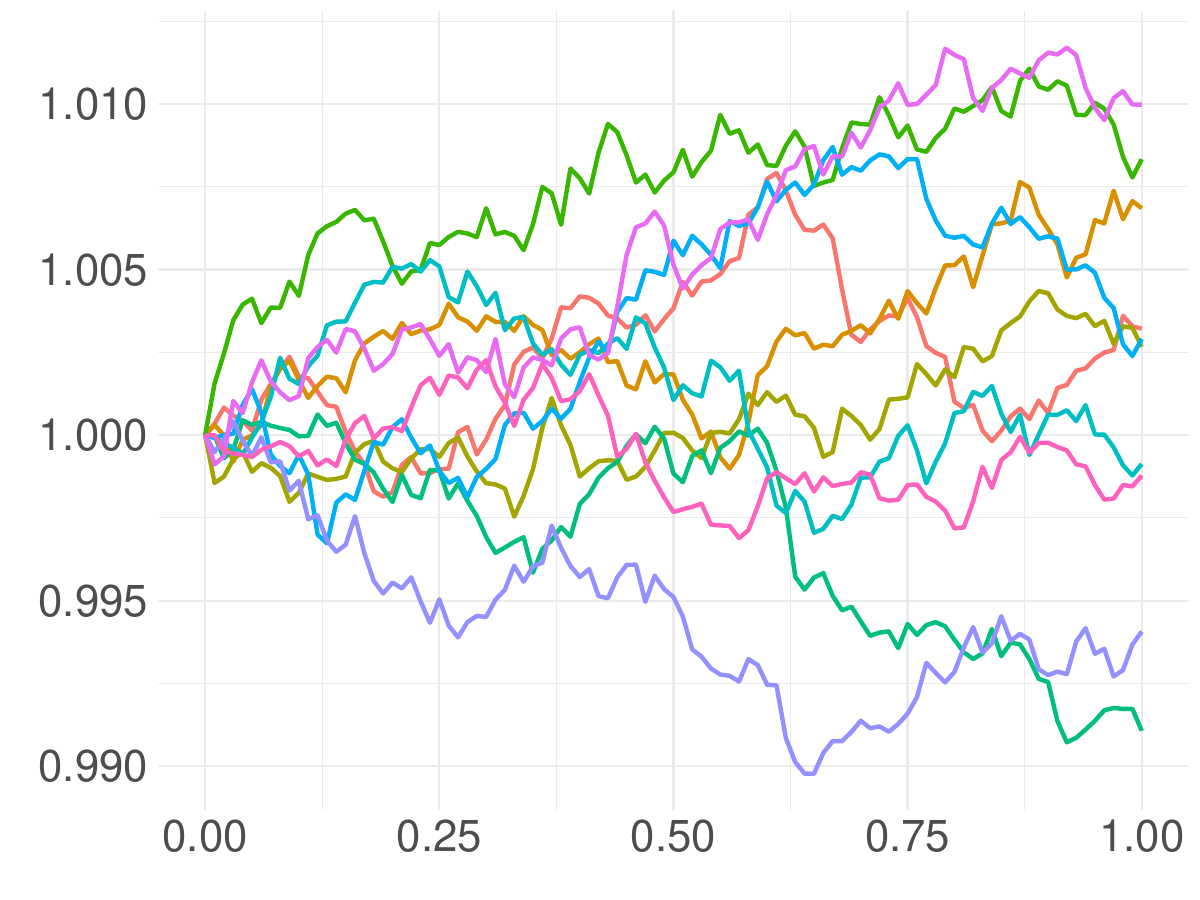}
\put(-5,35){\rotatebox{90}{$S(t)$}}
\put(53,-3){ $t$  }
	\end{overpic}
 \caption{Representative sample paths of $S(t)=\exp\big( (h \mu(t)-\varsigma_0^2/2)t+\varsigma_0 W(t)\big)$ when $h=0$, and $\varsigma_0$ is selected based on historical price data for Apple stock.}
 \label{fig:gbm null}
  \end{minipage}%%
  \hfill
  \begin{minipage}{.47\textwidth}
    \centering
    \begin{overpic}[width=\textwidth]{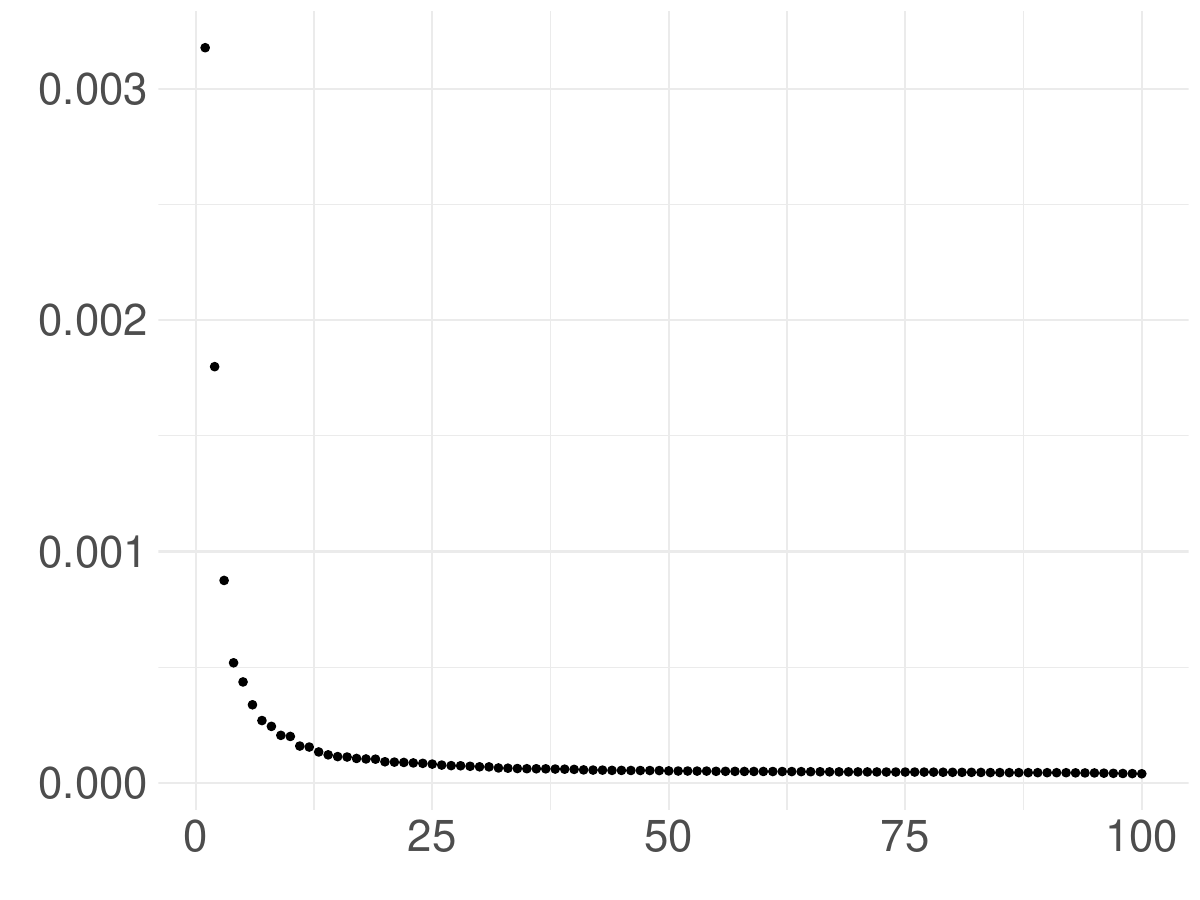}
%\put(-5,35){\rotatebox{90}{estimate of $\sigma_{(j)}$}}
\put(50,-1){ $j$  }
    \end{overpic}
    \caption{Estimates of the ordered values $\sigma_{(1)}\geq\cdots\geq\sigma_{(p)}$, where $\sigma_j^2$ denotes the variance of the $j$th Fourier coefficient of sample paths generated as in Figure~\ref{fig:gbm null}.}
    \label{fig:simu apple sd}
  \end{minipage}
\end{figure}

\noindent \textbf{Preparation of stock data.}
Price data for 4 stocks (Apple, Nvidia, Moderna, and JP Morgan) were collected from the Alpha Vantage database \citep{alphavantage2024} during every minute of trading between March 1, 2024 and March 22, 2024, including pre-market and after-hours trading.
The data were divided into 100-minute intervals (normalized to unit length), and each interval was divided into time points $t_0,t_1,\ldots,t_{100}$ spaced one minute apart.
For a given stock, letting $P(t_j)$ denote its price at time $t_j$, we computed the cumulative log return curve within the interval as $R(t_j) = \sum_{\ell=1}^j \log\big(P(t_\ell)/P(t_{\ell-1})\big) = \log\big(P(t_j)/P(t_0)\big)$, $j=1,\ldots,100$.
In this way, we obtained one discretely observed realization of the function $R(t)$ over each 100-minute interval. To promote independence among these functional observations, we only retained them from \emph{every other} 100-minute interval, ensuring that they are separated by gaps of 100 minutes. We also excluded functional observations that were obtained from intervals with missing data.\\

\begin{figure}[ht]
    \centering
    \begin{subfigure}{0.45\textwidth}
        \centering
        \begin{overpic}[width=\linewidth]{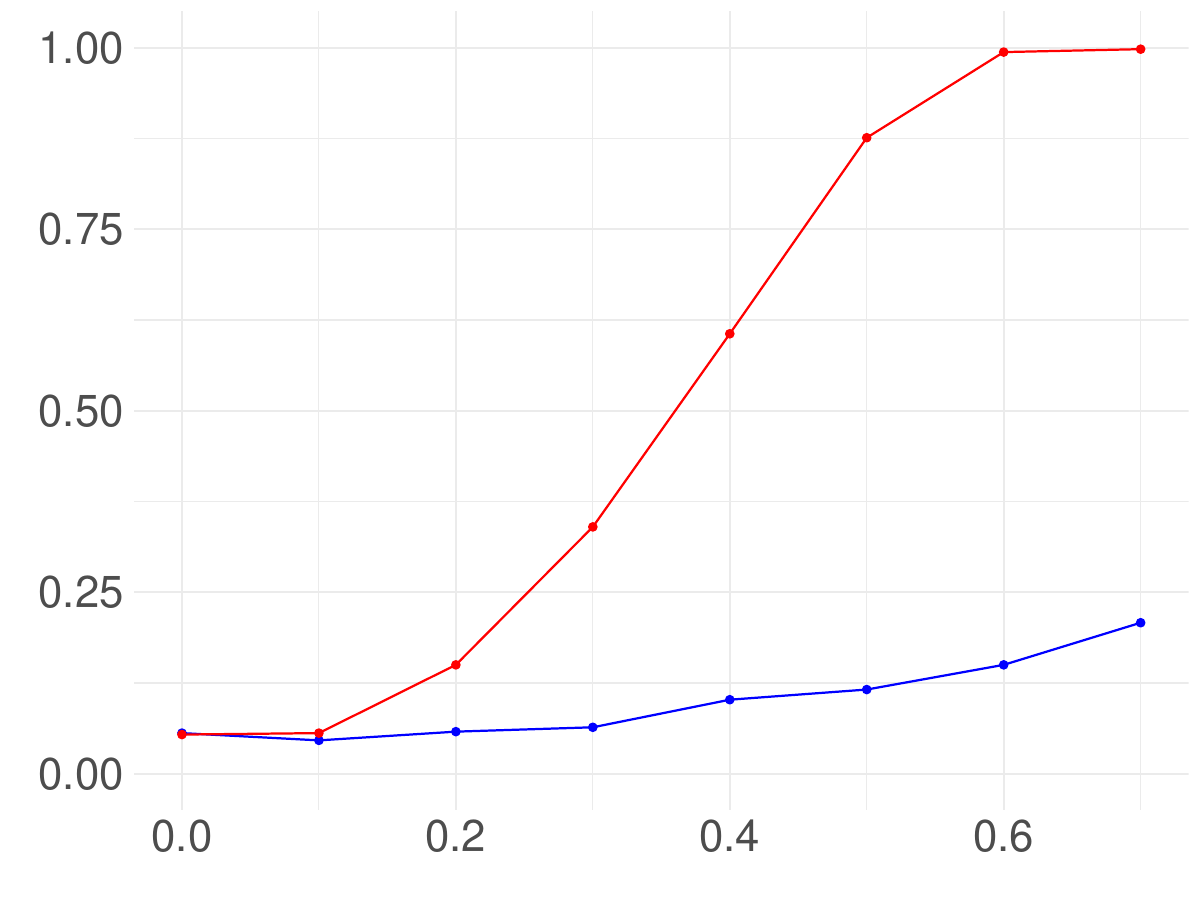}
            \put(-5,30){\rotatebox{90}{rejection rate}}
            \put(50,-1){ $h$ } 
            \put(20, 60){\color{red} \--\!\--\!\--\!\--\!\--\!\-- \small PM}
            \put(20, 55){\color{blue} \--\!\--\!\--\!\--\!\--\!\-- \small HL}
            \put(0, 13.2){\color{black}{\dashline[50]{1}(0,0)(100,0)}}
        \end{overpic}
        \caption{Apple stock}
        \label{fig:simu apple}
    \end{subfigure}%%
    \hfill
    \begin{subfigure}{0.45\textwidth}
        \centering
        \begin{overpic}[width=\linewidth]{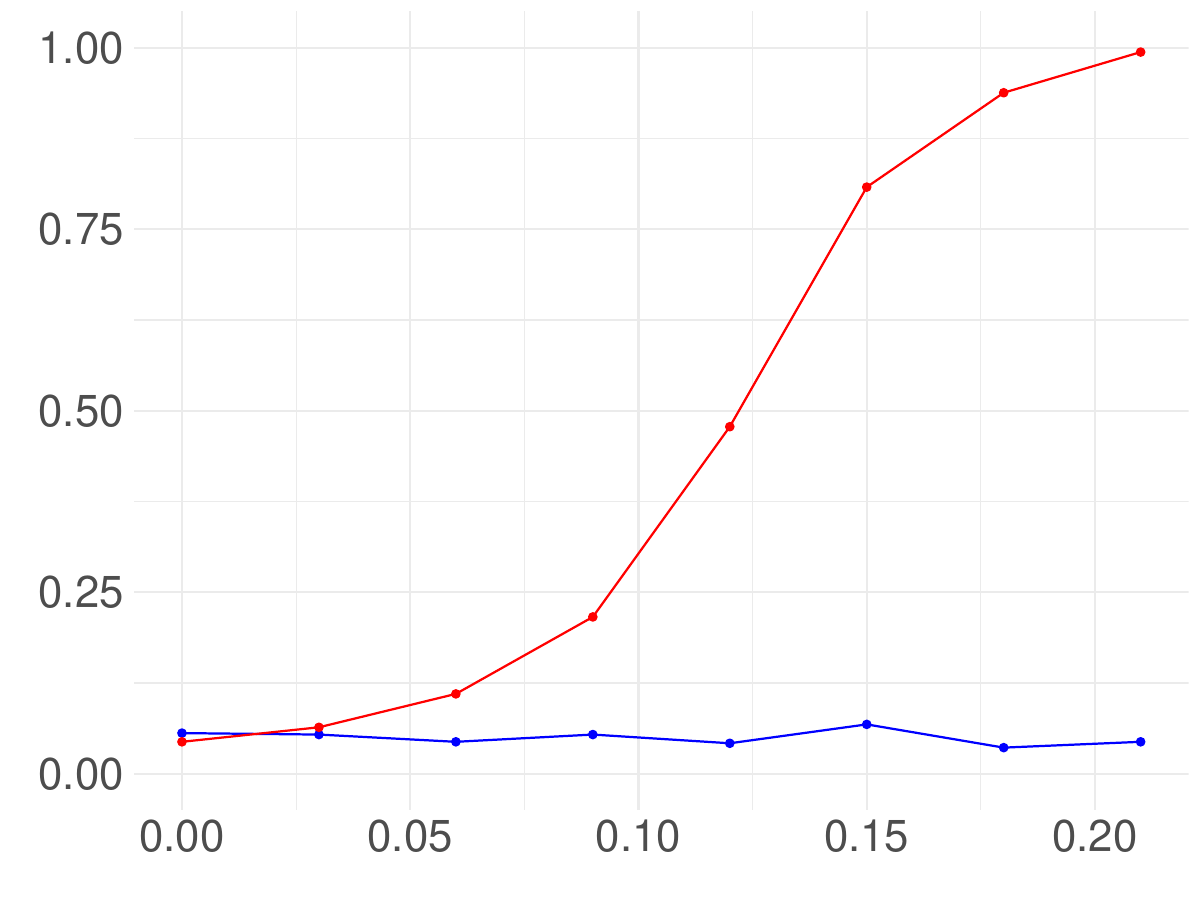}
            \put(-5,30){\rotatebox{90}{rejection rate}}
            \put(50,-1){ $h$ }  
            \put(20, 60){\color{red} \--\!\--\!\--\!\--\!\--\!\-- \small PM}
            \put(20, 55){\color{blue} \--\!\--\!\--\!\--\!\--\!\-- \small HL}
            \put(0, 13.2){\color{black}{\dashline[50]{1}(0,0)(100,0)}}
        \end{overpic}
        \caption{Nvidia stock}
        \label{fig:simu nvidia}
    \end{subfigure}

    \vskip\baselineskip

    \begin{subfigure}{0.45\textwidth}
        \centering
        \begin{overpic}[width=\linewidth]{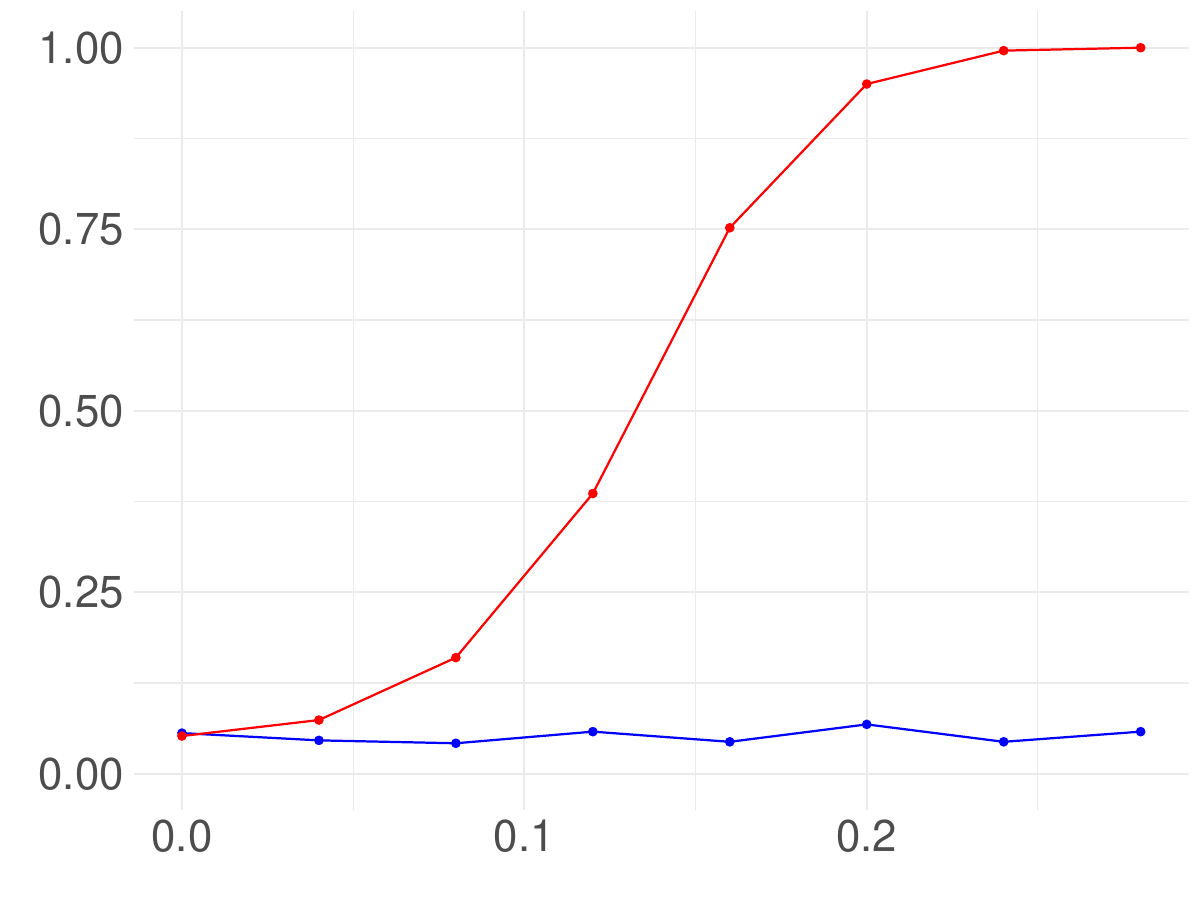}
            \put(-5,30){\rotatebox{90}{rejection rate}}
            \put(50,-1){ $h$ }   
            \put(20, 60){\color{red} \--\!\--\!\--\!\--\!\--\!\-- \small PM}
            \put(20, 55){\color{blue} \--\!\--\!\--\!\--\!\--\!\-- \small HL}
            \put(0, 13.2){\color{black}{\dashline[50]{1}(0,0)(100,0)}}
        \end{overpic}
        \caption{JP Morgan stock}
        \label{fig:simu jpm}
    \end{subfigure}%%
    \hfill
    \begin{subfigure}{0.45\textwidth}
        \centering
        \begin{overpic}[width=\linewidth]{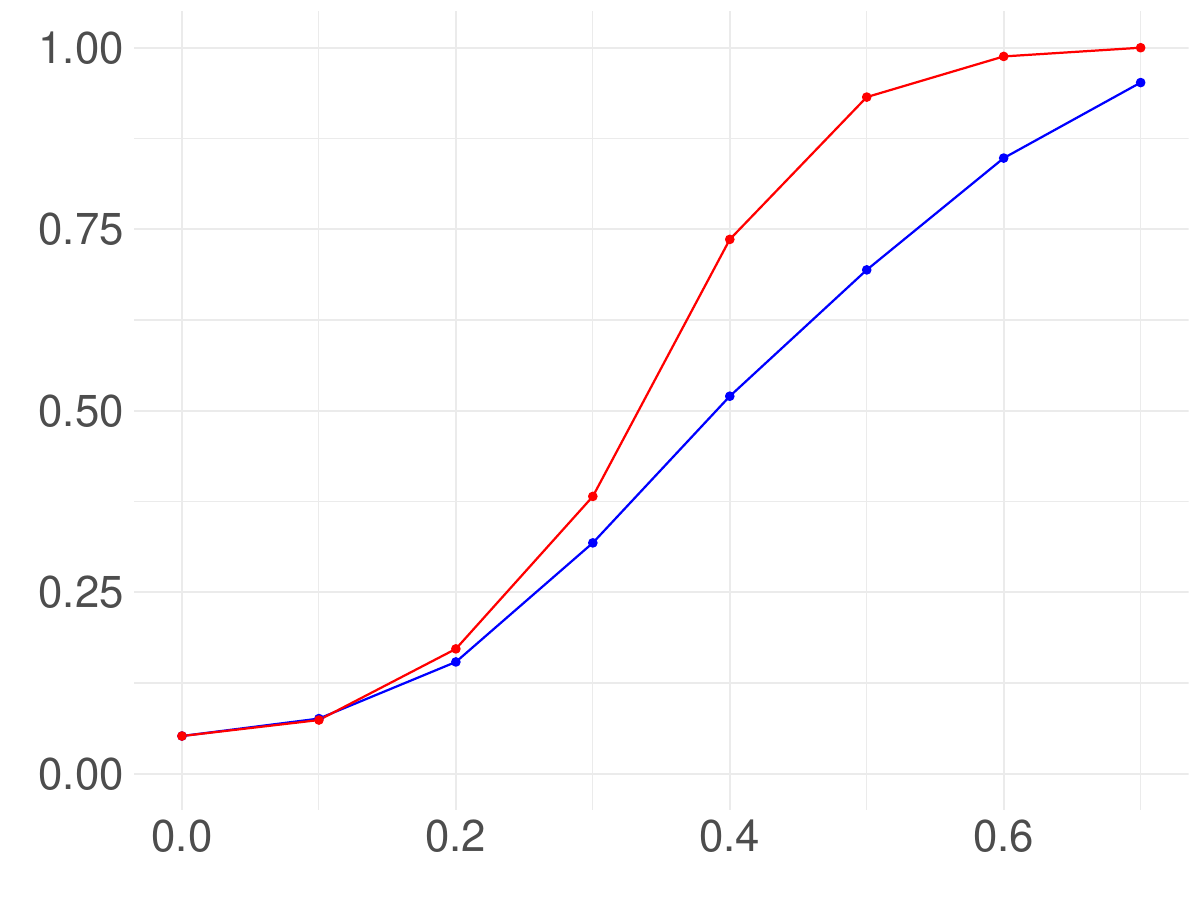}
            \put(-5,30){\rotatebox{90}{rejection rate}}
            \put(50,-1){ $h$ }   
            \put(20, 60){\color{red} \--\!\--\!\--\!\--\!\--\!\-- \small PM}
            \put(20, 55){\color{blue} \--\!\--\!\--\!\--\!\--\!\-- \small HL}
            \put(0, 13.2){\color{black}{\dashline[50]{1}(0,0)(100,0)}}
        \end{overpic}
        \caption{Moderna stock}
        \label{fig:simu moderna}
    \end{subfigure}

    \caption{The panels compare the rejection rates of the methods PM and HL when functional observations are generated in the form~\eqref{eqn:GBM} and the parameters $(\mu(t),\varsigma_0)$ are based on historical stock price data for Apple, Nvidia, JP Morgan, and Moderna.}
    \label{fig:simu_all}
\end{figure}

\noindent\textbf{Discussion of results.} In Figures~\ref{fig:simu apple}-\ref{fig:simu moderna}, the rejection rate curves for the proposed method are labeled by PM, and the corresponding curves based on the intervals proposed in~\citep{Fan:2023} are labeled HL. Recall that $h=0$ under the null hypothesis $\mathsf{H}_0$, and so the value of a curve at $h=0$ represents the empirical level. It is clear that in all four panels of Figure~\ref{fig:simu_all}, the empirical levels of both methods closely match the nominal level of $\alpha=5\%$, marked with a dashed horizontal line.
 
 However, the methods differ markedly in terms of power, which is represented by the values of the curves at $h>0$. In the three settings based on the stock data of Apple, Nvidia, and JP Morgan, the power of PM increases steadily with $h$, whereas the power of HL stays relatively flat. In the setting based on Moderna stock data, the two methods are more competitive, but even here, the power of PM is still noticeably higher for most values of $h$. The power advantage of PM is understandable in light of the characteristics of the simultaneous confidence intervals $\hat{\mathcal{I}}_1,\dots,\hat{\mathcal{I}}_p$ for $\beta_1,\dots,\beta_p$. (Recall that both methods reject the null hypothesis $\mathsf{H}_0: h=0$ whenever any of their associated intervals exclude 0.) As was observed in Section~\ref{sec: euclidean simu}, the intervals produced by PM tend to be tighter than those produced by HL, and tighter intervals make it easier to exclude 0, resulting in higher power.

\section*{Acknowledgement}
We are grateful to Mengxin Yu for generously providing the code for the HL method.

\bibliographystyle{abbrv} 
\bibliography{library}

\newpage

\appendix

\section*{\centering Supplementary Material\\ Robust Max Statistics for High-Dimensional Inference}

Appendix~\ref{sec:prelimsupp} introduces preliminary material not covered in the main text. Appendix~\ref{supp:proofthm1} outlines the proof of Theorem~\ref{thm:main}, and the main supporting arguments are given in Appendices~\ref{sec:Gaussian}-\ref{sec:boot}. Appendix~\ref{sec:mom} contains technical results on median-of-means estimators. Appendix~\ref{supp: moments eg} proves Proposition~\ref{prop:moments}. Lastly, Appendix~\ref{sec:background} contains various background results.

\section{Preliminaries for supplementary material}\label{sec:prelimsupp}

\noindent \textbf{Notation.}  The distribution of any random variable $U$ is denoted as $\mathcal{L}(U)$. We write $\mathcal{L}(U|X)$ to refer to the conditional distribution of $U$ given both the hold-out and non-hold-out sets of observations, whereas we write $\mathcal{L}(U|X')$ to refer to the conditional distribution of $U$ given the hold-out set only. Similarly, we use $\P(\cdot|X)$ and $\P(\cdot | X')$ to denote conditional probabilities in the two cases just mentioned, and we use $\|\cdot\|_{L^q|X}$ and $\|\cdot\|_{L^q|X'}$ to denote the corresponding conditional $L^q$ norms.

For any $d\in \{1,\ldots,p\}$, recall that $J(d)$ denotes a set of indices corresponding to the $d$ largest values among $\{\sigma_1, \ldots, \sigma_p\}$. That is, $\{\sigma_{(1)}, \ldots, \sigma_{(d)} \} = \{ \sigma_j | j \in J(d) \}$. Letting $l_n$ be as in Assumption~\ref{A:model}, define the integer
$$k_n = l_n^5\wedge p,$$ 
which always satisfies $1\leq l_n\leq k_n\leq p$. For each $d\in\{1,\dots,p\}$, let
\begin{equation}
    M_{d}(X) = \max_{j\in J(d)}\frac{1}{\sigma_{j}^{\tau} \sqrt n}\sum_{i=1}^n (X_{ij}-\mu_j).
\end{equation}
Letting $G(X)=(G_1(X), \ldots, G_p(X))$ be a centered Gaussian random vector with the same covariance matrix as $X_1$, the Gaussian counterpart of $M_d(X)$ is defined as
\begin{align*}
 \tilde M_{d}(X) = \max_{j \in J(d)} \frac{G_j(X)}{\sigma_j^{\tau}}.
 \end{align*}
Next, for each $i\in\{1,\dots,n\}$ and $j,d\in\{1,\dots,p\}$, define
\begin{equation*}
 \ \ \ \ \ \ \ \ \ \ \ \ \ \  Y_{ij} = \varphi_{t_j}(X_{ij}-\mu_j) \text{ \ \ \ and \ \ \ } M_{d}(Y)=\max_{j\in J(d)}\frac{1}{\sigma_{j}^{\tau} \sqrt n}\sum_{i=1}^n Y_{ij}-\E(Y_{ij}),
\end{equation*}
as well as
\begin{equation*}
    \hat Y_{ij} = \varphi_{\hat t_j}(X_{ij}-\mu_j)  \text{ \ \ \ and \ \ \  }  M_d(\hat Y)=\max_{j\in J(d)}\frac{1}{\hat \sigma_{j}^{\tau} \sqrt n}\sum_{i=1}^n \hat Y_{ij}.
\end{equation*}
Let $\xi_1,\dots,\xi_n$ be i.i.d.~standard Gaussian random variables, generated independently of $X_1, \ldots, X_{n+m_n}$, and define
\begin{align}
\label{equa: MdstarX}
 M_d^*(X)  =\max_{j\in J(d)}\frac{1}{ \sigma_{j}^{\tau} \sqrt n}\sum_{i=1}^n \xi_i( X_{ij}- \bar X_j).
\end{align}
Let $\tilde X_j$ denote the median-of-means estimator for $\mu_j$ described in Section \ref{sec:method} and define 
\begin{align*}
    \hat Z_{ij}  &= \varphi_{\hat t_j} (X_{ij}-\tilde X_j) \text{ \ \ \ \ and  \ \ \  }
 M_d^*(\hat Z)=\max_{j\in J(d)}\frac{1}{\hat \sigma_{j}^{\tau} \sqrt n}\sum_{i=1}^n \xi_i(\hat Z_{ij}-\hat Z_j).
\end{align*}
where we let $\hat Z_j =\frac{1}{n}\sum_{i=1}^n \hat Z_{ij}$.\\

\noindent \textbf{Frequently-used inequalities.}  As a shorthand for the Kolmogorov metric between generic random variables $U$ and $V$ we write
$$
d_{\mathrm{K}}(\mathcal{L}(U), \mathcal{L}(V)) = \sup_{t \in \R} |\P(U \leq t ) - \P(V \leq t)|.
$$
We will often use the following two basic inequalities that hold for any random variables $U$ and $V$, and any number $s>0$,
\begin{align}\label{eqn:anticouple} 
d_{\mathrm{K}}(\mathcal{L}(U), \mathcal{L}(V)) \ \leq \  \sup _{t \in \mathbb{R}} \P(|V-t| \leq s) \, + \, \P(|U-V|>s),
\end{align}
and
\begin{equation}\label{eqn:anticonc_decomp}
        \sup_{t\in\R}\P(|U-t|\leq s) \ \leq \  \sup_{t\in\R}\P(|V-t|\leq s) + 2\dK(\L(U),\L(V)).
\end{equation}
If $g$ is a centered Gaussian random variable and $q \geq 1$, then there is an absolute constant $c>0$ such that
\begin{equation}\label{eqn:gaussianLq}
    \|g\|_{L^q} \ \leq \ c\sqrt q  \|g\|_{L^2},
\end{equation}
as recorded in~\citep[][Eqn.~2.11]{vershynin2012introduction}.
When referring to Chebyshev's inequality, we will typically use it in the following form for a generic random variable $U$,
\begin{equation}\label{eqn:Chebyshev}
    \P(|U|\geq e\|U\|_{L^q})\leq e^{-q}.
\end{equation}
For any random variables $U_1,\dots,U_p$, we have
\begin{align}
\label{eqn:maxout}
\Big \| \max_{1\leq j\leq p} U_j \Big \|_{L^q} \, \leq \, p^{1/q} \max_{1\leq j\leq p} \| U_j \|_{L^q}.    
\end{align}
Lastly, for any two real vectors $(a_1, \ldots, a_p)$ and $(b_1, \ldots, b_p)$, we have
\begin{align}
\label{eqn:maxdiff}
\Big|\max_{1 \leq j \leq p} a_j - \max_{1 \leq j \leq p} b_j \Big| \leq \max_{1 \leq j \leq p} |a_j - b_j |. 
\end{align}

\section{Proof of Theorem~\ref{thm:main}}\label{supp:proofthm1}

Observe that the left side of the bound in Theorem~\ref{thm:main} is given by
\begin{equation}
     \sup_{s\in\R}\Big| \P(\mathcal{M}_n\leq s) - \P(\mathcal{M}_n^*\leq s|X)\Big| = \dK\big(\L(M_p(\hat Y)), \L(M_p^*(\hat Z)|X)\big).
\end{equation}
We will bound this distance in three main parts 
\begin{align}
\dK\big(\L(M_p(\hat Y)), \L(M_p^*(\hat Z)|X)\big) & \ \  \leq \ \ 
\dK\big(\L(M_p(\hat Y)),\L(\tilde M_{k_n}(X))\big) \label{equ: piece1}\\
& \ \ \ \ \ \ + \ \dK\Big(\L(\tilde M_{k_n}(X)),\L(M_{k_n}^*(X)|X)\big) \label{equ: piece2}\\
& \ \ \ \ \ \ + \ \dK\big(\L( M_{k_n}^*(X)|X), \L(M_{p}^*(\hat Z)|X)\big). \label{equ: piece3}
\end{align}
The three terms on the right side respectively correspond to a Gaussian approximation, a Gaussian comparison, and a bootstrap approximation. These terms are respectively addressed in Proposition \ref{prop: piece2} of Appendix~\ref{sec:Gaussian}, Proposition~\ref{prop: gaus-compa} of Appendix~\ref{sec:comp} and Proposition~\ref{prop: XZ kn-p} of Appendix~\ref{sec:boot}, which show that all the terms are $\mathcal{O}(n^{-\frac{1}{2}+\epsilon})$ with probability at least $1-\mathcal{O}(n^{-\delta/4})$. \qed \\

\noindent\textbf{Conventions.} In the appendices supporting the proof of Theorem~\ref{thm:main}, we may assume without loss of generality that $\epsilon$ satisfies $\epsilon<1/2$ and that $n\geq c$ for any fixed constant $c>0$, for otherwise the result is trivially true. Also, we will often use $c$ to denote a generic positive constant not depending on $n$, whose value may differ at each appearance. Lastly, we may assume without loss of generality that $(\mu_1,\dots,\mu_p)=\E(X_1)=0$, because the conditions in Assumption~\ref{A:model} are shift invariant, and that $\max_{1\leq j\leq p}\sigma_j^2=1$, because the Kolmogorov metric is scale invariant.\label{sigmaconvention} To avoid repetitiveness, these conventions will not be stated explicitly in most of the results presented in the appendices.

\section{Gaussian approximation}\label{sec:Gaussian}

\begin{proposition}
\label{prop: piece2}
If the conditions of Theorem~\ref{thm:main} hold, then
\begin{equation*}
        \dK(\L(M_p(\hat Y)),\L(\tilde M_{k_n}(X))) \ \lesssim n^{-\frac{1}{2}+\epsilon}.
\end{equation*}   
\end{proposition}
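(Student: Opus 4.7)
The plan is to bound the target distance by inserting several intermediate laws and accounting for each approximation separately. Concretely, I would decompose via: (i) replacing the random truncation threshold $\hat t_j=\sqrt n\hat\sigma_j$ and random normalizer $\hat\sigma_j^\tau$ by the deterministic quantities $t_j=\sqrt n\sigma_j$ and $\sigma_j^\tau$, passing from $M_p(\hat Y)$ to the centered statistic $M_p(Y)$; (ii) absorbing the truncation bias $\E\varphi_{t_j}(X_{1j}-\mu_j)$; (iii) localizing the outer maximum from all $p$ coordinates down to the top-variance subset $J(k_n)$ so that $M_p(Y)$ coincides with $M_{k_n}(Y)$; and (iv) a high-dimensional Gaussian approximation on the reduced index set, followed by a Gaussian comparison that bridges the surrogate of $Y$ to $\tilde M_{k_n}(X)$.

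For step (i), the MOM lemmas in Appendix \ref{sec:mom} will give $\max_j|\hat\sigma_j/\sigma_j-1|\lesssim\sqrt{\log n/n}$ on an event of probability at least $1-\mathcal{O}(n^{-\delta/4})$, using $b_n\asymp\log n$. Since $\varphi_t$ is $1$-Lipschitz in both its argument and its threshold, and since $\P(|X_{1j}-\mu_j|>\sqrt n\sigma_j)\lesssim n^{-(4+\delta)/2}$ by Chebyshev applied to the $L^{4+\delta}$-$L^2$ condition, a standard perturbation argument shows that the replacement costs $\mathcal{O}(n^{-1/2+\epsilon/2})$ per summand, which combined with the anti-coupling inequality \eqref{eqn:anticouple} and anti-concentration of $\tilde M_{k_n}(X)$ contributes at most $\mathcal{O}(n^{-1/2+\epsilon})$ in Kolmogorov distance. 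For step (ii), H\"older and Markov give $|\E\varphi_{t_j}(X_{1j}-\mu_j)|\lesssim \|X_{1j}-\mu_j\|_{L^{4+\delta}}\,n^{-(3+\delta)/2}\lesssim\sigma_j n^{-(3+\delta)/2}$, so the shift $\sqrt n\,\E\varphi_{t_j}(X_{1j}-\mu_j)/\sigma_j^\tau\lesssim \sigma_j^{1-\tau}n^{-(2+\delta)/2}$ is negligible relative to $n^{-1/2+\epsilon}$.

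The localization step (iii) is the crux of the proof and where the main difficulty lies. The idea is to combine two ingredients: an upper-tail bound on $\max_{j\notin J(k_n)}\big|\frac{1}{\sigma_j^\tau\sqrt n}\sum_i Y_{ij}\big|$ exploiting variance decay $\sigma_{(j)}^2\asymp j^{-2\beta}$ together with the $L^{4+\delta}$-$L^2$ moment inequality, and the Gaussian lower-tail bound of Lemma \ref{lem:lowertail}, which guarantees $\tilde M_{l_n}(X)\gtrsim\sigma_{(1)}^{1-\tau}\sqrt{\log l_n}$ with probability at least $1-\mathcal{O}(n^{-\delta/4})$. For $j\notin J(k_n)$ the scale $\sigma_{(j)}^{1-\tau}\lesssim k_n^{-\beta(1-\tau)}$ is very small, and a straightforward truncated-moment estimate on the centered partial sum yields an upper tail well below the Gaussian lower tail once $k_n=l_n^5$ is chosen. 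This separation forces the maximizing index of $M_p(Y)$ (and hence of $M_p(\hat Y)$, after steps (i)-(ii)) to lie in $J(k_n)$ except on an event of probability $\mathcal{O}(n^{-\delta/4})$, so that $M_p(Y)=M_{k_n}(Y)$ on that event. The obstacle is that the $L^{4+\delta}$ tail, rather than a sub-Gaussian tail, forces careful bookkeeping of the polynomial slack encoded in the exponent $5$ appearing in $k_n=l_n^5$.

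For step (iv), the reduced dimension $k_n\lesssim n^{5\epsilon/(24(\beta\vee 1))}$ is a small polynomial in $n$ and the truncated summands are bounded by $t_j=\sqrt n\sigma_j$, so a Berry-Esseen-type high-dimensional CLT for the max (in the spirit of Chernozhukov-Chetverikov-Kato and their recent sharpenings) yields Kolmogorov-distance rate $\mathcal{O}(k_n^{c}(\log k_n)^{c'}/\sqrt n)=n^{-1/2+o(1)}$, safely within the target budget. Passing from the Gaussian surrogate of $Y$ to $\tilde M_{k_n}(X)$ uses a Gaussian comparison inequality, and the covariance perturbation $\max_{j,k\in J(k_n)}|\Cov(Y_{1j},Y_{1k})-\Cov(X_{1j},X_{1k})|$ is of order $n^{-(2+\delta)/2}$ because truncation at $\sqrt n\sigma_j$ barely changes second moments under the $L^{4+\delta}$-$L^2$ equivalence, so this last comparison contributes only $\mathcal{O}(n^{-1+o(1)})$. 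Summing the four contributions yields the stated $n^{-1/2+\epsilon}$ bound.
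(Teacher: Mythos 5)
Your overall architecture — localize to a small index set, compare with a Gaussian surrogate, account for truncation bias and the random thresholds/normalizers — is the same as in the paper, and you correctly identified Lemma~\ref{lem:lowertail} as the essential ingredient for the localization step. However the order in which you propose to carry out two of the reductions, and the probability bound you attach to the localization, both create genuine problems.

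First, you propose to ``de-hat'' (replace $\hat t_j,\hat\sigma_j$ by $t_j,\sigma_j$) over \emph{all} $p$ coordinates before localizing, and assert that the MOM lemmas give $\max_j|\hat\sigma_j/\sigma_j-1|\lesssim\sqrt{\log n/n}$ and that the replacement cost is $\mathcal O(n^{-1/2+\epsilon/2})$ per summand. This is not available simultaneously over all $p$ coordinates. The MOM tail for a single $j$ from Lemma~\ref{Lemma: MOM variance}(i) is only $(c/n)^{\epsilon b_n/c}$ with $b_n\asymp\log n$, and the paper places no restriction on the growth of $p$ relative to $n$; a naive union over $p$ indices fails when $\log p$ grows faster than $\log^2 n$. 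The paper sidesteps this entirely by localizing \emph{first} (Proposition~\ref{prop: hatYp-kn}, proved with the hats still in place), and only then de-hatting on the small set $J(k_n)$ (Proposition~\ref{Prop: practical kn}, via Lemmas~\ref{Lemma: hat Y and Y Mkn}--\ref{Lemma: Y hatY diff}), where union bounds give $k_n/n\ll n^{-1/2+\epsilon}$. If you insist on your order, you would need a scale-aware argument exploiting that for $j\notin J(k_n)$ the contribution of coordinate $j$ is at most $\sigma_j^{1-\tau}$ even when the indicators disagree, essentially re-deriving the content of Lemma~\ref{Lemma: B order} with the extra complication of random normalizers. It is not ``per-summand'' control; you cannot avoid tracking the variance profile.

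Second, and more consequentially, you state that the lower-tail bound gives $\tilde M_{l_n}(X)\gtrsim\sigma_{(1)}^{1-\tau}\sqrt{\log l_n}$ with failure probability $\mathcal O(n^{-\delta/4})$, and that therefore the maximizer lies in $J(k_n)$ except on an event of probability $\mathcal O(n^{-\delta/4})$. If that were the best available, your Kolmogorov-distance bound for the localization step would only be $\mathcal O(n^{-\delta/4})$, which is far larger than the target $n^{-1/2+\epsilon}$ for small $\delta$ (recall $\delta\geq\epsilon$ can be arbitrarily small). The argument only closes because the lower-tail probability is in fact of order $n^{-1/2}$: as in the paper's Lemma~\ref{Lemma: tilde MX}, one applies Lemma~\ref{lem:lowertail} with $k=d_n\gtrsim n^\omega$ for the tiny constant $\omega$ of~\eqref{eqn:omegadef}, obtaining $d_n^{-(1-\omega)^3/\omega}(\log d_n)^c\lesssim n^{-(1-\omega)^3}\lesssim n^{-1/2}$. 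The $n^{-\delta/4}$ rate belongs to the \emph{conditional} bootstrap part of Theorem~\ref{thm:main} (Propositions~\ref{prop: gaus-compa} and~\ref{prop: XZ kn-p}), not to the unconditional distance controlled by Proposition~\ref{prop: piece2}. As written, your localization step does not deliver the stated rate.
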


\noindent \emph{Proof.}
The proof is based on the decomposition 
\begin{align*}
\dK(\L(M_p(\hat Y)),\L(\tilde M_{k_n}(X))) \leq \mathrm{I}_n + \mathrm{II}_n + \mathrm{III}_n,
\end{align*}
where we define
\begin{align*}
& \mathrm{I}_n = \dK(\L(M_p(\hat Y)), \L(M_{k_n}(\hat Y))), \\
& \mathrm{II}_n = \dK(\L(M_{k_n}(\hat Y)), \L(M_{k_n}(Y))),\\
& \mathrm{III}_n = \dK(\L(M_{k_n}(Y)), \L(\tilde M_{k_n}(X))).
\end{align*}
Below, the terms $\mathrm{I}_n$, $\mathrm{II}_n$, and $\mathrm{III}_n$ are shown to be at most of order $n^{-\frac{1}{2}+\epsilon}$  in Propositions \ref{prop: hatYp-kn}, \ref{Prop: practical kn}, and \ref{prop: Ykn-tildeXkn} respectively.

\begin{proposition}
\label{prop: hatYp-kn}
If the conditions of Theorem~\ref{thm:main} hold, then
    \begin{equation*}
        \dK(\L(M_p(\hat Y)),\L(M_{k_n}(\hat Y))) \ \lesssim n^{-\frac{1}{2}+\epsilon}.
    \end{equation*}
\end{proposition}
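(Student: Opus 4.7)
The plan is to apply the coupling bound~\eqref{eqn:anticouple} with $U=M_p(\hat Y)$ and $V=M_{k_n}(\hat Y)$. Since $J(k_n)\subseteq\{1,\dots,p\}$ we have $U\geq V$, and therefore
$$|U-V| \;=\; \Big(\max_{j\notin J(k_n)} A_j \,-\, V\Big)_+, \qquad A_j \;=\; \frac{1}{\hat\sigma_j^\tau\sqrt{n}}\sum_{i=1}^n \hat Y_{ij}.$$
The task reduces to choosing a threshold $s>0$ and controlling (a) the anti-concentration $\sup_t\P(|V-t|\leq s)$ and (b) the overshoot probability $\P(\max_{j\notin J(k_n)}A_j > V+s)$.

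For~(a), I would transfer the problem from $V$ to its Gaussian analogue $\tilde M_{k_n}(X)$ using~\eqref{eqn:anticonc_decomp}, paying a Gaussian-approximation cost that is already of order $n^{-1/2+\epsilon}$ by Propositions~\ref{Prop: practical kn} and~\ref{prop: Ykn-tildeXkn}. A Nazarov-type anti-concentration inequality applied to $\tilde M_{k_n}(X)=\max_{j\in J(k_n)} G_j(X)/\sigma_j^\tau$ yields $\sup_t\P(|\tilde M_{k_n}(X)-t|\leq s)\lesssim s\sqrt{\log k_n}/\min_{j\in J(k_n)}\sigma_j^{1-\tau}$. By Assumption~\ref{A:model}(\ref{A:var}) the denominator is of order $k_n^{-\beta(1-\tau)}$, so choosing $s\asymp n^{-1/2+\epsilon}/(k_n^{\beta(1-\tau)}\sqrt{\log k_n})$ makes contribution (a) of order $n^{-1/2+\epsilon}$.

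For~(b), the strategy is a two-step localization. A lower bound on $V$: Lemma~\ref{lem:lowertail} applied to $\tilde M_{k_n}(X)$---permissible because Assumption~\ref{A:model}(\ref{A:cor}) controls the Frobenius norm of the relevant correlation submatrix---yields $\tilde M_{k_n}(X)\gtrsim\sqrt{\log k_n}$ outside an event of probability $\lesssim n^{-\delta/4}$, and the Gaussian approximation transfers this to $V$. An upper bound on the leftover max: for $j\notin J(k_n)$, Assumption~\ref{A:model}(\ref{A:var}) gives $\sigma_j\lesssim k_n^{-\beta}$; using the $L^{4+\delta}$-$L^2$ equivalence~\ref{A:moments} to bound the truncation bias of $\hat Y_{ij}$, and the median-of-means concentration $\hat\sigma_j\asymp\sigma_j$ from Appendix~\ref{sec:mom} to control the random normalizer, a Chebyshev-type union bound over at most $p$ indices yields $\max_{j\notin J(k_n)}A_j \lesssim k_n^{-\beta(1-\tau)}\sqrt{\log p}$ with probability at least $1-n^{-\delta/4}$. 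The definition $k_n=l_n^5$ with $l_n\asymp n^{\epsilon/(24(\beta\vee 1))}$ is calibrated exactly so that $k_n^{\beta(1-\tau)}$ is polynomial in $n^{\epsilon}$, dominating any polylogarithmic factor in $p$, which forces the leftover strictly below the lower bound on $V$ and drives the overshoot probability in~(b) below $n^{-\delta/4}$.

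The step I expect to be most delicate is the simultaneous handling of three forms of randomness inside each $A_j$ for $j\notin J(k_n)$: the truncation bias of $\varphi_{\hat t_j}(X_{ij})$ under possibly asymmetric distributions, the random threshold $\hat t_j=\sqrt{n}\,\hat\sigma_j$, and the random partial-standardization factor $\hat\sigma_j^\tau$, all of which are functions of the hold-out MOM estimator. Conditioning on the hold-out set and invoking the concentration bounds in Appendix~\ref{sec:mom} is what allows these three effects to be decoupled cleanly, while the $L^{4+\delta}$-$L^2$ equivalence is what prevents the Chebyshev union bound from accumulating spurious powers of $p$ that would otherwise degrade the $n^{-1/2+\epsilon}$ rate.
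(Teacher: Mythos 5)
Your decomposition — first anti-concentration for $M_{k_n}(\hat Y)$ at a tiny scale $s$, then showing the maximizing index is in $J(k_n)$ via a lower bound on $M_{k_n}(\hat Y)$ and an upper bound on the leftover max — captures the right intuition, and the paper indeed uses Lemma~\ref{lem:lowertail} for the lower bound and a Gaussian-transfer to run it. But there are three substantive problems.

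First, the proposition is a deterministic statement with rate $n^{-1/2+\epsilon}$, so each term in your decomposition has to be of that order. You assert the overshoot probability is only $\lesssim n^{-\delta/4}$; for small $\delta$ that exceeds $n^{-1/2+\epsilon}$ by a wide margin, which would not close the argument. The paper's Lemma~\ref{Lemma: tilde MX} gets $\P(\tilde M_{k_n}(X)\le s_{2,n}) \lesssim n^{-1/2}$ by carefully choosing the threshold to feed Lemma~\ref{lem:lowertail}, and Lemma~\ref{Lemma: B order} gets $\P(B(s_{1,n})) \lesssim n^{-1}$; both terms fit the budget, whereas $n^{-\delta/4}$ does not.

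Second, your claimed leftover bound $\max_{j\notin J(k_n)} A_j \lesssim k_n^{-\beta(1-\tau)}\sqrt{\log p}$ relies on a sub-Gaussian-style union bound over $p$ coordinates; this is not justified here (the $A_j$ are not sub-Gaussian), and more importantly it introduces a $\log p$ dependence that the paper deliberately avoids. The theorem is designed to hold with a rate independent of $p$, including regimes where $\log p$ dominates all powers of $l_n$ and breaks your comparison against the lower bound on $V$. The paper's Lemma~\ref{Lemma: B order} sidesteps this by an $L^q$-moment argument with $q\asymp\log n$: since $\|A_j\|_{L^q|X'}\lesssim q\,\sigma_j^{(1-\tau)/2}$, the sum $\sum_{j\in J(k_n)^c}\sigma_j^{(1-\tau)q/2}\lesssim\sum_{j\ge k_n} j^{-\beta(1-\tau)q/2}$ converges with no $p$ factor at all. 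The variance-decay Assumption~\ref{A:model}(\ref{A:var}) is doing the work that you were expecting a union bound to do.

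Third, the MOM control $\hat\sigma_j\asymp\sigma_j$ that you invoke uniformly over $j\notin J(k_n)$ is not what the hold-out estimates deliver when $\sigma_j$ is very small. For $j\notin J(k_n)$, the paper can only assert bounds of the form $\hat\sigma_j^2\lesssim\sigma_j^{2-2\theta}$ and $\hat\sigma_j^2\gtrsim\sigma_j^{2+2\theta}$ (Lemmas~\ref{Lemma: MOM variance} and~\ref{Lemma: MOM var anti-con}) rather than $\hat\sigma_j^2\asymp\sigma_j^2$; this is precisely why $A_j$ ends up at scale $\sigma_j^{(1-\tau)/2}$ rather than $\sigma_j^{1-\tau}$, and why the paper's threshold is $s_{1,n}\asymp k_n^{-\beta(1-\tau)/2}\log n$ with the softer exponent $\beta(1-\tau)/2$. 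Without registering this degradation, your computed exponent in the leftover bound is too optimistic.

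Finally, on structure: the paper never invokes the coupling inequality~\eqref{eqn:anticouple} here. Since $\P(M_p(\hat Y)\le t)-\P(M_{k_n}(\hat Y)\le t) = -\P(A(t)\cap B(t))$, one has the clean inclusion $A(t)\cap B(t)\subset A(s_{2,n})\cup B(s_{1,n})$ for every $t$ once $s_{1,n}\le s_{2,n}$, which bounds the Kolmogorov distance by $\P(A(s_{2,n}))+\P(B(s_{1,n}))$ with no anti-concentration step and no tiny scale $s$ to juggle.
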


\noindent\emph{Proof.} 
For any $t \in \R$, define the events
\begin{align*}
A(t)=\Big\{\max_{j\in J(k_n)}\frac{\sum_{i=1}^n \hat Y_{ij}}{\sqrt{n}\hat\sigma_j^{\tau}} \leq t\Big\} \quad \text { and } \quad B(t)=\Big\{\max_{j\in J(k_n)^c}\frac{\sum_{i=1}^n \hat Y_{ij}}{\sqrt{n}\hat\sigma_j^{\tau}} > t\Big\}.
\end{align*}
It is straightforward to check that for any $t\in\R$, we have
\begin{equation*}
    \Big|\P(M_p(\hat Y)\leq t)-\P(M_{k_n}(\hat Y)\leq t)\Big| \ = \ \P(A(t)\cap B(t)).
\end{equation*}
Next, it can be checked that for any real numbers $s_{1,n}$ and $s_{2,n}$ satisfying  $s_{1, n} \leq s_{2, n}$, the inclusion
$$ A(t)\cap B(t) \ \subset \ A(s_{2, n})\cup B(s_{1, n})$$
holds simultaneously for all $t\in\R$.
Therefore, after taking the supremum over $t\in\R$, we have
\begin{equation}\label{eqn:afterunion}
\dK(\L(M_p(\hat Y)),\L(M_{k_n}(\hat Y))) \ \leq \ \P(A(s_{2, n}))+\P(B(s_{1, n})).
\end{equation}
Let 
\begin{align}
    & \omega=\frac{\epsilon}{24(\beta\vee 1)C},\label{eqn:omegadef}\\
    & d_n=\big\lfloor\ts\frac{\omega^2}{4} {\tt{r}}(R(l_n)) \vee 2\big\rfloor,\label{eqn:dndef}
\end{align}
where ${\tt{r}}(R(l_n)):=\textup{tr}(R(l_n))^2/\|R(l_n))\|_F^2=\frac{l_n^2}{\| R(l_n) \|_F^2}$ is the stable rank of $R(l_n)$. We will choose $s_{1,n}$ and $s_{2,n}$ according to 
\begin{equation}
\begin{aligned}
s_{1,n}&=c_1 k_n^{-\beta(1-\tau)/2} (\log(n)\vee 2),\\[0.2cm]
s_{2,n}&=c_2 l_n^{-\beta(1-\tau)} \sqrt{\log(d_n)},
\end{aligned}
\label{eqn:s1ns2ndef}
\end{equation}
for some constants $c_1,c_2>0$ not depending on $n$. It can also be checked that for any fixed choices of $c_1$ and $c_2$, the inequality $s_{1,n} \leq s_{2,n}$ holds for all large $n$  due to the definitions of $k_n, l_n$ and $d_n$.

To bound $\P(A(s_{2, n}))$, we have 
\begin{align*}
\P\big(A(s_{2, n}) \big) \ \leq \ & \P(\tilde M_{k_n}(X) \leq s_{2,n}) +d_{\mathrm{K}}(\mathcal{L}(M_{k_n}(\hat Y)), \mathcal{L}(\tilde{M}_{k_n}(X))),
\end{align*}
where the first term on the right hand side is of order $n^{-1/2}$ by Lemma \ref{Lemma: tilde MX}, and the second term is of order $n^{-\frac{1}{2}+\epsilon}$ by Propositions \ref{Prop: practical kn} and \ref{prop: Ykn-tildeXkn}. Lastly,  Lemma \ref{Lemma: B order} shows that $\P(B(s_{1, n}))$ is of order $\frac{1}{n}$, which completes the proof.\qed

\begin{lemma}
\label{Lemma: tilde MX}
Suppose conditions of Theorem~\ref{thm:main} hold.  Then, there is a constant $c_2>0$, not depending on $n$, such that the following bound holds when $s_{2,n}=c_2 l_n^{-\beta(1-\tau)} \sqrt{\log(d_n)}$,
$$
\P\Big(\tilde M_{k_n}(X) \leq s_{2,n}\Big) \lesssim n^{-1/2}.
$$
\end{lemma}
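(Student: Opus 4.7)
The plan is to reduce the problem to a Gaussian lower-tail estimate and then invoke Lemma~\ref{lem:lowertail} from~\citep{lopes2022sharp}. First, I would restrict attention to a smaller index set. Since $J(l_n)\subseteq J(k_n)$, the monotonicity of the maximum gives
\begin{equation*}
\tilde M_{k_n}(X) \ \geq \ \max_{j\in J(l_n)} G_j(X)/\sigma_j^{\tau},
\end{equation*}
so it suffices to bound $\P\bigl(\max_{j\in J(l_n)} G_j(X)/\sigma_j^{\tau}\leq s_{2,n}\bigr)$. Under Assumption~\ref{A:model}(\ref{A:var}) and the normalization $\max_j\sigma_j^2=1$, for every $j\in J(l_n)$ we have $\sigma_j^{2}\asymp l_n^{-2\beta}$, so each $G_j(X)/\sigma_j^{\tau}$ is a centered Gaussian with variance of order $l_n^{-2\beta(1-\tau)}$, while the correlation matrix of the restricted vector is exactly $R(l_n)$.

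The key observation for the next step is that the threshold $s_{2,n}=c_2\, l_n^{-\beta(1-\tau)}\sqrt{\log d_n}$ has the same order as the expected maximum $\E\bigl[\max_{j\in J(l_n)} G_j(X)/\sigma_j^{\tau}\bigr]$, up to a multiplicative constant $c_2$ that can be chosen small. Here $d_n$ plays the role of the \emph{effective dimension} of the Gaussian maximum and is proportional to the stable rank ${\tt{r}}(R(l_n))$ via~\eqref{eqn:dndef}. Lemma~\ref{lem:lowertail} from~\citep{lopes2022sharp} then yields, after choosing $c_2>0$ sufficiently small,
\begin{equation*}
\P\Bigl(\max_{j\in J(l_n)} G_j(X)/\sigma_j^{\tau} \leq s_{2,n}\Bigr) \ \lesssim \ \exp(-c\, d_n)
\end{equation*}
for some constant $c>0$ depending only on $c_2$.

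Finally, I would verify that $d_n$ is large enough to produce the required $n^{-1/2}$ rate. Assumption~\ref{A:model}(\ref{A:cor}) implies ${\tt{r}}(R(l_n))\geq l_n^{1/C}/C$, and combined with $l_n\asymp n^{\epsilon/(24(\beta\vee 1))}$ in the main regime $n^{\epsilon/(24(\beta\vee 1))}\leq p$, this gives $d_n\gtrsim n^{\omega}$ where $\omega=\epsilon/(24(\beta\vee 1)C)$. Consequently $\exp(-c\,d_n)$ decays super-polynomially in $n$, comfortably faster than $n^{-1/2}$. The edge case $l_n=p$ is handled by choosing $c_2$ small enough that $s_{2,n}$ lies well below the scale of the single dominant Gaussian $G_{(1)}(X)/\sigma_{(1)}^{\tau}$, after which a one-dimensional Gaussian tail bound completes the argument.

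The main obstacle is matching the exact scaling of $s_{2,n}$ to the hypotheses of Lemma~\ref{lem:lowertail}, and in particular confirming that the expected value of $\max_{j\in J(l_n)} G_j(X)/\sigma_j^{\tau}$ is indeed on the order of $l_n^{-\beta(1-\tau)}\sqrt{\log d_n}$, which itself relies on the stable-rank lower bound from Assumption~\ref{A:model}(\ref{A:cor}). Coordinating the small constants $c_2$, $\omega$, and the implicit constants in the Lopes 2022 bound is the most delicate aspect of the argument.
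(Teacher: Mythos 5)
Your high-level plan matches the paper — restrict to $J(l_n)$, invoke Lemma~\ref{lem:lowertail}, and use the stable-rank bound from Assumption~\ref{A:model}(\ref{A:cor}) — but there are two genuine errors that would derail the argument as written.

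First, the claim that ``for every $j\in J(l_n)$ we have $\sigma_j^{2}\asymp l_n^{-2\beta}$'' is false. Assumption~\ref{A:model}(\ref{A:var}) gives $\sigma_{(k)}^2\asymp k^{-2\beta}$ for $k=1,\dots,l_n$, so the variances within $J(l_n)$ range from $\asymp 1$ down to $\asymp l_n^{-2\beta}$, differing by a factor of $l_n^{2\beta}$. Consequently the random variables $G_j(X)/\sigma_j^{\tau}$ do \emph{not} all have variance of order $l_n^{-2\beta(1-\tau)}$, and in particular they are not uniformly standardized. This matters because Lemma~\ref{lem:lowertail} is stated for a Gaussian vector with covariance equal to a \emph{correlation matrix} (unit diagonal), which $(G_j(X)/\sigma_j^{\tau})_{j\in J(l_n)}$ is not. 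The paper closes this gap with the elementary rescaling inequality
\begin{equation*}
\P\Big(\max_{j\in J(l_n)} U_j \leq t\Big)\ \leq\ \P\Big(\max_{j\in J(l_n)} a_j U_j \leq b\,t\Big),
\qquad 0<a_j\leq b,\ t\geq 0,
\end{equation*}
applied with $a_j=\sigma_j^{-(1-\tau)}$ and $b\asymp l_n^{(1-\tau)\beta}$, which converts the problem to one about $\max_{j}G_j(X)/\sigma_j$ — exactly correlation-standardized with matrix $R(l_n)$ — and simultaneously explains where the factor $l_n^{-\beta(1-\tau)}$ in $s_{2,n}$ comes from. Without this step you have no vector to which Lemma~\ref{lem:lowertail} legitimately applies.

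Second, Lemma~\ref{lem:lowertail} does \emph{not} yield an exponential bound $\exp(-c\,d_n)$. Its conclusion is polynomial in $k$, of the form $c\,k^{-(1-b)(1-a)^2/b}(\log k)^{\cdots}$, so with $(a,b)=(\omega,\omega)$ and $k=d_n$ you obtain $d_n^{-(1-\omega)^3/\omega}$ up to logarithms. Combined with $d_n\gtrsim n^{\omega}$ this gives $n^{-(1-\omega)^3}(\log n)^{c}$, which is $\lesssim n^{-1/2}$ only because $\omega$ was chosen small enough that $(1-\omega)^3\geq 1/2$. The decay is polynomial and the margin is not ``comfortable''; the constant $\omega$ must be tuned, which is precisely why the paper defines $\omega=\epsilon/(24(\beta\vee 1)C)$ with that particular denominator. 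Your proposal's claim of super-polynomial decay would short-circuit this delicate bookkeeping and is simply not what the cited lemma provides.
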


\noindent \emph{Proof.}
Observe that for any $t \in \R$,
\begin{align}\label{eqn:tildeknln}
\P(\tilde M_{k_n}(X) \leq t) \leq \P(\tilde M_{l_n}(X) \leq t). 
\end{align}
Let $(a_j)_{j \in J(l_n)}$ and $b$ be positive numbers with $\max_{j \in J(l_n)} a_j \leq b$. For any sequence of random variables $(U_j)_{j \in J(l_n)}$ and $t \geq 0$, it is straightforward to check that
\begin{align*}
\P\Big(\max_{j \in J(l_n)} U_j \leq t\Big) \leq \P\Big(\max_{j \in J(l_n)} a_j U_j \leq bt\Big).
\end{align*}
Consider the choice $a_j = \sigma_{j}^{-(1-\tau)}$ and note that under Assumption~\ref{A:model}(\ref{A:var}), there is a constant $c_0>0$ not depending on $n$ such that the bound $a_j\leq c_0 l_n^{(1-\tau)\beta}\!=:b$ holds for all $j\in J(l_n)$. So, if we let $U_j=G_j(X)/\sigma_{j}^\tau$, then the previous two displays imply
\begin{align*}
\P\big(\tilde M_{k_n}(X)\leq t\big) 
%& \, \leq \, \P\Big(\max_{j\in J(l_n)} \frac{G_j(X)}{\sigma_{j}^\tau} \leq t\Big) \\
%
&\, \leq\, \P\Big(\max_{j\in J(l_n)} \frac{G_j(X)}{\sigma_{j}} \leq c_0  l_n^{(1-\tau)\beta} t \Big). 
\end{align*}
Consequently, if we let $\omega$ be as defined in~\eqref{eqn:omegadef}, and let $c_2=\frac{1}{c_0} \omega \sqrt{2(1-\omega)}$ in the definition~\eqref{eqn:s1ns2ndef} of $s_{2,n}$, then choosing $t=s_{2,n}$ in the previous display gives
$$\P\Big(\tilde M_{k_n}(X) \leq s_{2,n}\Big) \ \leq \ 
\P\Big(\max_{j\in J(l_n)} \frac{G_j(X)}{\sigma_{j}} \ \leq \omega \sqrt{2(1-\omega)\log(d_n)}\Big).$$
To bound the probability on the right, we apply Lemma~\ref{lem:lowertail} with $(l_n, d_n, \omega, \omega)$ playing the roles of $(d, k, a, b)$ in the statement of that result, which yields
\begin{equation*}
    \P\Big(\tilde M_{k_n}(X) \leq s_{2,n}\Big) \  \lesssim  \ d_n^{-(1-\omega)^3/\omega} (\log(d_n))^{\frac{1-\omega(2-\omega)-\omega}{2\omega}}.
\end{equation*}
(Note that the conditions of Lemma~\ref{lem:lowertail} are applicable because $2 \leq d_n \leq \frac{\omega^2}{4} {\tt{r}}(R(l_n))$ when $n$ is sufficiently large.)
Furthermore, by Assumption~\ref{A:model}(\ref{A:cor}), we have
\begin{align*}
d_n \asymp {\tt{r}}(R(l_n))
=\frac{l_n^2}{\|R(l_n) \|_F^2}  \ \gtrsim \  l_n^{\frac{1}{C}}  \ \gtrsim \ n^{\omega}. 
\end{align*}
Hence, there is a constant $c>0$ not depending on $n$ such that
\begin{equation*}
\begin{split}
    \P\Big(\tilde M_{k_n}(X) \leq s_{2,n}\Big) 
   & \ \lesssim \ d_n^{-\frac{(1-\omega )^3}{\omega }}\log(d_n)^c\\
    & \ \lesssim \ n^{-(1-\omega )^3}\log(n)^c\\
    & \ \lesssim \ n^{-1/2}
\end{split}
\end{equation*}
as needed.
\qed

\begin{lemma}
\label{Lemma: B order}
If the conditions of Theorem~\ref{thm:main} hold, then there is a constant $c_1>0$ not depending on $n$ such that the following bound holds when $s_{1,n}=c_1 k_n^{-\beta(1-\tau)/2} (2\vee \log(n))$,
\begin{align*}
\P\big(B(s_{1,n}) \big) \, \lesssim \, \ts \frac{1}{n}.
\end{align*}
\end{lemma}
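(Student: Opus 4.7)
The plan is to condition on the hold-out sample $X'$ and work on a high-probability MOM event $E$ on which $\hat\sigma_j \asymp \sigma_j$ and $\hat t_j \asymp \sqrt n\,\sigma_j$ uniformly in $j = 1,\ldots,p$. By the median-of-means concentration results of Appendix~\ref{sec:mom}, using $b_n \asymp \log n$ blocks, the event $E$ can be arranged so that $\P(E^c) \lesssim 1/n$, allowing its complement to be absorbed into the $1/n$ budget.

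Next, fix $j\in J(k_n)^c$ with rank $r>k_n$, so that $\sigma_j \asymp r^{-\beta}$ by Assumption~\ref{A:model}(\ref{A:var}). Conditional on $X'$, the truncated summands satisfy $|\hat Y_{ij}|\le \hat t_j \asymp \sqrt n\,\sigma_j$ and $\var(\hat Y_{ij}\mid X')\le \sigma_j^2$, and the truncation bias is controlled via the $L^{4+\delta}$-$L^2$ moment equivalence of Assumption~\ref{A:model}(\ref{A:moments}): since $\mu_j=0$,
\[
|\E[\hat Y_{ij}\mid X']| \, \le \, \E\big[|X_{1j}|\mathbf{1}\{|X_{1j}|>\hat t_j\}\big] \, \le \, \frac{\|X_{1j}\|_{L^{4+\delta}}^{4+\delta}}{\hat t_j^{3+\delta}} \, \lesssim \, \frac{\sigma_j}{n^{(3+\delta)/2}},
\]
which is much smaller than $\hat\sigma_j^\tau s_{1,n}/\sqrt n$ and hence can be absorbed. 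Bernstein's inequality then yields, on $E$,
\[
\P\!\left(\frac{1}{\sqrt n\, \hat\sigma_j^\tau}\sum_{i=1}^n \hat Y_{ij}>s_{1,n}\,\Big|\,X'\right) \ \lesssim \ \exp\!\big(-c\min(a_j^2, a_j)\big), \qquad a_j := \frac{s_{1,n}}{\sigma_j^{1-\tau}}.
\]
Substituting $\sigma_j \asymp r^{-\beta}$ and $s_{1,n}=c_1 k_n^{-\beta(1-\tau)/2}\log n$ gives $a_j \gtrsim c_1 (r/\sqrt{k_n})^{\beta(1-\tau)}\log n$, which is at least $c_1 \log n$ for every $r>k_n$ and grows as a positive power of $r$.

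The final step is to sum this per-coordinate tail bound over $j\in J(k_n)^c$. Because the Bernstein exponent is super-linear in the rank $r$, the resulting series is geometric-like and is dominated by its leading term at $r=k_n+1$, producing a total of order $n^{-c'c_1}$ \emph{uniformly in $p$}. Choosing $c_1$ large enough as a function of $\beta,\tau,\epsilon$ makes this at most $1/n$, and combining with $\P(E^c)\lesssim 1/n$ yields the claim.

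The main obstacle I anticipate is securing genuine uniformity in $p$: a naive union bound over $|J(k_n)^c|\le p$ coordinates would pick up an unacceptable factor of $p$. The essential point is to exploit the \emph{improvement} of the Bernstein exponent with rank — not just its value at the boundary $r=k_n$ — so that the tail sum converges without dependence on $p$. Handling the two Bernstein regimes (variance- vs.\ linear-dominated) together with the truncation bias and the MOM fluctuations of $\hat\sigma_j$ within a single unified estimate is the most delicate part of the argument.
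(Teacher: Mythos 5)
You have a genuine gap at the very first step, concerning the MOM event, and it matters because the rest of the argument is built on it. You posit a high-probability event $E$ on which $\hat\sigma_j\asymp\sigma_j$ uniformly over \emph{all} $j=1,\dots,p$, with $\P(E^c)\lesssim 1/n$. The median-of-means lemmas in Appendix~\ref{sec:mom} do not deliver this. Lemma~\ref{Lemma: MOM variance}(\ref{Equa: MOM variance 1}) gives a per-coordinate failure probability of $(c/n)^{\epsilon b_n/c}$ for the event $\{|\hat\sigma_j^2-\sigma_j^2|\le\sigma_j^2 n^{-1/2+\epsilon/2}\}$, which does not depend on $j$. A union bound over $J(k_n)^c$ therefore costs a factor of $p$, and since the theorem makes no assumption bounding $p$ in terms of $n$, this factor is uncontrolled. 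You correctly anticipate that uniformity in $p$ is the key difficulty, but you locate it in the wrong place: the improving Bernstein exponent handles the centered sums $\sum_i\hat Y_{ij}$, but it does nothing to rescue the union bound on the MOM event itself, whose per-coordinate failure probability does not improve with $j$ under the constant-factor closeness you assume.

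The paper resolves this by deliberately \emph{weakening} the MOM closeness to a rank-dependent one: Lemma~\ref{Lemma: MOM variance}(\ref{Equa: MOM variance 2}) and Lemma~\ref{Lemma: MOM var anti-con} give $\sigma_j^{1+\theta}\lesssim\hat\sigma_j\lesssim\sigma_j^{1-\theta}$ for some fixed $\theta\in(0,1)$, uniformly over all $j\le p$, with total failure probability $O(n^{-(2+\delta)})$ independent of $p$. This works because the per-coordinate failure probability for the weakened statement scales as $(c\sigma_j^2)^{\theta b_n/c}\asymp j^{-2\theta\beta b_n/c}$, which decays polynomially in $j$ with exponent $\asymp\log n$, so the sum over all $j$ converges without reference to $p$. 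The price is precisely the halved exponent in the per-coordinate bound: using $\hat\sigma_j^{1-\tau}\lesssim\sigma_j^{(1-\tau)/2}$ and $\sigma_j/\hat\sigma_j^\tau\lesssim\sigma_j^{(1-\tau)/2}$, the conditional Rosenthal bound becomes $cq\,\sigma_j^{(1-\tau)/2}$ rather than the $cq\,\sigma_j^{1-\tau}$ your constant-factor assumption would give, and this is exactly why $s_{1,n}$ carries the exponent $-\beta(1-\tau)/2$ rather than $-\beta(1-\tau)$. Your truncation-bias estimate has the same issue: the paper's Lemma~\ref{Lemma: moments closeness} gives $|\E(\hat Y_{1j}|X')|\lesssim\sigma_j^4\hat\sigma_j^{-3}n^{-3/2}$, which only collapses to $\sigma_j n^{-3/2}$ after invoking the (false) $\hat\sigma_j\asymp\sigma_j$. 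Your use of Bernstein in place of Rosenthal plus Chebyshev at $q\asymp\log n$ is an acceptable cosmetic difference, but the argument as written cannot close without replacing the assumed constant-factor MOM closeness by the paper's polynomially weakened, rank-dependent version and tracking the resulting $\sigma_j^{(1-\tau)/2}$ exponent through the bound.
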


\noindent \emph{Proof.}
Let $q=2\vee\log(n)$ and observe that
{\small
\begin{equation*}
\begin{split}
    \bigg\|\max_{j \in J(k_n)^c} \frac{\sum_{i=1}^n \hat  Y_{ij}}{\hat \sigma_j^{\tau} \sqrt{n}}\bigg\|_{L^q|X'}^q 
    \ \leq \ \sum_{j\in J(k_n)^c} \Bigg(2^q\bigg\|\frac{\sum_{i=1}^n \hat Y_{ij}-\E(\hat Y_{ij}|X')}{\hat\sigma_j^{\tau}\sqrt n}\bigg\|_{L^q|X'}^q 
    \ + \ 2^q\bigg|\frac{\sum_{i=1}^n \E(\hat Y_{ij}|X')}{\hat\sigma_j^{\tau}\sqrt n} \bigg|^q\Bigg)
    %\\
    \end{split}
\end{equation*}
}
holds almost surely. By Rosenthal's inequality (Lemma \ref{Lemma: Rosenthal's inequality}), the following event holds almost surely,
\begin{equation*}
\begin{split}
    \bigg\|\frac{\sum_{i=1}^n \hat Y_{ij}-\E(\hat Y_{ij}|X')}{\hat\sigma_j^{\tau}\sqrt n}\bigg\|_{L^q|X'} & 
    \ \leq \ \frac{c q}{\hat\sigma_j^{\tau}}\max \bigg\{ \sqrt{\textstyle \var(\hat Y_{1j}|X')} \ , n^{-1/2+1/q} \|\hat Y_{1j}\|_{L^q|X'}\bigg\}\\[0.2cm]
    & \ \leq \ \frac{c q}{\hat\sigma_j^{\tau}}\max \Big\{ \sigma_j  \ , n^{1/q} \hat\sigma_j\Big\},
    \end{split}
\end{equation*}
where the second step follows from $\var(\hat Y_{1j}|X') \leq \E(X_{1j}^2) \leq \sigma_j^2$ and $|\hat Y_{1j}| \leq n^{1/2} \hat \sigma_j$. Lemmas \ref{Lemma: MOM variance}\eqref{Equa: MOM variance 2} and \ref{Lemma: MOM var anti-con} imply that 
there is an constant $c>0$ not depending on $n$ such that both of the bounds
\begin{equation*}
\max_{j\in J(k_n)^c}\frac{\hat\sigma_j^{1-\tau}}{\sigma_j^{(1-\tau)/2}}\leq c \ \ \ \text{ and }  \ \ \   \max_{j\in J(k_n)^c}\frac{\sigma_j}{\hat\sigma_j^{\tau} \sigma_j^{(1-\tau)/2}} \leq c
\end{equation*}
hold simultaneously with probability at least $1-cn^{-(2+\delta)}$ . Consequently, the bound
\begin{equation*}
\begin{split}
      \bigg\|\frac{\sum_{i=1}^n \hat Y_{ij}-\E(\hat Y_{ij}|X')}{\hat\sigma_j^{\tau}\sqrt n}\bigg\|_{L^q|X'}
      & \leq c q\max\bigg\{ \sigma_j^{(1-\tau)/2} ,\ \sigma_j^{(1-\tau)/2} n^{1/q} \bigg\}\\
      &\leq cq \sigma_j^{(1-\tau)/2}
      \end{split}
\end{equation*}
holds simultaneously over all $j\in J(k_n)^c$ with probability at least $1-cn^{-(2+\delta)}$. 
Using Lemma \ref{Lemma: moments closeness} and similar reasoning, it can also be shown that
\begin{equation*}
\Big|\frac{\E(\hat Y_{ij}|X')}{\hat \sigma_j^{\tau}\sqrt{n}}\Big| \leq c \sigma_j^{(1-\tau)/2}n^{-2} 
\end{equation*}
holds simultaneously over all $j\in J(k_n)^c$ with probability at least $1-cn^{-(2+\delta)}$. Combining the last several steps and Assumption \ref{A:model}(\ref{A:var}), we conclude that the bound
\begin{equation*}
\begin{split}
     \bigg\|\max_{j \in J(k_n)^c} \frac{\sum_{i=1}^n \hat  Y_{ij}}{\hat \sigma_j^{\tau} \sqrt{n}}\bigg\|_{L^q|X'}^q 
     & \ \leq \ (cq)^q \sum_{j\geq k_n} j^{-q \beta (1-\tau)/2 }\\[0.2cm]
     &\ \leq \ c^q \frac{q^q}{(1-\tau)q\beta/2-1} k_n^{1-(1-\tau)q\beta /2}
     \end{split}
\end{equation*}
holds with probability at least $1-cn^{-(2+\delta)}$, where in the second step we have used the fact that $q\beta(1-\tau) > 2$ when $n$ is sufficiently large. Also, since $q \asymp \log(n)$, we have
\begin{align*}
\Big(\frac{1}{(1-\tau)q\beta/2-1}\Big)^{1/q} \lesssim 1.
\end{align*}
Thus, there is a sufficiently large choice of $c_1>0$, such that if $s_{1,n}=c_1 q k_n^{-\beta(1-\tau)/2}$, then the bound
\begin{align*}
\P(B(s_{1,n})|X') \, \leq \, \frac{1}{s_{1,n}^q}     \bigg\|\max_{j \in J(k_n)^c} \frac{\sum_{i=1}^n \hat  Y_{ij}}{\hat \sigma_j^{\tau} \sqrt{n}}\bigg\|_{L^q|X'}^q \, \leq \,\ts e^{-q} \, \lesssim \, \ts\frac{1}{n}
\end{align*}
holds with probability at least $1-cn^{-(2+\delta)}$. This implies the stated result.\qed

\begin{proposition}
\label{Prop: practical kn}
If the conditions of Theorem~\ref{thm:main} hold, then
\begin{equation*}
\dK\Big(\L(M_{k_n}(\hat Y)),\L(M_{k_n}(Y))\Big) \lesssim n^{-\frac{1}{2}+\epsilon}.
\end{equation*}
\end{proposition}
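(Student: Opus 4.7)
The plan is to invoke the coupling inequality~\eqref{eqn:anticouple} with a slack parameter $s_n$ of order $n^{-1/2+\epsilon}$, reducing the Kolmogorov distance to the sum of a coupling probability and an anti-concentration term:
$$\dK(\L(M_{k_n}(\hat Y)),\L(M_{k_n}(Y))) \ \leq \ \sup_{t\in\R}\P(|M_{k_n}(Y)-t|\leq s_n) \ + \ \P(|M_{k_n}(\hat Y)-M_{k_n}(Y)|>s_n).$$
To retain the independence between the truncation thresholds $\hat t_j$ and the non-hold-out observations, I would work conditionally on the hold-out set $X'$ throughout, and integrate out at the end using the MOM concentration events of Lemmas~\ref{Lemma: MOM mean}--\ref{Lemma: MOM var anti-con}.

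For the coupling term, I would use~\eqref{eqn:maxdiff} to pass from a max to a per-coordinate comparison, then decompose each per-coordinate difference into three pieces: (a) a truncation-threshold term $\tfrac{1}{\hat\sigma_j^\tau\sqrt n}\sum_i(\hat Y_{ij}-Y_{ij})$, supported on the event that $|X_{ij}-\mu_j|$ lies between $t_j$ and $\hat t_j$; (b) a standardization-mismatch term $(\hat\sigma_j^{-\tau}-\sigma_j^{-\tau})\cdot n^{-1/2}\sum_i Y_{ij}$; and (c) a deterministic bias term $\sqrt n\,\E(Y_{1j})/\sigma_j^\tau$. On the high-probability event where $\hat\sigma_j^a/\sigma_j^a$ is close to $1$ for $a\in\{1,\tau\}$ uniformly in $j\in J(k_n)$, piece~(a) is controlled by the content of Lemma~\ref{Lemma: Y hatY diff}, piece~(b) by Rosenthal's inequality (Lemma~\ref{Lemma: Rosenthal's inequality}) combined with the truncation bound $|Y_{ij}|\leq\sqrt n\sigma_j$ and $\var(Y_{1j})\leq\sigma_j^2$, and piece~(c) by a direct Markov argument: with $t_j=\sqrt n\sigma_j$ and $\E(X_{1j}-\mu_j)=0$, Assumption~\ref{A:model}(\ref{A:moments}) yields $|\E(Y_{1j})|\leq\E(|X_{1j}-\mu_j|\mathbf{1}\{|X_{1j}-\mu_j|>t_j\})\lesssim\sigma_j\,n^{-(3+\delta)/2}$, so the contribution from (c) is $\lesssim\sigma_j^{1-\tau}n^{-1-\delta/2}$, which is negligible compared with $s_n$.

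For the anti-concentration term, I would transfer to the Gaussian counterpart using~\eqref{eqn:anticonc_decomp}:
$$\sup_{t\in\R}\P(|M_{k_n}(Y)-t|\leq s_n) \ \leq \ \sup_{t\in\R}\P(|\tilde M_{k_n}(X)-t|\leq s_n) \ + \ 2\,\dK(\L(M_{k_n}(Y)),\L(\tilde M_{k_n}(X))).$$
The Kolmogorov term is $O(n^{-1/2+\epsilon})$ by Proposition~\ref{prop: Ykn-tildeXkn}. For the Gaussian term, the marginals $G_j(X)/\sigma_j^\tau$ are centered Gaussians whose variances $\sigma_j^{2(1-\tau)}$ have a controlled range over $j\in J(k_n)$ by Assumption~\ref{A:model}(\ref{A:var}), so Nazarov's inequality yields a bound of order $s_n\sqrt{\log k_n}$, which is absorbed into the $n^\epsilon$ factor since $\log k_n\lesssim\log n$.

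The main obstacle will be piece~(a): because $\hat t_j=\sqrt n\hat\sigma_j$ is random, the contribution from the small window between $t_j$ and $\hat t_j$ has to be controlled simultaneously over all $k_n$ coordinates while respecting the limited $L^{4+\delta}$-$L^2$ moment budget. The strategy is to condition on $X'$, pin $\hat t_j$ inside a deterministic window $[(1-\eta_n)t_j,(1+\eta_n)t_j]$ with $\eta_n\to 0$ on a high-probability event via the MOM concentration, and then bound the resulting fixed-threshold window contribution with Markov's inequality applied to the $4+\delta$ moments---this is essentially what Lemma~\ref{Lemma: Y hatY diff} is designed to carry out.
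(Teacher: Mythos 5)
Your proposal is correct and takes essentially the same route as the paper. The paper begins with \eqref{eqn:anticouple} applied to $U=M_{k_n}(\hat Y)$, $V=M_{k_n}(Y)$, then uses \eqref{eqn:anticonc_decomp} to transfer the resulting anti-concentration term from $M_{k_n}(Y)$ to $\tilde M_{k_n}(X)$, producing a Nazarov term, the Kolmogorov distance $\dK(\L(\tilde M_{k_n}(X)),\L(M_{k_n}(Y)))$ controlled by Proposition~\ref{prop: Ykn-tildeXkn}, and a coupling probability; this is exactly your scheme. The paper's coupling bound (in Lemma~\ref{Lemma: hat Y and Y Mkn}) uses precisely your three-piece split: a truncation-threshold difference handled by Lemma~\ref{Lemma: Y hatY diff}, a standardization-mismatch factor controlled by MOM concentration times a Rosenthal bound on $\max_j|n^{-1/2}\sum_i Y_{ij}/\sigma_j^\tau|$, and a deterministic bias term controlled by Lemma~\ref{Lemma: moments closeness} (the paper uses $L^4$-$L^{4/3}$ H\"older giving $|\E(Y_{1j})|\lesssim\sigma_j n^{-3/2}$, whereas your $L^{4+\delta}$ variant gives $\sigma_j n^{-(3+\delta)/2}$; both are ample).

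One arithmetic slip worth fixing: Nazarov's bound scales like $s_n\,\underline\sigma^{-1}\sqrt{\log k_n}$ with $\underline\sigma=\min_{j\in J(k_n)}\sigma_j^{1-\tau}\asymp k_n^{-\beta(1-\tau)}$, so the correct order is $s_n\,k_n^{\beta(1-\tau)}\sqrt{\log k_n}$, not $s_n\sqrt{\log k_n}$. The factor $k_n^{\beta(1-\tau)}\asymp n^{O(\epsilon)}$ is polynomial in $n$, so with your $s_n\asymp n^{-1/2+\epsilon}$ the bound overshoots the target. This is why the paper takes the slack parameter equal to $n^{-1/2+3\epsilon/4}\log(n)$, leaving an $n^{\epsilon/4}$ margin to absorb both $k_n^{\beta(1-\tau)}\leq l_n^{5\beta}\leq n^{5\epsilon/24}$ and the log factors. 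Adopting that choice repairs the argument with no change to the structure.
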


\noindent \emph{Proof.} Using the decomposition~\eqref{eqn:anticouple} for the Kolmogorov metric, followed by the bound for anti-concentration probabilities in~\eqref{eqn:anticonc_decomp}, we have
\begin{align*}
\dK(\L(M_{k_n}(\hat Y)),\L(M_{k_n}(Y))) \leq & \  \sup_{t \in \R} \P \big(|\tilde M_{k_n}(X)-t| \leq n^{-\frac{1}{2}+\frac{3\epsilon}{4}} \log(n)\big)\\
& \ + \  2\dK(\L(\tilde M_{k_n}(X)),\L(M_{k_n}(Y))) \\
& \ + \  \P\Big(\big|M_{k_n}(\hat Y) - M_{k_n}(Y)\big| \geq n^{-\frac{1}{2}+\frac{3\epsilon}{4}} \log(n) \Big).
\end{align*}
For the first term on the right side, Nazarov's inequality (Lemma \ref{Lemma: Nazarov's inequality}) and Assumption 1(\ref{A:var}) imply 
\begin{equation}\label{Equa: MknX nazarov}
\begin{split}
\sup_{t\in\R} \P \Big(| \tilde M_{k_n}(X)-t| \leq n^{-\frac{1}{2}+\frac{3\epsilon}{4}} \log(n)\Big)
&  \lesssim \, n^{-\frac{1}{2}+\frac{3\epsilon}{4}} \log(n) k_n^{\beta(1-\tau)} \sqrt{\log(k_n)} \\
& \lesssim \ n^{-\frac{1}{2}+\epsilon}.
\end{split}
\end{equation}
Next, it follows from Proposition \ref{prop: Ykn-tildeXkn} that
$$ \dK\Big(\L(\tilde M_{k_n}(X)),\L(M_{k_n}(Y))\Big) \ \lesssim \ n^{-\frac{1}{2}+\epsilon}.$$
Finally, Lemma~\ref{Lemma: hat Y and Y Mkn} implies
$$ \P\Big(\big|M_{k_n}(\hat Y) - M_{k_n}(Y)\big| \geq n^{-\frac{1}{2} + \frac{\epsilon}{2}} \log(n) \Big) \lesssim \ n^{-\frac{1}{2}+\epsilon},$$
which completes the proof.
\qed

\begin{lemma}\label{Lemma: hat Y and Y Mkn}
If the conditions of Theorem~\ref{thm:main} hold, then
\begin{align*}
\P\Big(\big|M_{k_n}(\hat Y) - M_{k_n}(Y)\big| \geq  n^{-1/2+\epsilon/2} \log(n)\Big) 
\lesssim n^{-1/2+\epsilon}.     
\end{align*}
\end{lemma}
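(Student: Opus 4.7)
The plan is to reduce the difference between the two max statistics to a maximum over $j\in J(k_n)$ of a per-coordinate quantity, and then decompose that quantity into three controllable pieces. Using $|\max_j a_j-\max_j b_j|\leq \max_j|a_j-b_j|$, I would write
\begin{align*}
 \big|M_{k_n}(\hat Y)-M_{k_n}(Y)\big| \ \leq \ \max_{j\in J(k_n)} \big|T^A_j+T^B_j+T^C_j\big|,
\end{align*}
where $T^A_j = \tfrac{1}{\hat\sigma_j^\tau\sqrt n}\sum_{i}(\hat Y_{ij}-Y_{ij})$, $T^B_j = \big(\hat\sigma_j^{-\tau}-\sigma_j^{-\tau}\big)\tfrac{1}{\sqrt n}\sum_{i} Y_{ij}$, and $T^C_j = \tfrac{\sqrt n}{\sigma_j^\tau}\E Y_{1j}$. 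These pieces respectively capture the mismatch in truncation level ($\hat t_j$ vs.\ $t_j$), the mismatch in standardization, and the truncation bias that has to be compensated because $M_{k_n}(\hat Y)$ is uncentered whereas $M_{k_n}(Y)$ is centered.

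For $T^A_j$, the key identity is $\varphi_{\hat t_j}(x)=\varphi_{t_j}(x)$ for $|x|\leq \min(t_j,\hat t_j)$, together with the 1-Lipschitz property $|\varphi_{\hat t_j}(x)-\varphi_{t_j}(x)|\leq |\hat t_j-t_j|$, giving $|\hat Y_{ij}-Y_{ij}|\leq |\hat t_j-t_j|\,\mathbf{1}\{|X_{ij}|>\min(t_j,\hat t_j)\}$. I would condition on the hold-out set $X'$, use Lemma~\ref{Lemma: MOM var anti-con} to secure the event $\{\hat\sigma_j\geq \sigma_j/2 \text{ for all } j\in J(k_n)\}$, and then apply Assumption~\ref{A:model}(\ref{A:moments}) with Markov's inequality to obtain $\P(|X_{1j}|>\sqrt n\sigma_j/2)\lesssim n^{-(4+\delta)/2}$. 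A union bound over the $n k_n$ pairs $(i,j)$ (recalling $k_n\lesssim n^{5\epsilon/24}$) then shows that with probability $1-cn^{-\delta/4}$ no truncation exceedance occurs, so $T^A_j\equiv 0$ for all $j\in J(k_n)$. For $T^B_j$, the mean value theorem yields $|\hat\sigma_j^{-\tau}-\sigma_j^{-\tau}|\lesssim \sigma_j^{-\tau-1}|\hat\sigma_j-\sigma_j|$ on the same event, and combining this with the MOM concentration $|\hat\sigma_j-\sigma_j|\lesssim \sigma_j\sqrt{\log(n)/n}$ (from Lemmas~\ref{Lemma: MOM mean}--\ref{Lemma: MOM var anti-con}) and a Bernstein-type bound on $\tfrac{1}{\sqrt n}\sum_i Y_{ij}$ (whose summands are bounded by $\sqrt n\sigma_j$ with variance $\leq\sigma_j^2$) produces $|T^B_j|\lesssim \sigma_j^{1-\tau}\log(n)/\sqrt n$ uniformly in $j$. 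Finally, for $T^C_j$ I would use $\E X_{1j}=0$ to write $\E Y_{1j}=-\E[X_{1j}-\varphi_{t_j}(X_{1j})]$ and bound this via Hölder with the $L^{4+\delta}$ moment assumption (cf.\ Lemma~\ref{Lemma: moments closeness}) to get $|\E Y_{1j}|\lesssim \sigma_j n^{-(3+\delta)/2}$, so $|T^C_j|\lesssim \sigma_j^{1-\tau}n^{-1-\delta/2}$, which is negligible.

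Combining the three bounds and taking the maximum via an $L^q$-norm argument with $q\asymp\log(n)$ absorbs the $k_n$ factor into only a polylogarithmic overhead, yielding $|M_{k_n}(\hat Y)-M_{k_n}(Y)|\lesssim n^{-1/2}\log(n) \leq n^{-1/2+\epsilon/2}\log(n)$ on an event of probability at least $1-cn^{-1/2+\epsilon}$. The main technical obstacle is the piece $T^A_j$: a naïve application of the Lipschitz bound $|\hat Y_{ij}-Y_{ij}|\leq\sqrt n|\hat\sigma_j-\sigma_j|$ and summation over $i$ produces an unusable $n\sqrt n|\hat\sigma_j-\sigma_j|$ prefactor, so it is essential to exploit the rarity of truncation exceedances and to leverage the independence between the hold-out estimator $\hat\sigma_j$ and the observations $X_1,\ldots,X_n$ by conditioning on $X'$; the remaining pieces are then routine applications of MOM concentration, Bernstein's inequality, and moment-truncation bounds.
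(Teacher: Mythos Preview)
Your proposal is correct and follows essentially the same three-term decomposition as the paper (truncation mismatch, standardization mismatch, truncation bias), handling each piece with the same key ideas; the only minor differences are cosmetic (you invoke Bernstein where the paper uses Rosenthal with $q\asymp\log n$, and the event $\{\hat\sigma_j\geq\sigma_j/2\}$ should be drawn from Lemma~\ref{Lemma: MOM variance} rather than Lemma~\ref{Lemma: MOM var anti-con}). Note also that the paper's MOM lemma only delivers the rate $\sigma_j n^{-1/2+\epsilon/2}$ for $|\hat\sigma_j-\sigma_j|$, not your stated $\sigma_j\sqrt{\log(n)/n}$, but this weaker rate is exactly what produces the target threshold $n^{-1/2+\epsilon/2}\log(n)$, so the argument goes through unchanged.
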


\noindent \emph{Proof.} First observe that $\big|M_{k_n}(\hat Y) - M_{k_n}(Y)\big|$ can be bounded by
\begin{equation}\label{eqn:hatY-Y-coupling}
\begin{split}
    & \max_{j \in J(k_n)}\!\bigg|\frac{\sum_{i=1}^n \hat Y_{ij}-Y_{ij}}{\hat \sigma_j^{\tau} \sqrt{n}}\bigg|  
    + \max_{j \in J(k_n)}\!\bigg|\frac{\sum_{i=1}^n Y_{ij}}{\hat \sigma_{j}^{\tau} \sqrt n} -\frac{\sum_{i=1}^n Y_{ij}}{\sigma_{j}^{\tau} \sqrt n} \bigg|  
    +  \max_{j\in J(k_n)}\!\frac{\sqrt n |\E(Y_{1j})|}{\sigma_j^{\tau}}.
    \end{split}
\end{equation}
The first term in the bound \eqref{eqn:hatY-Y-coupling} is $0$ with probability at least $1-\frac{ck_n}{n}$ by Lemma \ref{Lemma: Y hatY diff}, and the (deterministic) third term is $\mathcal{O}(n^{-1})$ by Lemma \ref{Lemma: moments closeness}.

It remains to handle the middle term in the bound \eqref{eqn:hatY-Y-coupling}, which satisfies 
\begin{align}
\label{eqn:twofactors}
\max_{j \in J(k_n)} \big|\frac{\sum_{i=1}^n Y_{ij}}{\hat \sigma_{j}^{\tau} \sqrt n} -\frac{\sum_{i=1}^n Y_{ij}}{\sigma_{j}^{\tau} \sqrt n} \big| 
\leq \max _{j \in J(k_n)}\Big|\Big(\frac{{\sigma}_{j}}{\hat \sigma_{j}}\Big)^{\tau}-1\Big| \cdot \max_{j\in J(k_n)} \bigg|\frac{\sum_{i=1}^n Y_{ij}}{\sqrt{n} \sigma_j^{\tau}}\bigg|.
\end{align}
Since $|a^{\tau}-1|\leq |a^2-1|$ for any $a\geq 0$ and $\tau\in [0,1)$, the first factor on the right is of order $n^{-1/2+\epsilon/2}$ with probability at least $1-cn^{-(2+\delta)}$ by Lemma \ref{Lemma: MOM variance}\eqref{Equa: MOM mean 1 sum}.

Next we will show the second factor in the bound~\eqref{eqn:twofactors} is of order $\log(n)$ with probability at least $1-\frac{c}{n}$. Let $q=2\vee\log(n )$ and observe that under Assumption \ref{A:model}(\ref{A:var}), Lemma \ref{Lemma: moments closeness} implies
$$
\bigg\| \max_{j \in J(k_n)} \Big|\frac{\sum_{i=1}^n Y_{ij} }{\sqrt{n} \sigma_j^{\tau}} \Big| \bigg\|_{L^q} \ \lesssim \  \bigg\| \max_{j \in J(k_n)} \Big|\sum_{i=1}^n \frac{ Y_{ij} - \E(Y_{ij})}{\sqrt{n} \sigma_{j}^{\tau}} \Big| \bigg\|_{L^q} + \frac{1}{n}.
$$
Furthermore, Lemma \ref{Lemma: Lq of Snj} gives
\begin{align*}
\bigg\| \max_{j \in J(k_n)} \bigg|\sum_{i=1}^n \frac{ Y_{ij} - \E(Y_{ij})}{\sqrt{n}\sigma_{j}^{\tau}} \bigg| \bigg\|_{L^q}^q 
& \leq \sum_{j \in J(k_n)} \Big\| \sum_{i=1}^n \frac{ Y_{ij} - \E(Y_{ij})}{\sqrt{n} \sigma_{j}^{\tau}} \Big\|_{L^q}^q \\
&  \leq c^q q^q  \sum_{j \in J(k_n)} \sigma_{j}^{(1-\tau)q}\\
& \leq c^q q^q.
\end{align*}
Therefore, Chebyshev's inequality implies that the second factor in the bound~\eqref{eqn:twofactors} is of order $\log(n)$ with probability $1-\frac{c}{n}$.
\qed

\begin{lemma}\label{Lemma: Y hatY diff}
If the conditions of Theorem~\ref{thm:main} hold, then
$$
\P \Big( \max_{j \in J(k_n)} \max_{1\leq i\leq n} |Y_{ij}-\hat Y_{ij}| > 0 \Big) \, \lesssim \, \frac{k_n}{n}.
$$
\end{lemma}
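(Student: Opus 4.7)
The key observation is that, coordinate-wise, $Y_{ij} = \varphi_{t_j}(X_{ij})$ and $\hat Y_{ij} = \varphi_{\hat t_j}(X_{ij})$ (using the convention $\mu = 0$ from page~\pageref{sigmaconvention}) agree whenever $|X_{ij}| \leq t_j \wedge \hat t_j$, because both truncation functions act as the identity below their thresholds. So the bad event $\{Y_{ij}\neq \hat Y_{ij}\}$ forces $|X_{ij}| > t_j\wedge \hat t_j$. The plan is to first show that $\hat t_j$ is comparable to $t_j = \sqrt n\,\sigma_j$ uniformly over $j \in J(k_n)$ on a high probability event, and then invoke the $L^{4+\delta}$ moment bound to argue that no $|X_{ij}|$ exceeds that common scale $\sqrt n\,\sigma_j/2$, uniformly in $i \in \{1,\dots,n\}$ and $j \in J(k_n)$.

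\textbf{Step 1 — Comparability of $\hat t_j$ and $t_j$.} Using the \textsc{mom} concentration properties already invoked in the proofs of Proposition~\ref{Prop: practical kn} (cf.\ Lemma~\ref{Lemma: MOM variance}), the event $E := \{\hat\sigma_j \geq \sigma_j/2 \text{ for all } j \in J(k_n)\}$ holds with probability at least $1 - O(k_n\,n^{-(2+\delta)})$, which is $o(k_n/n)$. On $E$, one has $\hat t_j = \sqrt n\,\hat\sigma_j \geq \sqrt n\,\sigma_j/2$, and hence $t_j\wedge\hat t_j \geq \sqrt n\,\sigma_j/2$.

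\textbf{Step 2 — Tail bound from moment equivalence.} Applying Assumption~\ref{A:model}(\ref{A:moments}) with $v = e_j$ gives $\|X_{1j}\|_{L^{4+\delta}} \leq C\sigma_j$. Markov's inequality then yields
\[
\P\bigl(|X_{ij}| > \tfrac{1}{2}\sqrt n\,\sigma_j\bigr) \ \leq \ \frac{\|X_{ij}\|_{L^{4+\delta}}^{4+\delta}}{(\sqrt n\,\sigma_j/2)^{4+\delta}} \ \leq \ (2C)^{4+\delta}\,n^{-(2+\delta/2)}.
\]
A union bound over $i\in\{1,\dots,n\}$ and $j\in J(k_n)$ shows that, off an event of probability at most
\[
n\cdot k_n\cdot (2C)^{4+\delta}\,n^{-(2+\delta/2)} \ = \ (2C)^{4+\delta}\,k_n\,n^{-(1+\delta/2)} \ \lesssim \ \frac{k_n}{n},
\]
we have $|X_{ij}| \leq \sqrt n\,\sigma_j/2$ simultaneously for every such $(i,j)$.

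\textbf{Step 3 — Combine.} On the intersection of $E$ with the uniform tail bound of Step~2, we obtain $|X_{ij}| \leq t_j\wedge\hat t_j$ for all $i \leq n$ and $j\in J(k_n)$, and hence $Y_{ij} = X_{ij} = \hat Y_{ij}$ for every such pair. A final union bound gives
\[
\P\Bigl(\max_{j\in J(k_n)}\max_{1\leq i\leq n}|Y_{ij}-\hat Y_{ij}|>0\Bigr) \ \leq \ \P(E^c) \,+\, \P\bigl(\exists\,(i,j):|X_{ij}|>\tfrac{1}{2}\sqrt n\,\sigma_j\bigr) \ \lesssim \ \frac{k_n}{n}.
\]

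\textbf{Main obstacle.} There is no substantive difficulty: the argument is a clean union bound. The only mild care required is to invoke Lemma~\ref{Lemma: MOM variance} at a confidence level that dominates the target rate $k_n/n$, and to notice that scaling the truncation threshold as $\sqrt n\,\sigma_j$ is exactly what converts the $L^{4+\delta}$ moment inequality into a per-entry tail of order $n^{-(2+\delta/2)}$, which leaves room for a union bound over the $n k_n$ pairs.
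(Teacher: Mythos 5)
Your proposal is correct and follows essentially the same route as the paper: reduce the bad event to $|X_{ij}| > t_j \wedge \hat t_j$, use Lemma~\ref{Lemma: MOM variance}\eqref{Equa: MOM variance 1 sum} to get $\hat t_j \geq t_j/2$ uniformly over $j \in J(k_n)$ with high probability, then union bound over the $nk_n$ entries with a Markov tail bound derived from Assumption~\ref{A:model}(\ref{A:moments}). The only cosmetic difference is that you apply Markov with exponent $4+\delta$ (giving a per-entry tail $\lesssim n^{-(2+\delta/2)}$), where the paper uses the fourth moment ($\E|X_{1j}|^4/t_j^4 \lesssim n^{-2}$); both yield the claimed $\lesssim k_n/n$.
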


\noindent \emph{Proof.} First observe that for each $i$ and $j$, the definitions of $Y_{ij}$ and $\hat Y_{ij}$ give
$$ \P \big(|Y_{ij}-\hat Y_{ij}| > 0 \big) \ \leq \ \P\Big(|X_{ij}|>\min\{t_j,\hat t_j\}\Big).$$
Recalling that $\hat t_j=\hat\sigma_j\sqrt n$ and $t_j=\sigma_j\sqrt n$, the events $\{\hat t_j > t_j/2\}$ for $j\in J(k_n)$ occur simultaneously with probability at least $1-cn^{-(2+\delta)}$ by Lemma \ref{Lemma: MOM variance}\eqref{Equa: MOM variance 1 sum}. Combining this with a union bound over $i=1,\dots,n$, we have 
\begin{equation}
\begin{aligned}
     \P \Big(\max_{1\leq i\leq n}|Y_{ij}-\hat Y_{ij}| > 0 \Big) 
     & \ \lesssim \ \sum_{i=1}^n\Big( \P\big(|X_{1j}|> \textstyle\frac{t_j}{2}\big) \ + \ n^{-(2+\delta)} \Big)\\
     & \ \lesssim \ \frac{n\,\E|X_{1j}|^4}{t_j^4}+ \ n^{-(1+\delta)}\\[0.2cm]
     &\ \lesssim \, \frac{1}{n},
\end{aligned} 
\label{equa: YhatY positive}
\end{equation}
where the last step uses Assumption~\ref{A:model}(\ref{A:moments}). Finally, taking a union bound over $j\in J(k_n)$,
$$
\P \Big( \max_{j \in J(k_n)} \max_{1\leq i\leq n} |Y_{ij}-\hat Y_{ij}| > 0 \Big) \lesssim \frac{k_n}{n}
$$
as needed.
\qed

\begin{lemma}\label{lem:couple_MkY_MkX}
If the conditions of Theorem~\ref{thm:main} hold, then
\begin{equation*}
\P\Big(|M_{k_n}(Y)-M_{k_n}(X)| \geq  n^{-\frac{1}{2}}\Big)
\ \lesssim \  n^{-\frac{1}{2}+\epsilon},
\end{equation*}
\end{lemma}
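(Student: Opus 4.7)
The plan is to argue that with high probability the truncation $\varphi_{t_j}$ acts trivially on every coordinate indexed by $J(k_n)$, so that $M_{k_n}(Y)$ and $M_{k_n}(X)$ differ only through a small deterministic centering term. Using inequality~\eqref{eqn:maxdiff} applied coordinate-wise (and recalling the convention $\mu_j=0$),
\[
|M_{k_n}(Y)-M_{k_n}(X)| \ \leq \ \max_{j\in J(k_n)}\bigg|\frac{1}{\sigma_j^{\tau}\sqrt{n}}\sum_{i=1}^n \big(Y_{ij}-\E(Y_{ij})-X_{ij}\big)\bigg|.
\]

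I would then introduce the event $E=\{Y_{ij}=X_{ij}\text{ for all } j\in J(k_n),\ i=1,\dots,n\}$. On $E$ the difference $Y_{ij}-X_{ij}$ vanishes identically, so the displayed bound reduces to the deterministic quantity $\max_{j\in J(k_n)}\sqrt{n}\,|\E(Y_{1j})|/\sigma_j^{\tau}$. By the same application of Lemma~\ref{Lemma: moments closeness} that already appears in the proofs of Proposition~\ref{Prop: practical kn} and Lemma~\ref{Lemma: hat Y and Y Mkn}, this quantity is $\mathcal{O}(n^{-1})$, which is eventually smaller than $n^{-1/2}$. Consequently,
\[
\P\big(|M_{k_n}(Y)-M_{k_n}(X)|\geq n^{-1/2}\big) \ \leq \ \P(E^c).
\]

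To bound $\P(E^c)$, note that $\varphi_{t_j}(x)=x$ whenever $|x|\leq t_j=\sigma_j\sqrt{n}$, so $\{Y_{ij}\neq X_{ij}\}\subseteq\{|X_{ij}|>\sigma_j\sqrt{n}\}$. Chebyshev's inequality at order $4+\delta$, combined with Assumption~\ref{A:model}(\ref{A:moments}) applied to the coordinate vector $e_j$, gives $\P(|X_{1j}|>\sigma_j\sqrt{n})\lesssim n^{-(2+\delta/2)}$. A union bound over $i=1,\dots,n$ and $j\in J(k_n)$ yields
\[
\P(E^c) \ \lesssim \ \frac{k_n}{n^{1+\delta/2}},
\]
and since $k_n=l_n^5\wedge p$ with $l_n\leq n^{\epsilon/(24(\beta\vee 1))}$, the right-hand side is polynomially smaller than $n^{-1/2+\epsilon}$.

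The argument is essentially a quantitative no-truncation event analogous to the one used in Lemma~\ref{Lemma: Y hatY diff}, combined with the centering estimate from Lemma~\ref{Lemma: moments closeness}, so there is no serious obstacle. The only subtlety is that although $\E(X_{1j})=0$, one does \emph{not} have $\E(Y_{1j})=0$; however, the $L^{4+\delta}$ moment control ensures $|\E(Y_{1j})|\lesssim\sigma_j\, n^{-(3+\delta)/2}$, which delivers the required $n^{-1}$ bound on the residual bias term after standardization.
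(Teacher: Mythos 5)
Your proof is correct, but it follows a genuinely different route from the paper's. The paper does not split on a ``no-truncation'' event. Instead it introduces $\Delta_{ij}=\frac{1}{\sigma_j^\tau\sqrt n}\big(Y_{ij}-\E(Y_{ij})-X_{ij}\big)$, computes $\E|\Delta_{ij}|\lesssim n^{-2+\epsilon/2}$ via H\"older's inequality applied to $|X_{ij}|\mathbbm{1}\{|X_{ij}|>t_j\}$ with conjugate exponents $\tfrac{4}{1+\epsilon}$ and $\tfrac{4}{3-\epsilon}$ (together with Lemma~\ref{Lemma: moments closeness}), and then closes with Markov's inequality and a union bound over $j\in J(k_n)$, yielding $k_n n^{-1/2+\epsilon/2}\lesssim n^{-1/2+\epsilon}$. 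Your argument replaces this first-moment-plus-Markov step with a direct union bound on the event that truncation ever activates: using $L^{4+\delta}$ and Chebyshev you bound $\P(\exists\, i,\, j\in J(k_n): Y_{ij}\neq X_{ij})\lesssim k_n n^{-(1+\delta/2)}$, and on the complement the coupling error collapses to the deterministic bias $\max_{j}\sqrt n\,|\E(Y_{1j})|/\sigma_j^\tau\lesssim n^{-1}<n^{-1/2}$. Both routes are valid under Assumption~\ref{A:model}(\ref{A:moments}); yours leans more heavily on the extra $\delta>0$ of integrability and produces a sharper failure probability, while the paper's version needs only $L^4$ control in the main coupling estimate (the factor $n^{-1/2+\epsilon}$ in its conclusion comes from the Markov step rather than from the tail of $X_{1j}$), which makes it somewhat more economical in its use of moments. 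Your scheme is essentially the same one the paper applies in Lemma~\ref{Lemma: Y hatY diff} (for $\hat Y$ versus $Y$), and you have correctly noted the one subtlety: $\E(Y_{1j})\neq 0$ even though $\mu_j=0$, so the on-event bound involves the bias term, which Lemma~\ref{Lemma: moments closeness} controls at $\sigma_j n^{-3/2}$ (your quoted $n^{-(3+\delta)/2}$ is attainable with H\"older at $L^{4+\delta}$ but not needed --- the paper's $n^{-3/2}$ already suffices since $\sigma_j\leq 1$).
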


\noindent\emph{Proof.} For each $i=1,\dots,n$ and $j \in J(k_n)$, let
$$\Delta_{ij}= \frac{1}{\sigma_{j}^{\tau}\sqrt n}\Big(Y_{ij}-\E(Y_{ij})-X_{ij}\Big),$$ 
so that
\begin{align*}
|M_{k_n}(Y)-M_{k_n}(X)| \ \leq \ \max_{j\in J(k_n)}\Big|\sum_{i=1}^n \Delta_{ij}\Big|.  
\end{align*}
Noting that $Y_{ij}$ and $X_{ij}$ only differ when $|X_{ij}|> t_j$, we have
\begin{align*}
\E \big(|\Delta_{ij}|\big) 
& \ \lesssim \  \E\Big({\ts\frac{1}{\sqrt{n} \sigma_j^{\tau}}} |X_{ij}| 1\{|X_{ij}|>t_j\}\Big) + \frac{1}{\sqrt{n}\sigma_j^{\tau}} \big|\E(Y_{ij})\big| \\
 & \ \lesssim \ \frac{1}{\sqrt{n} \sigma_j^{\tau}}\|X_{ij}\|_{L^{\frac{4}{1+\epsilon}}} \|1\{|X_{ij}|>t_j\}\|_{L^{\frac{4}{3-\epsilon}}} + n^{-2}\\
& \ \lesssim \ n^{-2+\frac{\epsilon}{2}}.
\end{align*}
where we have used H\"older's inequality and Lemma \ref{Lemma: moments closeness} in the first step, followed by Chebyshev's inequality in bounding  $\|1\{|X_{ij}|>t_j\}\|_{L^{\frac{4}{3-\epsilon}}}$. Therefore,
\begin{align*}
\P\Big(\max_{j\in J(k_n)}\big|\sum_{i=1}^n \Delta_{ij}\big| \geq n^{-\frac{1}{2}}\Big) \
& \ \leq \ \sum_{j\in J(k_n)}  \P\Big(\big|\sum_{i=1}^n \Delta_{ij}\big| \geq n^{-\frac{1}{2}}\Big) \\
& \ \lesssim \ k_n n^{-\frac{1}{2}+\frac{\epsilon}{2}}\\
& \ \lesssim \ n^{-\frac{1}{2}+\epsilon},
\end{align*}
as needed. \qed

\begin{proposition}
\label{prop: Ykn-tildeXkn}
If the conditions of Theorem~\ref{thm:main} hold, then
$$ \dK(\L(M_{k_n}(Y)), \L(\tilde M_{k_n}(X))) \ \lesssim \ n^{-\frac{1}{2}+\epsilon}.$$
\end{proposition}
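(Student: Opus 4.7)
My plan is to bridge $M_{k_n}(Y)$ and $\tilde M_{k_n}(X)$ through an intermediate Gaussian max whose covariance matches that of the truncated summands. Let $H=(H_j)_{j\in J(k_n)}$ be a centered Gaussian vector whose covariance matrix matches that of $(Y_{1j})_{j\in J(k_n)}$, and apply the triangle inequality for $\dK$:
\begin{align*}
\dK\bigl(\L(M_{k_n}(Y)),\L(\tilde M_{k_n}(X))\bigr)
\ \leq\ & \dK\Bigl(\L(M_{k_n}(Y)),\L\bigl(\max_{j\in J(k_n)} H_j/\sigma_j^\tau\bigr)\Bigr)\\
& +\ \dK\Bigl(\L\bigl(\max_{j\in J(k_n)} H_j/\sigma_j^\tau\bigr),\L(\tilde M_{k_n}(X))\Bigr).
\end{align*}

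For the first term (Gaussian approximation), the centered truncated coordinates $(Y_{ij}-\E Y_{ij})/\sigma_j^{\tau}$ are bounded by $c\,\sigma_j^{1-\tau}\sqrt{n}$, have variances of order $\sigma_j^{2(1-\tau)}\asymp j^{-2\beta(1-\tau)}$ across $j\in J(k_n)$ by Assumption~\ref{A:model}(\ref{A:var}), and live in a small index set of cardinality $k_n\leq l_n^5\lesssim n^{5\epsilon/(24(\beta\vee 1))}$. These are exactly the structural features that drive the near-parametric high-dimensional CLTs for max statistics under weak variance decay developed in \cite{lopes2020bootstrapping,Lopes:2022:AOS,CCK:2023,Koike:2024}. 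Applying such a result to the i.i.d.\ sums of bounded vectors indexed by $J(k_n)$ yields a Kolmogorov rate of $n^{-1/2}$ up to polylogarithmic factors and a small polynomial in $k_n$, all of which can be absorbed into the $n^\epsilon$ slack.

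For the second term (Gaussian comparison), it suffices to show that the rescaled covariance matrices with entries $\Cov(Y_{1j},Y_{1j'})/(\sigma_j^\tau\sigma_{j'}^\tau)$ and $\Cov(X_{1j},X_{1j'})/(\sigma_j^\tau\sigma_{j'}^\tau)$ are close in max-norm. Writing $X_{ij}=Y_{ij}+(X_{ij}-Y_{ij})$ and noting that $X_{ij}-Y_{ij}$ vanishes unless $|X_{ij}|>t_j=\sigma_j\sqrt n$, H\"older's inequality combined with Assumption~\ref{A:model}(\ref{A:moments}) and Chebyshev controls both the truncation defect $\E(X_{1j}X_{1j'})-\E(Y_{1j}Y_{1j'})$ and the bias term $\E(Y_{1j})\E(Y_{1j'})$ (the latter estimated by Lemma~\ref{Lemma: moments closeness}) at a rate comfortably below $n^{-1/2+\epsilon}$ per entry. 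Feeding this entrywise covariance closeness into the Chernozhukov-Chetverikov-Kato Gaussian max-comparison inequality, together with Nazarov's anti-concentration bound for $\tilde M_{k_n}(X)$ applied as in equation~\eqref{Equa: MknX nazarov}, converts this into the required $n^{-1/2+\epsilon}$ Kolmogorov distance.

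I expect the first step to be the main obstacle. The naive bounded-data CCK rate is only $n^{-1/6}$, so one must invoke (or adapt) a theorem that explicitly leverages the polynomial variance decay within $J(k_n)$ to shave the rate down to nearly $n^{-1/2}$. The delicate bookkeeping is tracking how the partial-standardization parameter $\tau$, the envelope bound $\sqrt n$ on each coordinate, and the slowly growing cardinality $k_n=n^{\mathcal{O}(\epsilon)}$ interact with the polylogarithmic factors in the chosen theorem, so that everything collapses into the single $n^\epsilon$ buffer. An alternative route would be to first apply Lemma~\ref{lem:couple_MkY_MkX} to replace $M_{k_n}(Y)$ by $M_{k_n}(X)$ at cost $n^{-1/2+\epsilon}$ and then approximate the low-dimensional $M_{k_n}(X)$ directly using an $L^{4+\delta}$-type high-dimensional CLT; either route reduces the proposition to a single high-dimensional Gaussian approximation that exploits variance decay and the smallness of $k_n$.
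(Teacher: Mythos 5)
Your final ``alternative route'' --- replace $M_{k_n}(Y)$ by $M_{k_n}(X)$ via the coupling Lemma~\ref{lem:couple_MkY_MkX} (together with an anti-concentration step), and then approximate $M_{k_n}(X)$ directly by the Gaussian maximum --- is essentially the paper's proof, with the Gaussian approximation realized in Lemma~\ref{Lemma: Gaussian app X} via Bentkus' Berry--Esseen theorem over convex sets (Lemma~\ref{Lemma: Bentkus berry esseen}). Your primary route, however, has two genuine gaps.

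First, in the Gaussian approximation of $M_{k_n}(Y)$ by $\max_{j\in J(k_n)} H_j/\sigma_j^\tau$, you invoke ``near-parametric high-dimensional CLTs under weak variance decay'' from \cite{lopes2020bootstrapping,Lopes:2022:AOS,CCK:2023,Koike:2024}. That is a misdiagnosis of the actual mechanism: once the max has been localized to $J(k_n)$, variance decay plays no further role, because $k_n\leq l_n^5\leq n^{5\epsilon/(24(\beta\vee 1))}$ is polynomially small in $n^\epsilon$, so a plain multivariate Berry--Esseen theorem over convex sets already delivers $r^{1/4}\E\|V_1\|_2^3/\sqrt{n}\lesssim k_n^{7/4}/\sqrt{n}\lesssim n^{-1/2+\epsilon}$. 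Moreover the cited CLTs are established under light-tail assumptions and would not directly apply to the $\sqrt{n}$-growing envelope of the truncated $Y_{ij}$. Finally, the Bentkus argument run on $Y$ rather than $X$ requires a bound on $\E\|V_1\|_2^3$ for the whitened vectors, which boils down to $L^{3}$-$L^2$ moment equivalence for linear functionals $\langle v, Y_1\rangle$; this is \emph{not} a hypothesis of Assumption~\ref{A:model}(\ref{A:moments}) (which is stated for $X$), and the coordinate-wise truncation $\varphi_{t_j}$ does not automatically preserve moment equivalence along all directions $v$. The paper sidesteps this entirely by first passing from $Y$ to $X$ and only then applying Bentkus, where the required moment bound is immediate from Assumption~\ref{A:model}(\ref{A:moments}).

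Second, in the Gaussian comparison step you propose feeding the entrywise covariance closeness $|\Cov(Y_{1j},Y_{1k})-\Cov(X_{1j},X_{1k})|\lesssim \sigma_j\sigma_k/n$ into the ``Chernozhukov--Chetverikov--Kato Gaussian max-comparison inequality.'' That inequality has a $\Delta^{1/3}$ (polylog) shape; with $\Delta\lesssim n^{-1}$ this yields only $n^{-1/3}$, which does not reach $n^{-1/2+\epsilon}$. To make a two-Gaussians route quantitative you would need a finer comparison, e.g.\ a Frobenius-norm bound in the spirit of Lemma~\ref{lem:GaussianFrobenius} (after a whitening reduction as in Proposition~\ref{prop: gaus-compa}), where the smallness $k_n\lesssim n^{O(\epsilon)}$ gives $\|A-B\|_F\lesssim k_n/n\lesssim n^{-1+\epsilon}$. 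The paper avoids this step altogether: it never introduces a Gaussian proxy for $Y$, but instead couples $M_{k_n}(Y)$ and $M_{k_n}(X)$ at the random-variable level (Lemma~\ref{lem:couple_MkY_MkX}) and then controls the anti-concentration of $M_{k_n}(X)$ through $\tilde M_{k_n}(X)$ via Nazarov's inequality and the already-established Gaussian approximation of Lemma~\ref{Lemma: Gaussian app X}.
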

\noindent\emph{Proof.} First observe that
\begin{align*}
\dK(\L(M_{k_n}(Y)), \L(\tilde M_{k_n}(X)))  \ \leq \ \dK(\L(M_{k_n}(Y)), \L(M_{k_n}(X))) + \dK(\L(M_{k_n}(X)), \L(\tilde M_{k_n}( X))).
\end{align*}
The second term on the right side is of $n^{-1/2+\epsilon}$ by Lemma \ref{Lemma: Gaussian app X}. Using the decomposition~\eqref{eqn:anticouple} for the Kolmogorov metric, the first term on the right side can be bounded by
\begin{align*}
\dK(\L(M_{k_n}(Y)), \L(M_{k_n}(X)))  \leq  \sup _{t \in \mathbb{R}} \P(|M_{k_n}(X)-t| \leq n^{-\frac{1}{2}})  +  \P(|M_{k_n}(X)- M_{k_n}(Y)|>n^{-\frac{1}{2}}).
\end{align*}
Lemma \ref{lem:couple_MkY_MkX} shows that 
$$
\P(|M_{k_n}(X)- M_{k_n}(Y)|>n^{-\frac{1}{2}}) \lesssim n^{-\frac{1}{2}+\epsilon}.
$$
Using the bound for anti-concentration probabilities in~\eqref{eqn:anticonc_decomp}, we have
\begin{align*}
\sup _{t \in \mathbb{R}} \P(|M_{k_n}(X)-t| \leq n^{-\frac{1}{2}}) \leq & \ \sup _{t \in \mathbb{R}} \P(|\tilde M_{k_n}(X)-t| \leq n^{-\frac{1}{2}}) \\
& + 2\dK(\L(M_{k_n}(X)), \L(\tilde M_{k_n}( X))).
\end{align*}
Nazarov's inequality (Lemma \ref{Lemma: Nazarov's inequality}) and Assumption 1(\ref{A:var}) imply 
\begin{equation}
\begin{split}
\sup_{t\in\R} \P (| \tilde M_{k_n}(X)-t| \leq n^{-\frac{1}{2}})) \lesssim \ n^{-\frac{1}{2}+\epsilon},
\end{split}
\end{equation}
which completes the proof.

\begin{lemma}\label{Lemma: Gaussian app X}
If the conditions of Theorem~\ref{thm:main} hold, then
$$\dK(\L(M_{k_n}(X)),\L(\tilde M_{k_n}(X))) \lesssim n^{-1/2+\epsilon}.$$
\end{lemma}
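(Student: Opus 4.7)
The goal is a Gaussian approximation for the max of $k_n$ partially standardized empirical averages, $M_{k_n}(X)$, versus its ideal Gaussian counterpart $\tilde M_{k_n}(X)$. Two features of the problem make it tractable: (i) by Assumption~\ref{A:model}(\ref{A:cor}) and the definition $k_n=l_n^5\wedge p$, the effective dimension is only sub-polynomial in $n$, so in particular $\log k_n\lesssim \log n$; and (ii) Assumption~\ref{A:model}(\ref{A:moments}) provides $L^{4+\delta}$ moment equivalence on linear functionals of $X_1$. My plan is to combine a truncation argument with a high-dimensional CLT for bounded variables, and then transfer the resulting Gaussian back to the desired covariance via a Gaussian comparison.

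First, I would truncate. Define $W_{ij}=\varphi_{T_n\sigma_j}(X_{ij})$ with $T_n\asymp n^{\epsilon/4}$. By Assumption~\ref{A:model}(\ref{A:moments}) and Chebyshev's inequality applied to $X_{ij}/\sigma_j$, we have $\P(|X_{ij}|>T_n\sigma_j)\lesssim T_n^{-(4+\delta)}$, so a union bound over $i\le n$ and $j\in J(k_n)$ gives $X_{ij}=W_{ij}$ for all such $(i,j)$ with probability at least $1-ck_n n\,T_n^{-(4+\delta)}=1-o(n^{-1/2})$. Meanwhile the truncation bias $|\E W_{ij}|\lesssim \sigma_j T_n^{-(3+\delta)}$ contributes only $O(\sqrt n\,\sigma_j^{1-\tau}T_n^{-(3+\delta)})$ to $|M_{k_n}(X)-M_{k_n}(W)|$, which is negligible at the target rate. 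Hence it suffices to compare $M_{k_n}(W)$ with $\tilde M_{k_n}(X)$.

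Next, I would apply a high-dimensional Gaussian approximation to $M_{k_n}(W)$. Because the summands satisfy $|W_{ij}/\sigma_j^\tau|\lesssim T_n\sigma_j^{1-\tau}\lesssim T_n$, a Lindeberg swap using the soft-max $F_\beta(x)=\beta^{-1}\log\sum_j e^{\beta x_j}$ with $\beta\asymp \log k_n$ (or equivalently a direct invocation of a CCK-style Gaussian approximation bound for bounded coordinates) gives a Kolmogorov-distance error of order $T_n\,\mathrm{polylog}(k_n)/\sqrt n = O(n^{-1/2+\epsilon/2})$. This step leaves us with a Gaussian max $\max_{j\in J(k_n)} g_j/\sigma_j^\tau$ whose covariance is that of $W_1$, not of $X_1$.

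Finally, I would use Gaussian comparison to interchange these two covariances. Entrywise they differ by $O(\sigma_j\sigma_{j'}T_n^{-(2+\delta)})$, so combining the Gaussian comparison tools in Appendix~\ref{sec:background} with Nazarov's anti-concentration inequality (Lemma~\ref{Lemma: Nazarov's inequality}) and $\log k_n\lesssim \log n$ yields a Kolmogorov distance of order $n^{-1/2+\epsilon}$ between the two Gaussian maxes. The main obstacle throughout is securing the near-parametric rate $n^{-1/2+\epsilon}$ under only $L^{4+\delta}$ moments: the truncation level $T_n$ must be large enough that the tail and bias contributions are absorbed, but small enough that the CLT remainder still comes in at the target rate. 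The sub-polynomial growth of $k_n$ guaranteed by Assumption~\ref{A:model}(\ref{A:cor}) is exactly what makes this balance achievable.
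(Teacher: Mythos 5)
There is a genuine gap in your truncation step, and it is fatal. With $T_n\asymp n^{\epsilon/4}$ the union bound gives
$$
\P\bigl(\exists\, i\le n,\ j\in J(k_n):\ X_{ij}\ne W_{ij}\bigr)\ \lesssim\ n\,k_n\,T_n^{-(4+\delta)}\ \asymp\ n^{\,1+O(\epsilon)-\epsilon(4+\delta)/4},
$$
and since $\epsilon$ can be arbitrarily small (only $\delta\ge\epsilon>0$ is assumed), the exponent $1+O(\epsilon)-\epsilon(4+\delta)/4$ is bounded away from $-1/2$; i.e.\ this quantity is not $o(n^{-1/2})$. The same problem hits your bias term: $\sqrt n\,\sigma_j^{1-\tau}T_n^{-(3+\delta)}\asymp n^{1/2-\epsilon(3+\delta)/4}$ is not $\lesssim n^{-1/2+\epsilon}$ unless $\epsilon(7+\delta)\ge 4$, which again fails for small $\epsilon$. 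Pushing $T_n$ up to fix the coupling breaks the CLT remainder, which must scale like $T_n\cdot\mathrm{polylog}(k_n)/\sqrt n\lesssim n^{-1/2+\epsilon}$; under only $L^{4+\delta}$–$L^2$ moment equivalence the window of admissible $T_n$ is empty whenever $\epsilon$ and $\delta$ are small. So the route ``truncate to bounded coordinates, then invoke a CCK-type bounded CLT, then swap Gaussian covariances'' cannot reach the stated near-parametric rate under the paper's assumptions.

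The paper avoids truncation entirely at this step. Since $\{M_{k_n}(X)\le t\}$ is the event that the scaled sum lands in the convex set $\prod_{j\in J(k_n)}(-\infty,\,t\,\sigma_j^\tau]$, one can project onto the range of the covariance of $X_1(k_n)$ (rank $r\le k_n$), isotropize to get i.i.d.\ centered vectors $V_i\in\R^r$ with $\Cov(V_1)=I_r$, and apply Bentkus' multivariate Berry--Esseen bound over Borel convex sets (Lemma~\ref{Lemma: Bentkus berry esseen}). That theorem needs only $\E\|V_1\|_2^3<\infty$, which follows from Assumption~\ref{A:model}(\ref{A:moments}) via $\|V_{1j}\|_{L^3}\lesssim\|V_{1j}\|_{L^2}=1$ and hence $\E\|V_1\|_2^3\lesssim r^{3/2}\le k_n^{3/2}$. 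The resulting bound $c\,r^{1/4}\,\E\|V_1\|_2^3/\sqrt n\lesssim k_n^{7/4}/\sqrt n$ is $\lesssim n^{-1/2+\epsilon}$ precisely because $k_n\lesssim n^{5\epsilon/(24(\beta\vee1))}$. The feature you correctly highlighted as crucial --- the sub-polynomial growth of $k_n$ --- is indeed what saves the day, but it is exploited through a dimension factor $k_n^{7/4}$ in Bentkus' convex-set bound rather than through a bounded-coordinate CLT after truncation. Replacing your middle two steps with this projection-plus-Bentkus argument would give a correct and shorter proof.
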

\noindent \emph{Proof.} For each $i=1,\dots,n$, let $X_i(k_n)$ denote the vector in $\R^{k_n}$ corresponding to the coordinates of $X_i$ indexed by $J(k_n)$, and let $\mathcal{R}_t=\prod_{j\in J(k_n)}(-\infty, t\sigma_j^{\tau}]$, so that 
$$\P(M_{k_n}(X)\leq t) = \P\Big(\ts \frac{1}{\sqrt n}\sum_{i=1}^n X_i(k_n) \in \mathcal{R}_t\Big).$$
If the rank of the covariance matrix of $X_1(k_n)$ is denoted by $r$, let $\Pi_r\in\R^{k_n\times r}$ be the  matrix whose columns correspond to the leading $r$ eigenvectors of the covariance matrix of $X_1(k_n)$. In particular, we have $X_1(k_n)=\Pi_r\Pi_r\ttop X_1(k_n)$ almost surely, and it follows that
 $$\P(M_{k_n}(X)\leq t) = \P\Big(\ts \frac{1}{\sqrt n}\sum_{i=1}^n \Pi_r\ttop X_i(k_n)  \in \Pi_r^{-1}(\mathcal{R}_t)\Big),$$
 where $\Pi_r^{-1}(\mathcal{R}_t)$ refers to the pre-image. Next, the definition of $\Pi_r$ ensures that the covariance matrix of $\Pi_r\ttop X_1(k_n)$, denoted by $\mathfrak{S}_r$, is invertible. So, if we define the random vector
 $$ V_i = \mathfrak{S}_r^{-1/2}\Pi_r\ttop  X_i(k_n)$$
 for each $i=1,\dots,n$ and the set
 $\mathcal{C}_t= \mathfrak{S}_r^{-1/2}\Pi_r^{-1}(\mathcal{R}_t)$ for each $t\in\R$, then
$$\P(M_{k_n}(X)\leq t)= \P\Big(\ts\frac{1}{\sqrt n}\sum_{i=1}^n V_i \in \mathcal{C}_t\Big).$$
It can also be shown by similar reasoning that
$$
\P(\tilde M_{k_n}(X)\leq t)= \gamma_{r}(\mathcal{C}_t),
$$
where $\gamma_r$ denotes the standard Gaussian distribution on $\R^r$.
Due to the fact that the i.i.d.~random vectors $V_1,\dots,V_n$ are centered and isotropic, Bentkus' multivariate Berry-Esseen theorem (Lemma~\ref{Lemma: Bentkus berry esseen}) ensures there is an absolute constant $c>0$ such that
\begin{equation}\label{eqn:applyBentkus}
\begin{split}
\dK(\L(M_{k_n}(X)),\L(\tilde M_{k_n}(X)))
&\leq \ 
\sup_{\mathcal{C}\in\mathscr{C}}\Big|\P\Big(\ts\frac{1}{\sqrt n}\sum_{i=1}^n V_i \in \mathcal{C}\Big)-\gamma_r(\mathcal{C})\Big| \\[0.2cm]
 & \leq \ \displaystyle \frac{c r^{1/4} \E\|V_1\|_2^3}{\sqrt n},
\end{split}
\end{equation}
where $\mathscr{C}$ denotes the collection of all Borel convex subsets of $\R^{r}$. To bound the third moment on the right side, observe that
\begin{align*}
\E\|V_1\|_2^3 & \ = \  \Big\| \ts\sum_{j=1}^r V_{1j}^2 \Big\|_{L^{3/2}}^{3/2}\ \leq \ \Big( \ts\sum_{j=1}^r \|V_{1j}\|_{L^{3}}^{2}\Big)^{3/2}.
\end{align*}
Since $V_{1j}$ can be expressed as $\langle v, X_1\rangle$ for some vector $v\in\R^p$, Assumption~\ref{A:model}(\ref{A:moments}) implies $\|V_{1j}\|_{L^3}^2\lesssim \var(V_{1j})=1$ for each $j=1,\dots,r$. Thus $\E\|V_1\|_2^3\lesssim r^{3/2}\leq k_n^{3/2}$, and combining with~\eqref{eqn:applyBentkus} completes the proof.\qed

\section{Gaussian comparison}\label{sec:comp}
Recall that $M_{k_n}^*(X) = \max_{j\in J(k_n)}\frac{1}{\sqrt{n}\sigma_j^{\tau}}\sum_{i=1}^n \xi_i (X_{ij}-\bar X_j)$ from \eqref{equa: MdstarX}.
\begin{proposition}
\label{prop: gaus-compa}
Suppose the conditions of Theorem~\ref{thm:main} hold. Then, there is a constant $c > 0$ not depending on $n$ such that the event
\begin{equation*}
\dK\Big(\L(\tilde M_{k_n}(X))\,,\,\L( M_{k_n}^*(X)|X)\Big) \leq c n^{-\frac{1}{2}+\epsilon} 
\end{equation*}
holds with probability at least $1-cn^{-\delta/4}$.     
\end{proposition}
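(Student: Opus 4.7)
The plan is to exploit the fact that conditional on $X_1,\ldots,X_n$, the bootstrap statistic $M_{k_n}^*(X)\mid X$ is the maximum of a centered Gaussian vector (since the multipliers $\xi_i$ are Gaussian), so the problem reduces to a purely Gaussian-to-Gaussian comparison on the small index set $J(k_n)$. Conditional on the data, the covariance matrix of the bootstrap process restricted to $J(k_n)$ is the partially-standardized empirical Gram matrix $\widehat\Sigma^X_{jj'}=\tfrac{1}{n\sigma_j^\tau\sigma_{j'}^\tau}\sum_{i=1}^n(X_{ij}-\bar X_j)(X_{ij'}-\bar X_{j'})$, whereas $\tilde M_{k_n}(X)$ has covariance $\Sigma_{jj'}=\operatorname{cov}(X_{1j},X_{1j'})/(\sigma_j^\tau\sigma_{j'}^\tau)$. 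I would therefore apply a sharp Gaussian comparison inequality of the form
\begin{equation*}
\dK\big(\L(\tilde M_{k_n}(X)),\,\L(M_{k_n}^*(X)\mid X)\big)\ \lesssim\ \Delta_n\cdot(\log k_n)^{c},
\end{equation*}
where $\Delta_n:=\max_{j,j'\in J(k_n)}|\widehat\Sigma^X_{jj'}-\Sigma_{jj'}|$ and $c$ is a fixed exponent. Such a sharpened bound is the natural analogue of the anti-concentration-based comparison results developed in \citep{lopes2020bootstrapping,Lopes:2022:AOS}; the polylog factor is harmless because $k_n\leq l_n^5$ grows only as a small power of $n$.

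Next, I would control $\Delta_n$ in probability using the $L^{4+\delta}$–$L^2$ moment equivalence. The uncentered empirical covariance entry $(n\sigma_j^\tau\sigma_{j'}^\tau)^{-1}\sum_{i=1}^n(X_{ij}X_{ij'}-\E(X_{1j}X_{1j'}))$ is an i.i.d.\ mean-zero sum whose summands, by Cauchy–Schwarz and Assumption~\ref{A:model}(\ref{A:moments}), have $L^{2+\delta/2}$ norm bounded by a constant times $\sigma_j^{1-\tau}\sigma_{j'}^{1-\tau}\lesssim 1$ under the normalization $\max_j\sigma_j^2=1$. A Rosenthal-type inequality (Lemma~\ref{Lemma: Rosenthal's inequality}) then yields an $L^{2+\delta/2}$ norm of order $n^{-1/2}$ for each pair $(j,j')$. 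Chebyshev combined with a union bound over the $k_n^2$ pairs in $J(k_n)^2$ gives $\Delta_n\lesssim n^{-1/2+\epsilon/2}$ on an event of probability at least $1-cn^{-\delta/4}$, since $k_n$ is polynomial in $n$ with an exponent that can be made arbitrarily small by tuning $\epsilon$. The two centering terms involving $\bar X_j,\bar X_{j'}$ contribute an additional $\mathcal{O}(n^{-1})$ in the same $L^{2+\delta/2}$ sense and are absorbed.

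The hardest step will be justifying the sharp Gaussian comparison displayed above. A direct application of a Chernozhukov–Chetverikov–Kato-type comparison would lose a factor like $\Delta_n^{1/3}(\log k_n)^{2/3}$, producing a rate far worse than $n^{-1/2+\epsilon}$. To avoid this, one has to combine the variance-decay structure (Assumption~\ref{A:model}(\ref{A:var})) with the Gaussian-maximum anti-concentration lower bound of Lemma~\ref{lem:lowertail}, in the same spirit as Proposition~\ref{prop: piece2}: the variance decay plus partial standardization implies that the effective dimension of $\tilde M_{k_n}(X)$ is governed by $l_n\ll k_n$, so the density of the Gaussian max is bounded by a polynomial factor in $l_n$ (rather than $k_n$). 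Interpolating between $\tilde M_{k_n}(X)$ and $M_{k_n}^*(X)\mid X$ via Slepian/Stein-type smoothing, and then invoking this anti-concentration bound pointwise along the interpolation, yields a comparison bound that is linear in $\Delta_n$. Combined with the covariance concentration in the previous paragraph, this delivers the claimed near-parametric rate on an event of probability at least $1-cn^{-\delta/4}$.
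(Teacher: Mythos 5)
Your proposal correctly identifies that, conditional on the data, $M_{k_n}^*(X)$ is a Gaussian maximum and that the problem reduces to comparing two Gaussian maxima whose covariance matrices differ by an empirical error. However, there are two concrete problems with the route you propose, and the paper takes a genuinely different and much simpler path.

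First, the ``sharp Gaussian comparison inequality'' $\dK \lesssim \Delta_n(\log k_n)^c$ that you treat as the crux is not an available off-the-shelf result, and you acknowledge that building one (via Stein/Slepian interpolation and a uniform-in-$t$ anti-concentration lower bound along the interpolating family) is the hardest step. The paper sidesteps this difficulty entirely. After projecting $X_i(k_n)$ onto the range of its covariance and whitening to obtain isotropic vectors $V_i\in\R^r$, Lemma~\ref{lem:GaussianFrobenius} (Spokoiny's Lemma~A.7 analogue) gives the clean almost-sure bound
\begin{align*}
\dK\Big(\L(\tilde M_{k_n}(X)),\L(M_{k_n}^*(X)\mid X)\Big) \;\leq\; 2\,\big\|W_r-I_r\big\|_F,
\end{align*}
where $W_r$ is the empirical covariance of the $V_i$. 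No anti-concentration or variance-decay structure is needed in the comparison step; the factor of $k_n$ incurred in passing from the entrywise error to the Frobenius norm is harmless because $k_n\lesssim n^{5\epsilon/24}$, so $k_n n^{-1/2+\epsilon/2}\lesssim n^{-1/2+\epsilon}$. This is the key simplification you are missing.

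Second, your control of the covariance deviation does not deliver the stated probability. A Rosenthal bound at $L^{2+\delta/2}$ followed by Chebyshev at the threshold $n^{-1/2+\epsilon/2}$ gives a per-entry tail of order $n^{-(2+\delta/2)\epsilon/2}=n^{-\epsilon-\epsilon\delta/4}$. A union bound over $k_n^2\lesssim n^{10\epsilon/24}$ pairs then leaves a failure probability of order $n^{-\epsilon\delta/4}$ (up to small corrections), which, because $\epsilon<1/2$, is strictly larger than the required $n^{-\delta/4}$. The paper instead invokes the Fuk-Nagaev inequality (Lemma~\ref{Lemma: Fuk-Nagaev inequalities}) with $q=(4+\delta)/2$, whose polynomial-tail term yields a per-entry bound of order $n^{-\delta/4-\epsilon}$; combined with the $k_n^2\lesssim n^{\epsilon}$ union bound, this produces the claimed $n^{-\delta/4}$ exponent. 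So even if your comparison lemma were available, the probability claim would still not follow from your moment argument as written.
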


\noindent\emph{Proof.} Let $V_1,\dots,V_n\in\R^r$ be as in the proof of Lemma \ref{Lemma: Gaussian app X} and put $\bar V =\frac{1}{n} \sum_{i=1}^n V_i$. The reasoning used in the proof of Lemma~\ref{Lemma: Gaussian app X} shows that 
for any $t\in\R$, there is a convex Borel set $\mathcal{C}_t\subset\R^r$ such that 
\begin{equation*}
    \P\big(M_{k_n}^*(X)\leq t | X\big) \ = \ \P\Big(\frac{1}{\sqrt n}\sum_{i=1}^n \xi_i(V_i - \bar V)\in\mathcal{C}_t \Big| X\Big),
\end{equation*}
and also, that
\begin{equation*}
        \P(\tilde M_{k_n}(X)\leq t) = \gamma_r(\mathcal{C}_t),
\end{equation*}
where $\gamma_r$ is the standard Gaussian distribution on $\R^r$. Next, define the sample covariance matrix 
\begin{equation*}
    W_{r} = \frac{1}{n}\sum_{i=1}^n (V_i-\bar V)(V_i-\bar V)\ttop
\end{equation*}
and observe that Lemma~\ref{lem:GaussianFrobenius} gives the following almost-sure bound,
\begin{equation*}
    \sup_{t\in\R}\Big|\P\big(M_{k_n}^*(X)\leq t\big|X\big) \ - \ \P\big(\tilde M_{k_n}(X)\leq t\big)\Big| \ \leq \ 2 \big\| W_{r}-I_{r}\big\|_F,
\end{equation*}
where $I_r$ denotes the identity matrix of size $r \times r$. To handle the Frobenius norm, Assumption~\ref{A:model}(\ref{A:moments}) implies
\begin{align*}
 & \max_{1\leq j,j'\leq r}\E\Big| e_j\ttop (V_1V_1\ttop  - I_{r})e_{j'}\Big|^{\frac{4+\delta}{2}} \lesssim 1, \\
 &  \max_{1\leq j\leq r} \E\big| e_j\ttop V_1 \big|^{4+\delta} \lesssim 1.
\end{align*}
Consequently, the Fuk-Nagaev inequality (Lemma \ref{Lemma: Fuk-Nagaev inequalities}) with the choices $q=(4+\delta)/2$ and $q=4+\delta$ in the notation of that result ensures that for each $j, j' = 1,\ldots,r$,
\begin{align*}
& \P\bigg(\Big|\frac{1}{n}\sum_{i=1}^n e_j\ttop (V_iV_i\ttop  - I_{r})e_{j'}\Big| \geq 
   \ts n^{-1/2+\epsilon/2} \bigg) \ \lesssim  \ n^{-(\delta/4+\epsilon)},\\
& \P\Big( \big|e_j\ttop \bar V \big| \geq n^{-1/4+\epsilon/4} \Big) \leq n^{-(2+3\delta/4+\epsilon)}.
\end{align*}
So, using the identity
$$W_r-I_r=\Big(\frac{1}{n}\sum_{i=1}^n V_iV_i\ttop -I_r\Big)-\bar V\bar V\ttop$$
it is straightforward to check that the event
\begin{equation*}
    \big\| W_{r}-I_{r}\big\|_F \geq 2k_n n^{-1/2+\epsilon/2}
\end{equation*}
holds with probability at most of order $k_n^2 n^{-(\delta/4+\epsilon)} \lesssim n^{-\delta/4}$, which leads to the stated result.\qed

\section{Bootstrap approximation}\label{sec:boot}
\begin{proposition}
\label{prop: XZ kn-p}
Suppose the conditions of Theorem~\ref{thm:main} hold. Then, there is a constant $c>0$ not depending on $n$ such that the event
$$\dK\Big(\L(M_{k_n}^*(X) | X),\L( M_{p}^*(\hat Z)|X))\Big) \leq cn^{-\frac{1}{2}+\epsilon} $$
holds with probability at least $1-c n^{-\delta/4}$.    
\end{proposition}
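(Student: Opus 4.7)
The plan is to mirror the structure of the Gaussian-approximation argument used in Proposition~\ref{prop: piece2}, carried out now inside the conditionally Gaussian bootstrap world. The central decomposition is
\begin{equation*}
\dK\!\bigl(\mathcal{L}(M_{k_n}^*(X)|X),\,\mathcal{L}(M_p^*(\hat Z)|X)\bigr) \,\leq\, \mathrm{I}_n^* + \mathrm{II}_n^*,
\end{equation*}
where $\mathrm{I}_n^* = \dK(\mathcal{L}(M_p^*(\hat Z)|X),\mathcal{L}(M_{k_n}^*(\hat Z)|X))$ is a bootstrap-side localization from $p$ coordinates down to the $k_n$ largest-variance coordinates $J(k_n)$, and $\mathrm{II}_n^* = \dK(\mathcal{L}(M_{k_n}^*(\hat Z)|X),\mathcal{L}(M_{k_n}^*(X)|X))$ compares the two Gaussian maxima on the common index set $J(k_n)$. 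Crucially, conditional on the data, both $M_p^*(\hat Z)$ and $M_{k_n}^*(X)$ are maxima of centered Gaussian vectors, so the usual Gaussian tools (Nazarov's inequality and standard-deviation-controlled tail bounds) apply directly.

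For $\mathrm{I}_n^*$, I would replay the $A(t)\cap B(t)$ decomposition from Proposition~\ref{prop: hatYp-kn} with the cutoffs $s_{1,n}, s_{2,n}$ defined in~\eqref{eqn:s1ns2ndef}. The upper-tail event $B^*(s_{1,n})$ is handled by a conditional Gaussian maximal inequality: on the high-probability event that $\hat\sigma_j \asymp \sigma_j$ for $j\in J(k_n)^c$ (Lemmas~\ref{Lemma: MOM variance} and~\ref{Lemma: MOM var anti-con}) and that the bootstrap sample variances $\tfrac{1}{n}\sum_i(\hat Z_{ij}-\hat Z_j)^2$ concentrate around $\sigma_j^2$ (a Chebyshev argument on truncated moments parallel to Lemma~\ref{Lemma: moments closeness}), the variance-decay profile from Assumption~\ref{A:model}(\ref{A:var}) delivers $\P(B^*(s_{1,n})|X) \lesssim 1/n$ via the same moment-summation calculation as in Lemma~\ref{Lemma: B order}. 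The lower-tail event $A^*(s_{2,n})$ is handled by transferring to $\P(\tilde M_{k_n}(X)\leq s_{2,n})$ using the Gaussian comparison already established in Proposition~\ref{prop: gaus-compa}, and then invoking Lemma~\ref{Lemma: tilde MX}.

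For $\mathrm{II}_n^*$, the key observation is that if the truncation never activates---formally on the event $\mathcal{E}_0 = \{\max_{j\in J(k_n)}\max_{i\leq n}|X_{ij}-\tilde X_j| \leq \hat t_j\}$---then the identity $\hat Z_{ij} - \hat Z_j = X_{ij} - \bar X_j$ holds \emph{exactly}, since both the truncation and the centering by $\tilde X_j$ cancel in the difference. Arguing as in Lemma~\ref{Lemma: Y hatY diff} combined with Lemma~\ref{Lemma: MOM mean}, the event $\mathcal{E}_0$ has probability at least $1 - ck_n/n$. On $\mathcal{E}_0$, the coupling reduces to
\begin{equation*}
\bigl|M_{k_n}^*(\hat Z) - M_{k_n}^*(X)\bigr| \,\leq\, \max_{j\in J(k_n)}\bigl|(\sigma_j/\hat\sigma_j)^\tau-1\bigr|\cdot \max_{j\in J(k_n)}\frac{|G_j^*|}{\sigma_j^\tau},
\end{equation*}
where $G_j^* = \tfrac{1}{\sqrt n}\sum_i\xi_i(X_{ij}-\bar X_j)$. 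The first factor is $O(n^{-1/2+\epsilon/2})$ on an event of probability at least $1-cn^{-(2+\delta)}$ by Lemma~\ref{Lemma: MOM variance}; the second factor is $O(\sqrt{\log k_n})$ conditionally on $X$ with high probability, via a Gaussian maximal inequality applied to the conditionally Gaussian family $(G_j^*/\sigma_j^\tau)_{j\in J(k_n)}$, together with $\sigma_j^{1-\tau}\leq 1$ and the near-equality of sample and population variances for $X_{1j}$. This yields $|M_{k_n}^*(\hat Z) - M_{k_n}^*(X)| \lesssim n^{-1/2+\epsilon}$ on the intersection of the relevant events, and the anti-coupling inequality~\eqref{eqn:anticouple} then converts this into the Kolmogorov bound once Nazarov's inequality (Lemma~\ref{Lemma: Nazarov's inequality}) is applied to the conditionally Gaussian vector underlying $M_{k_n}^*(X)$; here the smallest conditional standard deviation is $\gtrsim k_n^{-\beta(1-\tau)}$, which combines with $\sqrt{\log k_n}$ and the definition of $k_n$ to keep the anti-concentration contribution at $n^{-1/2+\epsilon}$.

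The main obstacle I anticipate is careful bookkeeping of high-probability events. Each ingredient above---the no-truncation event $\mathcal{E}_0$, accurate \textsc{mom} variance and mean estimates, and near-identity sample covariance of the $X_{ij}-\bar X_j$ on $J(k_n)$---is valid only on its own high-probability subset, and the proposition's conclusion must hold on the intersection with probability at least $1-cn^{-\delta/4}$. The binding constraint is the tail probability extracted from Assumption~\ref{A:model}(\ref{A:moments}) via Chebyshev's inequality, which must be balanced against the $k_n$-factor losses from union bounds over $J(k_n)$ so that the variance-decay exponent $\beta$ never appears in the final rate, exactly as on the Gaussian-approximation side of the proof.
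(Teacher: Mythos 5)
Your decomposition coincides with the paper's: the paper writes $\mathrm{I}'_n = \dK(\L(M_{k_n}^*(X)|X),\,\L(M_{k_n}^*(\hat Z)|X))$ and $\mathrm{II}'_n = \dK(\L(M_{k_n}^*(\hat Z)|X),\,\L(M_p^*(\hat Z)|X))$, which are your $\mathrm{II}_n^*$ and $\mathrm{I}_n^*$ under a relabeling, and your treatment of the localization piece $\mathrm{I}_n^*$ mirrors the paper's Proposition~\ref{prop: hatZ kn-p} essentially verbatim. The genuinely different content is in $\mathrm{II}_n^*$. You observe that on the no-truncation event $\mathcal{E}_0$ the identity $\hat Z_{ij}-\hat Z_j = X_{ij}-\bar X_j$ holds exactly (the hold-out center $\tilde X_j$ is constant in $i$, so it drops out after subtracting $\hat Z_j = \bar X_j - \tilde X_j$), reducing the discrepancy between $M_{k_n}^*(\hat Z)$ and $M_{k_n}^*(X)$ to a pure normalization error $\hat\sigma_j^\tau$ versus $\sigma_j^\tau$. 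The paper's Lemma~\ref{lemma: XZ kn-kn} does not exploit this cancellation; it bounds $|X_{ij}-\hat Z_{ij}|$ directly, which pays an unnecessary $|\tilde X_j|$ cost in its $T_1$ term even on $\mathcal{E}_0$, although the resulting rate is the same because $|\tilde X_j|/\sigma_j^\tau \lesssim n^{-1/2+3\epsilon/4}$ on the relevant event. Your simplification removes one of the two pieces of the paper's $T_1+T_2$ split. You also apply Nazarov's inequality directly to the conditionally Gaussian $M_{k_n}^*(X)$, whereas the paper transfers anti-concentration back to the deterministic $\tilde M_{k_n}(X)$ via~\eqref{eqn:anticonc_decomp} and Proposition~\ref{prop: gaus-compa}; both routes yield the same $n^{-1/2+\epsilon}$ anti-concentration contribution, with yours requiring a concentration event for the conditional sample variances in place of the Gaussian-comparison event. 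The argument is correct and delivers the claimed rate and probability.
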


\noindent\emph{Proof.} 
Consider the inequality
\begin{align*}
\dK\Big(\L(M_{k_n}^*(X) | X),\L( M_{p}^*(\hat Z)|X))\Big) \leq \mathrm{I}'_n + \mathrm{II}'_n, 
\end{align*}
where we define
\begin{align*}
 \mathrm{I}'_n &= \dK\Big(\L(M_{k_n}^*(X) |X ),\L( M_{k_n}^*(\hat Z)|X)\Big), \\[0.2cm]
 \mathrm{II}'_n &= \dK\Big(\L( M_{k_n}^*(\hat Z)|X), \L(M_{p}^*(\hat Z)|X)\Big).
\end{align*}
Both $\mathrm{I}'_n$ and $\mathrm{II}'_n$ are of order at most $n^{-1/2+\epsilon}$ with probability at least $1-cn^{-\delta/4}$, as shown in Lemma \ref{lemma: XZ kn-kn} and Proposition \ref{prop: hatZ kn-p}.\qed

\begin{lemma}
\label{lemma: XZ kn-kn}
Suppose the conditions of Theorem~\ref{thm:main} hold. Then, there is a constant $c>0$ not depending on $n$ such that the event
$$\dK\Big(\L(M_{k_n}^*(X) | X),\L( M_{k_n}^*(\hat Z)|X)\Big) \leq cn^{-1/2+\epsilon} $$
holds with probability at least $1-cn^{-\delta/4}$.
\end{lemma}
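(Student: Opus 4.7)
The plan is to apply the standard coupling inequality~\eqref{eqn:anticouple} conditionally on $X$ with threshold $s_n = n^{-1/2+\epsilon/2}\sqrt{\log n}$, yielding
\begin{equation*}
\dK\big(\L(M_{k_n}^*(X)\mid X),\,\L(M_{k_n}^*(\hat Z)\mid X)\big) \ \leq \ \sup_{t\in\R}\P\big(|M_{k_n}^*(X)-t|\leq s_n\,\big|\,X\big) \ + \ \P\big(|M_{k_n}^*(X)-M_{k_n}^*(\hat Z)|>s_n\,\big|\,X\big),
\end{equation*}
and to show each term is $\mathcal{O}(n^{-1/2+\epsilon})$ on an event of probability at least $1-cn^{-\delta/4}$.

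For the anti-concentration term, I would route it through the unconditional Gaussian maximum $\tilde M_{k_n}(X)$ using a conditional version of~\eqref{eqn:anticonc_decomp}:
\begin{equation*}
\sup_{t\in\R}\P\big(|M_{k_n}^*(X)-t|\leq s_n\,\big|\,X\big) \ \leq \ \sup_{t\in\R}\P\big(|\tilde M_{k_n}(X)-t|\leq s_n\big) \ + \ 2\dK\big(\L(M_{k_n}^*(X)\mid X),\,\L(\tilde M_{k_n}(X))\big).
\end{equation*}
The first term is $\mathcal{O}(n^{-1/2+\epsilon})$ by Nazarov's inequality together with Assumption~\ref{A:model}(\ref{A:var}), exactly as in~\eqref{Equa: MknX nazarov}, and the second is the Gaussian comparison bound already furnished by Proposition~\ref{prop: gaus-compa}.

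For the coupling term, using~\eqref{eqn:maxdiff} it suffices to control $\max_{j\in J(k_n)}|D_j|$, where
\begin{equation*}
D_j \ := \ \frac{1}{\sigma_j^\tau\sqrt{n}}\sum_{i=1}^n \xi_i(X_{ij}-\bar X_j) \ - \ \frac{1}{\hat\sigma_j^\tau\sqrt{n}}\sum_{i=1}^n \xi_i(\hat Z_{ij}-\bar{\hat Z}_j).
\end{equation*}
Let $\mathcal{E}$ be the ``no truncation'' event on which $\varphi_{\hat t_j}(X_{ij}-\tilde X_j)=X_{ij}-\tilde X_j$ for every $i=1,\dots,n$ and $j\in J(k_n)$. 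Following the Markov/moment argument underlying Lemma~\ref{Lemma: Y hatY diff}, combined with the uniform MOM mean and variance bounds of Lemmas~\ref{Lemma: MOM mean} and~\ref{Lemma: MOM variance}, $\mathcal{E}$ holds with probability at least $1-\mathcal{O}(k_n/n)$. On $\mathcal{E}$ the centered summands collapse to $\hat Z_{ij}-\bar{\hat Z}_j=X_{ij}-\bar X_j$, so
\begin{equation*}
D_j \ = \ \big(\sigma_j^{-\tau}-\hat\sigma_j^{-\tau}\big)\,\frac{1}{\sqrt{n}}\sum_{i=1}^n \xi_i(X_{ij}-\bar X_j),
\end{equation*}
which is centered Gaussian conditional on $X$. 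The scalar factor obeys $|\sigma_j^{-\tau}-\hat\sigma_j^{-\tau}|\lesssim n^{-1/2+\epsilon/2}\sigma_j^{-\tau}$ uniformly over $j\in J(k_n)$ by Lemma~\ref{Lemma: MOM variance}, and standard second-moment concentration for $\frac{1}{n}\sum_i(X_{ij}-\bar X_j)^2$ (via Rosenthal's inequality, Lemma~\ref{Lemma: Rosenthal's inequality}, under Assumption~\ref{A:model}(\ref{A:moments})) supplies the matching $\sigma_j^2$ factor. Combining gives $\var(D_j\mid X)\lesssim n^{-1+\epsilon}\sigma_j^{2(1-\tau)}\lesssim n^{-1+\epsilon}$ uniformly in $j$, using $\sigma_j\leq\sigma_{(1)}\leq 1$ under the normalization convention on page~\pageref{sigmaconvention}. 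A Gaussian maximum bound over the $k_n$ indices then gives $\max_{j\in J(k_n)}|D_j|\lesssim n^{-1/2+\epsilon/2}\sqrt{\log k_n}\leq s_n$ with conditional probability at least $1-1/n$, since $k_n$ is polynomial in $n$.

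The main obstacle is keeping all of these high-probability events aligned on a single event of probability at least $1-cn^{-\delta/4}$, and tracking the $\sigma_j$-scaling delicately enough so that the $\sigma_j^{-\tau}$ factors arising from the variance-normalization mismatch cancel precisely against the $\sigma_j$-factors coming from $\var(X_{ij}-\bar X_j)$. Without this cancellation the bound on $\var(D_j\mid X)$ would carry an explicit dependence on the variance-decay parameter $\beta$, which would spoil the dimension-free $n^{-1/2+\epsilon}$ rate once the $\sqrt{\log k_n}$ penalty from the Gaussian maximum is paid.
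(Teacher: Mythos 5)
Your decomposition into a conditional anti-concentration term and a conditional coupling term matches the paper's opening move, and your handling of the anti-concentration term (routing it through the unconditional Gaussian maximum $\tilde M_{k_n}(X)$ via~\eqref{eqn:anticonc_decomp}, then invoking Nazarov's inequality as in~\eqref{Equa: MknX nazarov} together with Proposition~\ref{prop: gaus-compa}) is exactly what the paper does. Where you depart is in the coupling term, and your route there is genuinely different and arguably cleaner. The paper bounds $\|M_{k_n}^*(X)-M_{k_n}^*(\hat Z)\|_{L^q\mid X}$ with $q\asymp\log n$ via~\eqref{eqn:gaussianLq}, \eqref{eqn:maxout}, \eqref{eqn:maxdiff}, then splits the resulting empirical second moment into two pieces $T_1$ and $T_2$, where $T_1$ carries the $X_{ij}-\hat Z_{ij}$ discrepancy and $T_2$ carries the $\sigma_j^{-\tau}-\hat\sigma_j^{-\tau}$ normalization mismatch; the $T_1$ analysis leaves behind a $\max_j|\tilde X_j|$ contribution because the paper bounds the centered empirical second moment $\frac{1}{n}\sum_i\big((X_{ij}-\hat Z_{ij})-(\bar X_j-\hat Z_j)\big)^2$ crudely by the uncentered one $\frac{1}{n}\sum_i(X_{ij}-\hat Z_{ij})^2$. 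You avoid $T_1$ entirely by noticing that on the no-truncation event $\mathcal{E}$, the centered summands collapse \emph{exactly}: $\hat Z_{ij}-\hat Z_j = (X_{ij}-\tilde X_j)-(\bar X_j-\tilde X_j)=X_{ij}-\bar X_j$, so the $\tilde X_j$ term cancels identically and only the normalization mismatch survives. The residual $D_j$ is then a scalar factor times a conditionally Gaussian sum, and a plain conditional Gaussian maximum bound finishes. This is a nice simplification: it trades the paper's $L^q$--Chebyshev machinery and its treatment of $\frac{1}{n}\sum_i\hat Z_{ij}^2$ (which the paper must bound separately in~\eqref{equa: sum hatZij}) for a single localization event whose failure probability is, as you say and as Lemma~\ref{Lemma: Y hatY diff} already shows by essentially the same fourth-moment calculation, $\mathcal{O}(k_n/n)$.

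Two small cautions, neither fatal. First, the simultaneous-over-$j$ bound on $\frac{1}{n}\sum_i(X_{ij}-\bar X_j)^2\lesssim\sigma_j^2$ should be carried out via the sum-form statements of the MOM lemmas or a Fuk--Nagaev union bound under Assumption~\ref{A:model}(\ref{A:moments}) so that the failure probability stays within the $n^{-\delta/4}$ budget; "standard second-moment concentration" is stated a little loosely. Second, the probability budget itself: your dominant failure event is $\mathcal{E}^c$ with $\P(\mathcal{E}^c)\lesssim k_n/n\leq n^{5\epsilon/24-1}$, which needs to be $\mathcal{O}(n^{-\delta/4})$. This is a constraint shared with (and inherited from) the paper's own $T_1$ and $T_2$ bounds, which are also stated with probability $1-\mathcal{O}(k_n/n)$, so it is not a gap you have introduced, but it is worth being explicit about if one worries about large values of $\delta$. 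With these details filled in, the argument is correct.
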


\noindent\emph{Proof.} 
The coupling and anti-concentration decomposition in~\eqref{eqn:anticouple} shows that for any $\eta>0$, we have
\begin{equation}\label{eqn:M*XM*Z}
\begin{aligned}
\dK\Big(\L\big(M_{k_n}^*(X) | X\big),\L\big( M_{k_n}^*(\hat Z)|X\big)\Big) \ \leq \ & 
\sup_{t\in\R}\P\Big(|M_{k_n}^*(X)-t| \leq \eta \Big|X\Big) \\
& +  \P\Big(\big|M_{k_n}^*(X)-M_{k_n}^*(\hat Z) \big|\geq \eta \Big|X\Big).
\end{aligned} 
\end{equation}
We will take $\eta= c\log(n) n^{-\frac{1}{2}+\frac{3\epsilon}{4}}$ for some constant $c>0$ not depending on $n$. Using the generic bound for anti-concentration probabilities in~\eqref{eqn:anticonc_decomp}, the first term on the right side of \eqref{eqn:M*XM*Z} is upper bounded by
$$
\sup_{t\in\R}\P\big(|\tilde M_{k_n}(X)-t| \leq \eta \big) + \ 2\dK\Big(\L \big( \tilde M_{k_n}(X)\big),\L\big( M_{k_n}^*(X)|X\big)\Big),
$$
which is at most of order $n^{-1/2+\epsilon}$ with probability at least $1-cn^{-\delta/4}$, due to~\eqref{Equa: MknX nazarov} and Proposition~\ref{prop: gaus-compa}.

To address the coupling term in~\eqref{eqn:M*XM*Z}, observe that for any $q\geq 2$, the basic inequalities~\eqref{eqn:gaussianLq}, \eqref{eqn:maxout} and \eqref{eqn:maxdiff} ensure there is a constant $c>0$ not depending on $n$ such that
\begin{equation*}
    \big\|M_{k_n}^*(X)-M_{k_n}^*(\hat Z)\big\|_{L^q|X} 
\ \leq \ c\sqrt q k_n^{1/q} \max_{j\in J(k_n)}\bigg(\frac{1}{n}\sum_{i=1}^n \Big(\textstyle\frac{X_{ij}-\bar X_j}{\sigma_j^{\tau}}-\frac{\hat Z_{ij}-\hat Z_j}{\hat\sigma_j^{\tau}}\Big)^2\bigg)^{1/2}.
\end{equation*}
To decompose this bound, define the random variables
\begin{align*}
    T_1& =  \displaystyle\max_{j\in J(k_n)}\bigg(\frac{1}{n}\sum_{i=1}^n \Big(\textstyle\frac{X_{ij}-\hat Z_{ij}}{\sigma_j^{\tau}}\Big)^2\bigg)^{1/2}, \\[0.2cm]
    T_2 &=  \displaystyle\max_{j\in J(k_n)} \Big|\frac{1}{\sigma_j^{\tau}}-\frac{1}{\hat\sigma_j^{\tau}} \Big| \bigg(\frac{1}{n}\sum_{i=1}^n \hat Z_{ij}^2\bigg)^{1/2}.
\end{align*}
Using Jensen's inequality for sample averages, it follows that
\begin{equation*}
    \big\|M_{k_n}^*(X)-M_{k_n}^*(\hat Z)\big\|_{L^q|X} \ \leq \  c \sqrt{q}k_n^{1/q} (T_1+T_2).
\end{equation*}
With regard to $T_1$, note that
\begin{equation*}
\begin{split}
    |X_{ij}-\hat Z_{ij}| & \ \leq \  |\tilde X_{j}|1\{|X_{ij}-\tilde X_j|\leq \hat t_j\}
     \ + \  (|X_{ij}|+\hat t_j) 1\big\{|X_{ij}-\tilde X_j|> \hat t_j\big\}.
    \end{split}
\end{equation*}
Also, by Lemma~\ref{Lemma: MOM variance}\eqref{Equa: MOM variance 1 sum}, the events
\begin{align*}
    \max_{j\in J(k_n)}\frac{\hat t_j}{t_j}  \ \leq \ 2 \ \ \ \ \ \text{ and } \ \ \ \  \ \min_{j\in J(k_n)} \frac{ \hat t_j}{t_j} \ \geq \ \textstyle\frac{1}{2}
\end{align*}
hold simultaneously with probability at least $1-cn^{-(2+\delta)}$ for some constant $c>0$ not depending on $n$. Based on this and $\min_{j\in J(k_n)}\sigma_j^{\tau}\gtrsim k_n^{-\beta\tau}$ under Assumption \ref{A:model}(\ref{A:var}), the bound
\begin{equation*}
    \Big(\textstyle\frac{X_{ij}-\hat Z_{ij}}{\sigma_j^{\tau}}\Big)^2  \ \leq \  ck_n^{2\beta\tau}\tilde X_{j}^2
     \ + \  ck_n^{2\beta\tau}(X_{ij}^2+t_j^2)\Big( 1\big\{|X_{ij}|> \textstyle\frac{t_j}{4}\big\} + 1\big\{ |\tilde X_j|>\textstyle\frac{t_j}{4}\big\}\Big)
\end{equation*}
holds simultaneously for all $i\in\{1,\dots,n\}$ and $j\in J(k_n)$ with probability at least $1-cn^{-(2+\delta)}$. Therefore, the bound
{\small
\begin{equation*}
    T_1 \leq  ck_n^{\beta \tau} \max_{j\in J(k_n)}\Bigg( |\tilde X_j| + \bigg( \frac{1}{n}\sum_{i=1}^n \Big(X_{ij}^2 +t_j^2 \Big) 1\{|X_{ij}|>\textstyle\frac{t_j}{4}\}\bigg)^{\frac{1}{2}} + 1\{|\tilde X_j|>\textstyle\frac{t_j}{4}\}\bigg(\displaystyle\frac{1}{n}\sum_{i=1}^n X_{ij}^2+t_j^2\bigg)^{\frac{1}{2}}\Bigg)
\end{equation*}}
holds with the same probability. 
The terms on the right side are handled as follows. First, $\max_{j \in J(k_n)}|\tilde X_j|$ is of order $n^{-1/2+\epsilon/2}$ with probability at least $1-cn^{-(2+\delta)}$ by Assumption \ref{A:model}(\ref{A:var}) and Lemma \ref{Lemma: MOM mean}\eqref{Equa: MOM mean 1 sum}. The indicators $1\{|X_{ij}|\geq \frac{t_j}{4}\}$ and $1\{|\tilde X_{j}|\geq \frac{t_j}{4}\}$  are 0 with probability at least $1-cn^{-2}$ due to the argument associated with the bounds in \eqref{equa: YhatY positive}, as well as Lemma \ref{Lemma: MOM mean}\eqref{Equa: MOM mean 1 sum} and the fact that $\sigma_{j} n^{-\frac{1}{2}+\frac{\epsilon}{2}} \lesssim \frac{t_j}{4}$. After taking a union bound over $i\in\{1,\dots,n\}$ and $j \in J(k_n)$, the event 
\begin{align*}
T_1 &\leq c k_n^{\beta\tau} n^{-1/2+\epsilon/2} \leq c n^{-1/2+3\epsilon/4}
\end{align*}
holds with probability at least $1-c k_n/n$.

Turning our attention to $T_2$, Lemma \ref{Lemma: MOM variance}\eqref{Equa: MOM variance 1 sum} implies that 
the bound
\begin{equation*}
    \displaystyle\max_{j\in J(k_n)} \Big|\frac{1}{\sigma_j^{\tau}}-\frac{1}{\hat\sigma_j^{\tau}} \Big| \ \leq ck_n^{\beta \tau}n^{-1/2+\epsilon/2} \leq cn^{-1/2+3\epsilon/4},
\end{equation*}
holds with probability at least $1-cn^{-(2+\delta)}$. Also, it will be shown in Equation \eqref{equa: sum hatZij} that the bound
$$ \max_{j\in J(k_n)} \Big(\frac{1}{n}\sum_{i=1}^n \hat Z_{ij}^2\Big)^{1/2} 
\leq  c \log(n) $$
holds with probability at least $1-ck_n/n$. Combining the last two steps shows that $T_2$ is of order $\log(n)n^{-1/2+3\epsilon/4}$ with the same probability.  \qed

\begin{proposition}
\label{prop: hatZ kn-p}
Suppose the conditions of Theorem~\ref{thm:main} hold. Then, there is a constant $c>0$ not depending on $n$ such that the event
$$
\dK(\L( M_{k_n}^*(\hat Z)|X),\L(M_{p}^*(\hat Z)|X)) \leq cn^{-\frac{1}{2} +\epsilon},
$$ 
holds with probability at least $1-cn^{-\delta/4}$.
\end{proposition}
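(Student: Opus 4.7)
\noindent\emph{Proof plan.} The strategy is to mirror the event-based decomposition used in the proof of Proposition~\ref{prop: hatYp-kn}, but carried out conditionally on $X$ and exploiting that the bootstrap statistic is Gaussian conditional on $X$. Define
\begin{align*}
A^*(t) &= \Big\{M_{k_n}^*(\hat Z) \leq t\Big\}, \\
B^*(t) &= \Big\{\max_{j\in J(k_n)^c} \frac{1}{\hat\sigma_j^{\tau}\sqrt n}\sum_{i=1}^n \xi_i(\hat Z_{ij}-\hat Z_j) > t\Big\}.
\end{align*}
As in Proposition~\ref{prop: hatYp-kn}, for any $t\in\R$ we have $|\P(M_p^*(\hat Z)\leq t|X) - \P(M_{k_n}^*(\hat Z)\leq t|X)| = \P(A^*(t)\cap B^*(t)|X)$, and for any $s_{1,n}\leq s_{2,n}$ the inclusion $A^*(t)\cap B^*(t)\subset A^*(s_{2,n})\cup B^*(s_{1,n})$ holds for every $t$. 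Taking $s_{1,n},s_{2,n}$ as in~\eqref{eqn:s1ns2ndef}, this yields the almost-sure bound
\[
\dK\big(\L(M_{k_n}^*(\hat Z)|X),\L(M_p^*(\hat Z)|X)\big) \ \leq \ \P(A^*(s_{2,n})|X) + \P(B^*(s_{1,n})|X).
\]

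For the first term, the Kolmogorov-distance triangle inequality gives
\[
\P(A^*(s_{2,n})|X) \ \leq \ \P\!\big(\tilde M_{k_n}(X)\leq s_{2,n}\big) + \dK\big(\L(M_{k_n}^*(\hat Z)|X),\L(\tilde M_{k_n}(X))\big),
\]
where the first summand is $O(n^{-1/2})$ by Lemma~\ref{Lemma: tilde MX}, and the second is $O(n^{-1/2+\epsilon})$ on an event of probability at least $1-cn^{-\delta/4}$ by combining Lemma~\ref{lemma: XZ kn-kn} with Proposition~\ref{prop: gaus-compa}. For the second term, note that conditional on $X$ each random variable $G_j := \frac{1}{\hat\sigma_j^{\tau}\sqrt n}\sum_i \xi_i(\hat Z_{ij}-\hat Z_j)$ is centered Gaussian with conditional variance $\hat v_j^2 := \hat\sigma_j^{-2\tau}\cdot \frac{1}{n}\sum_i(\hat Z_{ij}-\hat Z_j)^2$. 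Taking $q=2\vee\log(n)$, Chebyshev's inequality combined with the Gaussian moment bound~\eqref{eqn:gaussianLq} and a union bound in $L^q$ yields
\[
\P(B^*(s_{1,n})|X) \ \leq \ s_{1,n}^{-q}\sum_{j\in J(k_n)^c}\|G_j\|_{L^q|X}^q \ \leq \ s_{1,n}^{-q}\, (c\sqrt q)^q \sum_{j\in J(k_n)^c}\hat v_j^{\,q}.
\]

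The main technical step is then to show that on an event of probability at least $1-cn^{-(2+\delta)}$, the uniform bound $\hat v_j^{\,2}\lesssim \sigma_j^{2(1-\tau)}$ holds for all $j$. For this, Lemma~\ref{Lemma: MOM variance} provides $\hat\sigma_j^{2\tau}\asymp \sigma_j^{2\tau}$ uniformly, so it suffices to control $\frac{1}{n}\sum_i(\hat Z_{ij}-\hat Z_j)^2\leq \frac{1}{n}\sum_i \hat Z_{ij}^2\leq \frac{1}{n}\sum_i(X_{ij}-\tilde X_j)^2$. Using Assumption~\ref{A:model}(\ref{A:moments}) to bound $\|X_{1j}-\mu_j\|_{L^{4+\delta}}\lesssim \sigma_j$, then Rosenthal's inequality for the centered sum together with Lemma~\ref{Lemma: MOM mean} to handle $(\tilde X_j-\mu_j)^2$, the sample second moment is $\sigma_j^2(1+o(1))$ uniformly with the required probability. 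Plugging this into the previous display and using Assumption~\ref{A:model}(\ref{A:var}) yields
\[
\sum_{j\in J(k_n)^c}\hat v_j^{\,q} \ \lesssim \ \sum_{j>k_n} j^{-\beta(1-\tau)q} \ \lesssim \ \frac{k_n^{\,1-\beta(1-\tau)q}}{\beta(1-\tau)q-1},
\]
which, with $q\asymp\log(n)$ and $s_{1,n}=c_1 k_n^{-\beta(1-\tau)/2}\log(n)$ chosen appropriately, gives $\P(B^*(s_{1,n})|X)\lesssim 1/n$.

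The principal obstacle is the uniform-in-$j$ control of $\hat v_j^{\,2}$, because a crude union bound over the possibly very large set $J(k_n)^c$ could be too lossy. This is avoided by exploiting the built-in decay of $\sigma_j$ so that the failure probability for the individual coordinate bound can be made $O(n^{-(2+\delta)})$ via Rosenthal's inequality and the moment equivalence in Assumption~\ref{A:model}(\ref{A:moments}), and by using the $L^q$ union bound (rather than a pointwise union bound) so that the variance decay is summed rather than maximized. Combining the two estimates above completes the proof.
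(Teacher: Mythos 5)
Your high-level decomposition is the same as the paper's: split $\dK$ into $\P(A^*(s_{2,n})\,|\,X)+\P(B^*(s_{1,n})\,|\,X)$, bound $\P(A^*(s_{2,n})\,|\,X)$ via Lemma~\ref{Lemma: tilde MX} together with Proposition~\ref{prop: gaus-compa} and Lemma~\ref{lemma: XZ kn-kn}, and treat $\P(B^*(s_{1,n})\,|\,X)$ via a conditional-Gaussian $L^q$ union bound. The part that controls $\P(A^*\,|\,X)$ is correct and matches the paper.

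The gap is in the tail term. Your argument hinges on the claim that the conditional variances satisfy $\hat v_j^{\,2}\lesssim\sigma_j^{2(1-\tau)}$ \emph{uniformly over all $j\in J(k_n)^c$}, for which you cite Lemma~\ref{Lemma: MOM variance} to get $\hat\sigma_j^{2\tau}\asymp\sigma_j^{2\tau}$ uniformly. That two-sided equivalence is not available over $J(k_n)^c$. Lemma~\ref{Lemma: MOM variance}\eqref{Equa: MOM variance 1 sum} gives relative-error control of $\hat\sigma_j^2$ \emph{only for $j\in J(k_n)$}; over the complement, part~\eqref{Equa: MOM variance 2} only delivers the weakened one-sided statement $|\hat\sigma_j^2-\sigma_j^2|\lesssim\sigma_j^{2-2\theta}n^{-1/2+\epsilon/2}$, and the lower bound on $\hat\sigma_j^2$ that can be summed over all $p$ coordinates is only $\hat\sigma_j^2\gtrsim\sigma_j^{2+2\theta}$ (Lemma~\ref{Lemma: MOM var anti-con}). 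Hence $\sigma_j/\hat\sigma_j$ can be as large as $\sigma_j^{-\theta}$ on the small-variance coordinates, and the ratio is \emph{not} $O(1)$. The same issue affects your bound $\frac1n\sum_i\hat Z_{ij}^2\lesssim\sigma_j^2$ uniformly: the paper's equation~\eqref{equa: sum hatZij} only establishes the weaker $\frac1n\sum_i\hat Z_{ij}^2\lesssim q\,\sigma_j^{(\tau+3)/2}$, precisely because the per-coordinate failure probability over $J(k_n)^c$ cannot be made uniformly small without trading away powers of $\sigma_j$. This is why the paper's Lemma~\ref{Lemma: B'(t1') order} factors out $\max_{j\in J(k_n)^c}\sigma_j^{(\tau+1)/2}/\hat\sigma_j^{\tau}\lesssim 1$ (using Lemma~\ref{Lemma: MOM var anti-con}) and settles for $\|G_j\|_{L^q|X}\lesssim q^{3/2}\sigma_j^{(\tau+3)/4}$, leading to a threshold $s_{1,n}'\asymp k_n^{-\beta(1-\tau)/4}\log(n)^{3/2}$ with exponent $(1-\tau)/4$ rather than your $(1-\tau)/2$. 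Your $s_{1,n}$ is therefore too aggressive, and the intermediate uniform variance claim it rests on is false; to repair the argument you need to accept the weaker decay exponent that the median-of-means lemmas actually deliver over $J(k_n)^c$.
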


\noindent \emph{Proof.} We may assume without loss of generality that $k_n<p$, for otherwise the quantity $\dK(\L(M_p(\hat Y)),\L(M_{k_n}(\hat Y)))$ is zero. For any $t\in\R$, define the events
\begin{align*}
A'(t)=\Big\{\max_{j\in J(k_n)}\frac{\sum_{i=1}^n \xi_i (\hat Z_{ij}-\hat  Z_j)}{\sqrt{n}\hat\sigma_j^{\tau}} \leq t\Big\} \ \  \  \text{and}   \ \ \  B'(t)=\Big\{\max_{j\in J(k_n)^c}\frac{\sum_{i=1}^n \xi_i (\hat Z_{ij}-\hat  Z_j)}{\sqrt{n}\hat\sigma_j^{\tau}} > t\Big\}.
\end{align*}
Using the argument in the proof of Proposition \ref{prop: hatYp-kn}, it can be shown that the following bound holds almost surely for any real numbers $s_{1, n}' \leq s_{2, n}'$,
\begin{align*}
\dK\Big(\L( M_{k_n}^*(\hat Z)|X),\L(M_{p}^*(\hat Z)|X)\Big) \  \leq  \ \P\big(A'(s_{2, n}') |X \big) \, + \, \P\big(B'(s_{1, n}') |X \big).
\end{align*}
If we choose 
\begin{align*}
s_{1,n}'=c_1' k_n^{-\beta(1-\tau)/4} \log(n)^{3/2}, \quad s_{2,n}'=c_2'l_n^{-\beta(1-\tau)} \sqrt{\log(d_n)},
\end{align*}
where $c_1', c_2' > 0$ are constants not depending on $n$, then $s_{1,n}' \leq s_{2,n}'$ holds when $n$ is sufficiently large (regardless of the particular values of $c_1'$ and $c_2'$). Recall also that $d_n$ is defined in~\eqref{eqn:dndef}.
Lemma \ref{Lemma: B'(t1') order} shows that there is a choice of $c_1'$ such that random variable $\P\big(B'(s_{1, n}') |X\big)$ is at most $n^{-1}$ with probability at least $1-c n^{-(1+\delta)}$. To deal with and $\P\big(A'(s_{2, n}') |X\big)$, notice that
\begin{align*}
\P\big(A'(s_{2, n}') |X\big) \leq  \P(\tilde M_{k_n}(X) \leq s_{2,n}') \ + \  \dK(\L(\tilde M_{k_n}(X),\L(M_{k_n}^*(\hat Z)|X)).
\end{align*}
Lemma \ref{Lemma: tilde MX} shows there is a choice of $c_2'$ such that the first term on the right hand side is of order $n^{-1/2}$. Finally, the second term is of order $n^{-1/2+\epsilon}$ with probability at least $1-cn^{-\delta/4}$ by Proposition \ref{prop: gaus-compa} and Lemma \ref{lemma: XZ kn-kn}.
\qed

\begin{lemma}
\label{Lemma: B'(t1') order}
If the conditions of Theorem~\ref{thm:main} hold, then there are constants $c,c_1'>0$ not depending on $n$ such that the following event holds with probability at least $1-cn^{-(1+\delta)}$ when $s_{1,n}'=c_1' k_n^{-\beta(1-\tau)/4} \log(n)^{3/2}$,
\begin{align*}
\P\big(B'(s_{1,n}') | X\big) \leq \frac{1}{n}.
\end{align*}
\end{lemma}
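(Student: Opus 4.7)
Mimicking the blueprint of Lemma~\ref{Lemma: B order}, the plan is to exploit the conditional Gaussianity of the multiplier bootstrap in place of Rosenthal's inequality. After conditioning on the entire data $X$ (both hold-out and non-hold-out observations), the variables
$$W_j \;=\; \frac{1}{\hat\sigma_j^{\tau}\sqrt{n}}\sum_{i=1}^n \xi_i(\hat Z_{ij}-\hat Z_j), \qquad j\in J(k_n)^c,$$
are jointly centered Gaussian with conditional variances
$$V_j \;=\; \frac{1}{n\,\hat\sigma_j^{2\tau}}\sum_{i=1}^n(\hat Z_{ij}-\hat Z_j)^2.$$
Chebyshev's inequality~\eqref{eqn:Chebyshev} at $q=2\vee\log(n)$, combined with~\eqref{eqn:maxout} and the Gaussian moment bound~\eqref{eqn:gaussianLq}, then yields the almost-sure estimate
$$\P\big(B'(s_{1,n}')\,\big|\,X\big) \;\leq\; \frac{(c\sqrt{q}\,)^q}{(s_{1,n}')^q}\sum_{j\in J(k_n)^c} V_j^{q/2},$$
reducing the lemma to bounding this sum on a high-probability event.

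\textbf{Bounding $V_j$ and summing.} Since $|\hat Z_{ij}|\leq |X_{ij}-\tilde X_j|$ and centering can only decrease the empirical second moment, $V_j\leq \frac{1}{n\hat\sigma_j^{2\tau}}\sum_{i=1}^n(X_{ij}-\tilde X_j)^2$. Expanding $(X_{ij}-\tilde X_j)^2\leq 2X_{ij}^2+2\tilde X_j^2$ and combining Rosenthal's inequality (Lemma~\ref{Lemma: Rosenthal's inequality}) on $\frac{1}{n}\sum_i X_{ij}^2-\sigma_j^2$ with Lemma~\ref{Lemma: MOM mean} for $\tilde X_j$ yields $\frac{1}{n}\sum_i(X_{ij}-\tilde X_j)^2\lesssim \sigma_j^2$ simultaneously over $j\in J(k_n)^c$ on an event of probability at least $1-cn^{-(1+\delta)}$. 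On the same event, the hybrid bound $\sigma_j/\hat\sigma_j^{\tau}\lesssim \sigma_j^{(1-\tau)/2}$ from Lemma~\ref{Lemma: MOM variance} (already used in Lemma~\ref{Lemma: B order}) gives $V_j\lesssim \sigma_j^{1-\tau}$ uniformly over $j\in J(k_n)^c$. Variance decay (Assumption~\ref{A:model}(\ref{A:var})) then implies
$$\sum_{j\in J(k_n)^c} V_j^{q/2} \;\lesssim\; \sum_{j\geq k_n} j^{-\beta(1-\tau)q/2} \;\lesssim\; \frac{k_n^{1-\beta(1-\tau)q/2}}{\beta(1-\tau)q/2-1}$$
for $q\asymp\log(n)$ large enough that the exponent is negative. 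Taking $q$-th roots, the factor $k_n^{1/q}$ contributes only $\mathcal{O}(1)$, so $\|\max_j W_j\|_{L^q|X}\lesssim \sqrt{\log n}\;k_n^{-\beta(1-\tau)/2}$; this is in fact stronger than the stated target $s_{1,n}'=c_1'\log(n)^{3/2}k_n^{-\beta(1-\tau)/4}$, so any sufficiently large $c_1'$ produces $\P(B'(s_{1,n}')|X)\leq e^{-q}\leq 1/n$.

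\textbf{Main obstacle.} The principal technical difficulty is the simultaneous control of $V_j$ over the large index set $J(k_n)^c$. For such $j$, $\sigma_j$ can be dwarfed by the MOM fluctuation floor of $\hat\sigma_j$, so the naive relation $\hat\sigma_j^{\tau}\asymp\sigma_j^{\tau}$ fails; only the hybrid bound $\sigma_j/\hat\sigma_j^{\tau}\lesssim \sigma_j^{(1-\tau)/2}$ keeps the denominators under control and supplies the effective geometric decay needed for the tail sum. Furthermore, the requirement that the empirical second-moment control hold uniformly over $\mathcal{O}(p)$ coordinates is what pins the overall exceptional probability down to $1-cn^{-(1+\delta)}$ rather than a larger power of $n$.
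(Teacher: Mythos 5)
Your overall architecture is right — condition on $X$, exploit the Gaussianity of the multipliers via~\eqref{eqn:gaussianLq}, reduce to a uniform bound on the conditional variances $V_j$, and then sum over $J(k_n)^c$ using variance decay. This is essentially what the paper's Lemma~\ref{Lemma: Lq norm gaussian hat Z} does. But there is a real gap in the way you bound $V_j$. After discarding the truncation through the chain $|\hat Z_{ij}|\le|X_{ij}-\tilde X_j|$ and expanding $(X_{ij}-\tilde X_j)^2\le 2X_{ij}^2+2\tilde X_j^2$, you invoke Rosenthal's inequality on $\tfrac{1}{n}\sum_i X_{ij}^2-\sigma_j^2$ to conclude $\tfrac{1}{n}\sum_i X_{ij}^2\lesssim\sigma_j^2$ \emph{simultaneously over all $j\in J(k_n)^c$} with probability $1-cn^{-(1+\delta)}$. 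This step fails for two linked reasons. First, Assumption~\ref{A:model}(\ref{A:moments}) only controls $L^{4+\delta}$ moments of $X_{ij}$, hence $L^{2+\delta/2}$ moments of $X_{ij}^2$, so you cannot apply Rosenthal at $q\asymp\log n$ to the untruncated variables; the only reason the paper can take $q\asymp\log n$ is that it works throughout with the truncated $\hat Y_{ij}$, which satisfy the almost-sure bound $|\hat Y_{ij}|\le\hat t_j$. Second, even granting a per-$j$ bound, you cannot afford a constant-factor claim $\tfrac{1}{n}\sum_i X_{ij}^2\lesssim\sigma_j^2$ across the index set $J(k_n)^c$, which may contain on the order of $p$ coordinates with $p$ far exceeding any power of $n$, without a union-bound mechanism that exploits the decay of $\sigma_j$.

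This is precisely why the paper's Lemma~\ref{Lemma: Lq norm gaussian hat Z} deliberately settles for the weaker bound $\tfrac{1}{n}\sum_i\hat Z_{ij}^2\lesssim q\,\sigma_j^{(\tau+3)/2}$ — sacrificing a factor $\sigma_j^{(\tau-1)/2}\ge 1$ relative to $\sigma_j^2$ — so that the exceptional probabilities can be summed over $J(k_n)^c$ via $\sum_{j\ge k_n}j^{-\beta(1-\tau)q/4}\lesssim 1$. That sacrifice is exactly what produces the paper's exponent $k_n^{-\beta(1-\tau)/4}$ instead of your claimed $k_n^{-\beta(1-\tau)/2}$; the fact that your bound comes out strictly sharper than the one the authors target in $s_{1,n}'$ is the telltale sign that it was obtained by a step that is too strong to justify. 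To repair your argument, keep the truncation (bound $|\hat Z_{ij}|\le|\hat Y_{ij}|+|\tilde X_j|$ rather than $|X_{ij}-\tilde X_j|$), apply Rosenthal conditionally on $X'$ using $|\hat Y_{ij}|\le\hat t_j$, and allow a power-of-$\sigma_j$ slack in the per-coordinate bound so that the union bound over $J(k_n)^c$ becomes geometrically summable — i.e.\ recreate the content of Lemma~\ref{Lemma: Lq norm gaussian hat Z}.
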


\begin{proof}
Notice that 
\begin{align*}
\max_{j\in J(k_n)^c}\frac{\sum_{i=1}^n \xi_i (\hat Z_{ij}-\hat  Z_j)}{\sqrt{n}\hat\sigma_j^{\tau}} \leq \max_{j\in J(k_n)^c} \frac{\sigma_j^{(\tau+1)/2}}{\hat \sigma_j^{\tau}} \ \cdot \max_{j\in J(k_n)^c}\frac{\sum_{i=1}^n \xi_i (\hat Z_{ij}-\hat  Z_j)}{\sqrt{n} \sigma_j^{(\tau+1)/2}}.
\end{align*}
The first factor on the right side is of order 1 with probability at least $1-cn^{-(2+\delta)}$ by Lemma \ref{Lemma: MOM var anti-con}. To handle the second factor, let $q=\max\{2, (1+\delta)\log(n)\}$. The idea of the rest of the proof is to construct a number $b_n$ such that the following event holds with high probability for every realization of the data,
\begin{align*}
\Big\|\max_{j\in J(k_n)^c}\frac{\sum_{i=1}^n \xi_i (\hat Z_{ij}-\hat  Z_j)}{\sqrt{n} \sigma_j^{(\tau+1)/2}} \Big\|_{L^q|X} \leq b_n.
\end{align*}
This will lead to the statement of the lemma via Chebyshev's inequality, because it will turn out that $s_{1,n}' \asymp b_n$. To construct $b_n$, first observe that
\begin{align*}
\Big\|\max_{j\in J(k_n)^c}\frac{\sum_{i=1}^n \xi_i (\hat Z_{ij}-\hat  Z_j)}{\sqrt{n} \sigma_j^{(\tau+1)/2}} \Big\|_{L^q|X}^q 
& \leq \sum_{j \in J(k_n)^c} \sigma_j^{-q (\tau+1)/2} \Big\|\frac{\sum_{i=1}^n \xi_i (\hat Z_{ij}-\hat  Z_j)}{\sqrt{n}}\Big\|_{L^q|X}^q  \\
& \leq c^q q^{3q/2}  \sum_{j \in J(k_n)^c} \sigma_{j}^{(1-\tau)q/4} \\
& \leq c^q \frac{q^{3q/2} }{(1-\tau)q\beta/4-1}  k_n^{1-(1-\tau)q\beta/4} 
\end{align*}
where the second inequality holds with probability at least $1-cn^{-(1+\delta)}$ by Lemma \ref{Lemma: Lq norm gaussian hat Z}, and the third inequality follows from the fact that $q\beta(1-\tau)/4 > 1$ when $n$ is sufficiently large. Since $q \asymp \log(n)$, we have
$$
\Big(\frac{1}{(1-\tau)q\beta/4-1} k_n \Big)^{1/q} \lesssim 1,
$$
and so the event
\begin{align*}
\bigg\|\max_{j\in J(k_n)^c}\frac{\sum_{i=1}^n \xi_i (\hat Z_{ij}-\hat  Z_j)}{\sqrt{n} \sigma_j^{(\tau+1)/2}} \bigg\|_{L^q|X} \leq c q^{3/2} k_n^{-(1-\tau)\beta/4}
\end{align*}
holds with probability at least $1-cn^{-(1+\delta)}$. Thus, we may take $b_n$ to be of the form $b_n = c q^{3/2} k_n^{-(1-\tau)\beta/4}$, and there is a choice of $c_1'$ such that the stated result holds.
\end{proof}

\begin{lemma}
\label{Lemma: Lq norm gaussian hat Z}
Let $q=\max\{2, (1+\delta)\log(n)\}$ and suppose the conditions of Theorem~\ref{thm:main} hold. Then, there is a constant $c>0$ not depending on $n$ such that the bound 
$$
\bigg\|\frac{1}{\sqrt n} \sum_{i=1}^n \xi_i (\hat Z_{ij}-\hat  Z_j)\bigg\|_{L^q|X} 
 \leq cq^{3/2} \sigma_j^{\frac{\tau+3}{4}}
$$
holds simultaneously over all $j\in J(k_n)^c$ with probability at least $1-cn^{-(1+\delta)}$.
\end{lemma}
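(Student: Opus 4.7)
Because $\xi_1,\ldots,\xi_n$ are i.i.d.\ $N(0,1)$ and independent of the data $X$, the sum $S_j:=\tfrac{1}{\sqrt n}\sum_i \xi_i(\hat Z_{ij}-\hat Z_j)$ is, conditional on $X$, a centered Gaussian with variance $V_j:=\tfrac{1}{n}\sum_i(\hat Z_{ij}-\hat Z_j)^2$. By the Gaussian $L^q$ bound~\eqref{eqn:gaussianLq}, $\|S_j\|_{L^q|X}\lesssim \sqrt q\,\sqrt{V_j}$, so since $V_j\leq A_j:=\tfrac{1}{n}\sum_i \hat Z_{ij}^2$ and $\sigma_j^{(\tau+3)/2}\geq \sigma_j^2$ (because $\sigma_j\leq 1$ and $\tau<1$), it suffices to establish, with probability at least $1-cn^{-(1+\delta)}$, that $A_j\lesssim q^2\sigma_j^{(\tau+3)/2}$ simultaneously for all $j\in J(k_n)^c$. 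The plan is to work throughout on the ``\textsc{mom} event'' furnished by Lemmas~\ref{Lemma: MOM mean} and~\ref{Lemma: MOM variance}, which has failure probability at most $cn^{-(2+\delta)}$, and on which $|\tilde X_j|\lesssim \sigma_j n^{-1/2+\epsilon/2}$ and $\hat\sigma_j\asymp \sigma_j$ uniformly in $j$. The bound $|\hat Z_{1j}|\leq |X_{1j}-\tilde X_j|$, together with Assumption~\ref{A:model}(\ref{A:moments}), then yields $\E[\hat Z_{1j}^2|X']\lesssim \sigma_j^2$ and $\E[\hat Z_{1j}^4|X']\lesssim \sigma_j^4$, while the truncation enforces $|\hat Z_{1j}|\leq \hat t_j\lesssim \sqrt n\,\sigma_j$.

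Conditional on $X'$, the variables $(\hat Z_{ij}^2)_{i=1}^n$ are i.i.d., so Rosenthal's inequality (Lemma~\ref{Lemma: Rosenthal's inequality}) applied to $A_j$ gives
$$\|A_j-\E[A_j|X']\|_{L^q|X'}\lesssim q\,\max\!\Big\{\tfrac{\sigma_j^2}{\sqrt n},\,n^{-1+1/q}\|\hat Z_{1j}^2\|_{L^q|X'}\Big\}.$$
The critical estimate is the second argument of the max, which I would bound by interpolating between the truncation level and the $L^{4+\delta}$ moment: for $2q\geq 4+\delta$,
$$\|\hat Z_{1j}\|_{L^{2q}|X'}^{2q}\leq \hat t_j^{\,2q-4-\delta}\,\E\big[|\hat Z_{1j}|^{4+\delta}\,\big|\,X'\big]\lesssim n^{\,q-(4+\delta)/2}\,\sigma_j^{2q}.$$
Hence $\|\hat Z_{1j}^2\|_{L^q|X'}\lesssim n^{1-(4+\delta)/(2q)}\sigma_j^2$, and since $q\asymp \log n$, one checks that $n^{-(2+\delta)/(2q)}\lesssim 1$. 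Combined with the trivial bound $\E[A_j|X']\lesssim \sigma_j^2$, this delivers the pointwise estimate $\|A_j\|_{L^q|X'}\lesssim q\sigma_j^2$ on the \textsc{mom} event.

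To promote this to a uniform bound over $j\in J(k_n)^c$, whose cardinality may exceed $n$, a naive union bound fails; I would instead apply Markov's inequality to the maximum, exploiting the rapid variance decay:
$$\E\!\left[\max_{j\in J(k_n)^c}\!\Big(\tfrac{A_j}{\sigma_j^{(\tau+3)/2}}\Big)^{\!q}\;\bigg|\;X'\right]\leq \sum_{j\geq k_n+1}\!\sigma_j^{-q(\tau+3)/2}\|A_j\|_{L^q|X'}^q\lesssim (cq)^q\!\!\sum_{j\geq k_n+1}\!\sigma_j^{q(1-\tau)/2}.$$
By Assumption~\ref{A:model}(\ref{A:var}), the tail sum is $\lesssim k_n^{1-\beta(1-\tau)q/2}/q$ once $q$ is large, and Markov with threshold $t=Cq^2$ then yields the bound $(c/C)^q k_n^{1-\beta(1-\tau)q/2}/q$. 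Recalling that $k_n$ is polynomial in $n$ by construction, choosing the constant $C$ sufficiently large (much as in the proof of Lemma~\ref{Lemma: B order}) drives this below $n^{-(1+\delta)}$, which, combined with the failure probability of the \textsc{mom} event, completes the argument. The main technical obstacle is the moment bookkeeping in the Rosenthal step: one must exploit the truncation level $\hat t_j\asymp \sqrt n\sigma_j$ in tandem with the $L^{4+\delta}$-$L^2$ equivalence so that the large-deviation term stays of the same order as the variance term. The secondary subtlety is obtaining simultaneity over the potentially huge set $J(k_n)^c$, which depends on the super-polynomial smallness of $(c/C)^q$ for $q\asymp \log n$ rather than any crude union bound.
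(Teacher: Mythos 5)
Your overall architecture (Gaussian $L^q$ bound, Rosenthal on $A_j=\tfrac1n\sum_i\hat Z_{ij}^2$, Markov on the maximum exploiting the variance decay, with the $\sigma_j^{q(1-\tau)/2}$ tail sum closing the union bound) mirrors the paper's plan, and your moment-interpolation trick for $\|\hat Z_{1j}\|_{L^{2q}|X'}$ is a neat substitute for the paper's use of the crude bound $|\hat Y_{1j}|\le\hat t_j$ inside Rosenthal. However, there is a genuine gap in the intermediate stage. You propose to work on a \textsc{mom} event with failure probability $cn^{-(2+\delta)}$ on which $|\tilde X_j|\lesssim\sigma_j n^{-1/2+\epsilon/2}$ and $\hat\sigma_j\asymp\sigma_j$ \emph{uniformly over $j\in J(k_n)^c$}. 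That event does not follow from Lemmas~\ref{Lemma: MOM mean} and~\ref{Lemma: MOM variance}. The sharp per-coordinate rates in parts~\eqref{Equa: MOM mean 1}/\eqref{Equa: MOM variance 1} only have failure probabilities of order $(c/n)^{\epsilon b_n/c}$, and since $b_n\asymp\log n$ this is polynomially small in $n$ --- not small enough to survive a union bound over $J(k_n)^c$, whose cardinality $p-k_n$ can be arbitrarily large compared to any fixed power of $n$. That is exactly why the paper introduces the weakened versions~\eqref{Equa: MOM mean sigma}/\eqref{Equa: MOM mean 2} and~\eqref{Equa: MOM variance 2}, which incur a $\sigma_j^{-\theta}$ loss but have failure probabilities of order $(c\sigma_j)^{\theta b_n/c}$ that decay in $j$ and so \emph{are} summable over $J(k_n)^c$.

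Concretely, your downstream inputs --- $\E[\hat Z_{1j}^2\mid X']\lesssim\sigma_j^2$, $\E[|\hat Z_{1j}|^{4+\delta}\mid X']\lesssim\sigma_j^{4+\delta}$, and $\hat t_j\lesssim\sqrt n\,\sigma_j$ --- all rely on $|\tilde X_j|\lesssim\sigma_j$ and $\hat\sigma_j\lesssim\sigma_j$; over $J(k_n)^c$ only the $\sigma_j^{1-\theta}$-degraded versions are available with the requisite simultaneity. Reassuringly, once you carry the $\sigma_j^{1-\theta}$ loss (with $\theta=(1-\tau)/4$) through your interpolation and Rosenthal steps you still land on $\|A_j\|_{L^q\mid X'}\lesssim q\,\sigma_j^{2(1-\theta)}=q\,\sigma_j^{(\tau+3)/2}$, which is precisely the needed order, so the approach is salvageable after the correction. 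This is effectively what the paper does, except it decomposes $|\hat Z_{ij}|\le|\hat Y_{ij}|+|\tilde X_j|$ and applies Rosenthal to $\hat Y_{ij}^2$ using the blunt truncation bound $|\hat Y_{1j}|\le\hat t_j$, handling $|\tilde X_j|$ separately via Lemma~\ref{Lemma: MOM mean}\eqref{Equa: MOM mean 2}, whereas you keep $\hat Z_{ij}$ whole and interpolate; either works once the uniformity claim is repaired.
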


\noindent \emph{Proof.}
The $L^q$-norm bound for centered Gaussian random variables in~\eqref{eqn:gaussianLq} gives
\begin{align}\label{eqn:firststepZhat}
\bigg\|\frac{\sum_{i=1}^n \xi_i (\hat Z_{ij}-\hat  Z_j)}{\sqrt{n}}\bigg\|_{L^q|X} 
%& \leq c \sqrt{q} \Big\|\frac{\sum_{i=1}^n \xi_i (\hat Z_{ij}-\hat  Z_j)}{\sqrt{n}}\Big\|_{L^2|X}
\ \leq \ c \sqrt{q} \bigg( \frac{1}{n} \sum_{i=1}^n \hat Z_{ij}^2 \bigg)^{1/2}.
\end{align}
To develop a high-probability bound for the right side, note that we have $|\hat Z_{ij}| \leq |\hat Y_{ij}| + |\tilde X_j |$ for any fixed $j$, and so
\begin{align*}
\frac{1}{n} \sum_{i=1}^n \hat Z_{ij}^2 \ \leq \  \frac{2}{n}\sum_{i=1}^n \hat Y_{ij}^2 \ + \ 2|\tilde X_j |^2.
\end{align*}
Here, we apply Lemma \ref{Lemma: MOM mean}\eqref{Equa: MOM mean 2} with $\theta=(1-\tau)/4$ in the notation used there. This implies there is a constant $c>0$ not depending on $n$ such that the bound
\begin{equation*}
|\tilde X_j|^2 \leq c \sigma_j^{\frac{\tau+3}{2}}
\end{equation*}
holds simultaneously over all $j\in J(k_n)^c$ with probability at least $1-cn^{-(1+\delta)}$. Next, Rosenthal's inequality for non-negative random variables (Lemma~\ref{Lemma: Rosenthal's inequality})  gives
\begin{equation*}
\begin{split}
    \bigg\|\frac{1}{n}\sum_{i=1}^n \hat Y_{ij}^2\bigg\|_{L^q|X'} 
    & \ \leq \  c q\max\Big\{ \|\hat Y_{1j}^2\|_{L^1|X'}\, , \, n^{-1+1/q}\|\hat Y_{1j}^2\|_{L^q|X'}\Big\}\\[0.1cm]
& \ \leq \ cq\max\Big\{ \|X_{1j}^2\|_{L^1}\, , \, n^{-1+1/q}\hat t_j^2\Big\}\\[0.1cm]
& \ \leq \ cq (\sigma_j^2+\hat\sigma_j^2),
    \end{split}
\end{equation*}
where the second step uses $|\hat Y_{1j}|\leq |X_{1j}|\wedge \hat t_j$. Applying Chebyshev's inequality conditionally on $X'$ shows that the event
\begin{equation*}
    \P\Big(\frac{1}{n}\sum_{i=1}^n \hat Y_{ij}^2 \ \geq \ ceq(\sigma_j^2+\hat\sigma_j^2) \sigma_{j}^{\frac{\tau-1}{4}} \Big|X'\Big) \ \leq \ e^{-q} \sigma_{j}^{\frac{(1-\tau)q}{4}} \ \leq  cn^{-(1+\delta)} \sigma_{j}^{\frac{(1-\tau)q}{4}}
    \end{equation*}
holds with probability 1, and thus the unconditional version of the left hand side is also at most $cn^{-(1+\delta)} \sigma_{j}^{\frac{(1-\tau)q}{4}}$. Consequently, the event
\begin{align*}
\frac{1}{n}\sum_{i=1}^n \hat Y_{ij}^2 \ \leq \ ceq(\sigma_j^2+\hat\sigma_j^2) \sigma_{j}^{\frac{\tau-1}{4}} 
\end{align*}
holds simultaneously over all $j\in J(k_n)^c$ with probability at least $1-cn^{-(1+\delta)}$ since 
$$
\sum_{j \in J(k_n)^c} j^{-\frac{(1-\tau)q \beta}{4}} \lesssim k_n^{-\frac{(1-\tau)q\beta}{4}+1} \lesssim 1.
$$
Finally, Lemma~\ref{Lemma: MOM variance}\eqref{Equa: MOM variance 2} implies there is a constant $c>0$ not depending on $n$ that the bound
$$\hat\sigma_j^2\leq c\sigma_j^{\frac{\tau+7}{4}}$$
holds simultaneously over all $j\in J(k_n)^c$ with probability at least $1-cn^{-(2+\delta)}$. Combining results above, we have that the bound
\begin{align}
\label{equa: sum hatZij}
\frac{1}{n} \sum_{i=1}^n \hat Z_{ij}^2 \ \leq \  cq \sigma_j^{\frac{\tau+3}{2}}
\end{align}
holds simultaneously over $j\in J(k_n)^c$ with probability at least $1-cn^{-(1+\delta)}$, which completes the proof.\qed

\begin{lemma}
\label{Lemma: moments closeness} If the conditions of Theorem~\ref{thm:main} hold, then there is a constant $c>0$ not depending on $n$ such that the following bounds hold for all $j, k \in \{1,\dots,p\}$, 
\begin{align*}
|\E(Y_{1j})| & \, \leq \,  c \sigma_j n^{-\frac{3}{2}} \\
|\E(\hat Y_{1j} | X')| & \, \leq \, c\sigma_j^4 \hat \sigma_j^{-3} n^{-\frac{3}{2}} \ \ \textup{ (almost surely)}\\
|\Cov(Y_{1j}, Y_{1k}) - \Cov(X_{1j}, X_{1k})| & \, \leq \,  c \sigma_j \sigma_k n^{-1}.
\end{align*}
\end{lemma}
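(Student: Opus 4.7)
All three bounds rest on the same core observation: truncation only changes $X_{1j}$ on the event $\{|X_{1j}|>t_j\}$, and on this event one can replace an indicator by a polynomial factor $(|X_{1j}|/t_j)^s \geq 1$ for any $s>0$, converting the small-probability truncation effect into powers of $n^{-1/2}$ (via $t_j=\sigma_j\sqrt n$) while invoking moment equivalence to control $\E|X_{1j}|^{s+\cdot}$. Throughout, Assumption~\ref{A:model}(\ref{A:moments}) applied with $v=e_j$ yields $\|X_{1j}\|_{L^{4+\delta}}\leq C\sigma_j$, and hence $\|X_{1j}\|_{L^4}\leq C\sigma_j$.

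For bound (1), since $\E(X_{1j})=0$ (by the centering convention on page~\pageref{sigmaconvention}), write $\E(Y_{1j})=\E(Y_{1j}-X_{1j})$ and note that $|Y_{1j}-X_{1j}|\leq |X_{1j}|\,1\{|X_{1j}|>t_j\}$. On the truncation event, $|X_{1j}|\leq |X_{1j}|^4/t_j^3$, so
\begin{equation*}
|\E(Y_{1j})|\,\leq\,\E|X_{1j}|^4/t_j^3\,\lesssim\, \sigma_j^4/(\sigma_j^3 n^{3/2})\,=\,\sigma_j n^{-3/2}.
\end{equation*}
For bound (2), repeat the same argument conditionally on $X'$: since $\hat t_j=\hat\sigma_j\sqrt n$ is $X'$-measurable while $X_{1j}$ is independent of $X'$,
\begin{equation*}
|\E(\hat Y_{1j}\mid X')|\,\leq\,\E(|X_{1j}|\,1\{|X_{1j}|>\hat t_j\}\mid X')\,\leq\,\E|X_{1j}|^4/\hat t_j^3\,\lesssim\,\sigma_j^4\hat\sigma_j^{-3}n^{-3/2}.
\end{equation*}

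For bound (3), decompose
\begin{equation*}
\Cov(Y_{1j},Y_{1k})-\Cov(X_{1j},X_{1k})\,=\,\bigl[\E(Y_{1j}Y_{1k})-\E(X_{1j}X_{1k})\bigr]\,-\,\E(Y_{1j})\E(Y_{1k}),
\end{equation*}
where the second term is $\mathcal{O}(\sigma_j\sigma_k n^{-3})$ by (1). For the first, write
\begin{equation*}
Y_{1j}Y_{1k}-X_{1j}X_{1k}\,=\,(Y_{1j}-X_{1j})Y_{1k}\,+\,X_{1j}(Y_{1k}-X_{1k})
\end{equation*}
and treat the two summands symmetrically. Using $|Y_{1k}|\leq |X_{1k}|$ together with the pointwise bound $|X_{1j}|\,1\{|X_{1j}|>t_j\}\leq |X_{1j}|^3/t_j^2$, H\"older's inequality with exponents $4/3$ and $4$ yields
\begin{equation*}
\E\bigl|(Y_{1j}-X_{1j})Y_{1k}\bigr|\,\leq\,\E(|X_{1j}|^3|X_{1k}|)/t_j^2\,\leq\,\|X_{1j}\|_{L^4}^3\|X_{1k}\|_{L^4}/t_j^2\,\lesssim\,\sigma_j\sigma_k n^{-1},
\end{equation*}
and the same bound holds for the other summand. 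Combining gives (3).

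The only nontrivial choice in the argument is picking the power of $|X_{1j}|/t_j$ that is inserted into the indicator: using $(|X_{1j}|/t_j)^3$ in (1)-(2) matches the third-order savings $n^{-3/2}$, whereas in (3) the presence of an extra factor of $|X_{1k}|$ constrains us to use $(|X_{1j}|/t_j)^2$ so that the remaining $|X_{1j}|^3|X_{1k}|$ is still controlled by fourth-moment equivalence; any finer choice would exceed available moments or worsen the $\sigma_j,\sigma_k$ dependence. No lower-tail or dimension-dependent considerations arise, so the lemma should follow cleanly with this plan.
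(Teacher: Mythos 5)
Your proposal is correct and takes essentially the same route as the paper: isolate the truncation event $\{|X_{1j}|>t_j\}$, control it using the $L^4$ moment equivalence from Assumption~\ref{A:model}(\ref{A:moments}) together with $t_j=\sigma_j\sqrt n$, and run the argument conditionally on $X'$ for the $\hat Y$ version. The only cosmetic differences are that you insert powers of $|X_{1j}|/t_j$ directly rather than going through H\"older plus Chebyshev (which yields the identical bound $\E|X_{1j}|^4/t_j^3$), and for the covariance you use the splitting $Y_{1j}Y_{1k}-X_{1j}X_{1k}=(Y_{1j}-X_{1j})Y_{1k}+X_{1j}(Y_{1k}-X_{1k})$ in place of the paper's observation that the difference vanishes unless one of the truncation events fires; you also explicitly account for the lower-order $\E(Y_{1j})\E(Y_{1k})$ term, which is a slightly more careful bookkeeping than what is written in the paper.
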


\noindent \emph{Proof.} We may assume without loss of generality that $\E(X_{1j})=0$ for all $j\in\{1,\dots,p\}$. Observe that Assumption~\ref{A:model}(\ref{A:moments}) gives
\begin{align*}
|\E(Y_{1j})|
& \leq \E (|X_{1j}| 1 \{ |X_{1j}| \geq t_j\})  \\
& \leq \| X_{1j}\|_{L^4}\| 1\{ |X_{1j}| \geq t_j\}\|_{L^{4/3}}\\
& \leq c \sigma_j \Big(\ts\frac{\|X_{1j}\|_{L^4}^4}{t_j^4}\Big)^{3/4}\\
& \leq c\sigma_j n^{-\frac{3}{2}}.
\end{align*}
Second, the stated bound on $|\E(\hat Y_{1j} | X')|$ can also be obtained from essentially the same argument.
Third, to bound the difference between the covariances, we use the fact that $|Y_{1j}Y_{1k}-X_{1j}X_{1k}|$ vanishes on the intersection of the events $\{|X_{1j}|\leq t_j\}$ and $\{|X_{1k}|\leq t_{k}\}$, and otherwise it is at most $2|X_{1j}X_{1k}|$. Therefore, we have
\begin{align*}
\big|\E(Y_{1j}Y_{1k})-\E(X_{1j}X_{1k})\big| & 
\ \leq \  2\E\big(|X_{1j} X_{1k}| 1 \{|X_{1j}| \geq t_j\}\big) + 2\E\big(|X_{1j} X_{1k}| 1 \{|X_{1k}| \geq t_k\}\big).
\end{align*}
The two terms on the right hand side can be handled via
\begin{align*}
\E\big(|X_{1j} X_{1k}| 1 \{|X_{1j}| \geq t_j\}\big) & \leq \|X_{1j} \|_{L^4} \|X_{1k}\|_{L^4} \| 1 \{|X_{1j}| \geq t_j\} \big\|_{L^2} \\[0.2cm]
&\leq c\sigma_j\sigma_k\Big(\ts\frac{\|X_{1j}\|_{L^4}^4}{t_j^4}\Big)^{1/2}\\
&\leq \frac{c \sigma_j \sigma_k}{n},
\end{align*}
which yields the stated result.
\qed

\begin{lemma}
\label{Lemma: Lq of Snj}
If the conditions of Theorem~\ref{thm:main} hold and $q=\max\{\log(n),2\}$, then
$$
\bigg\| \sum_{i=1}^n \frac{ Y_{ij} - \E(Y_{ij})}{\sqrt{n} \sigma_{j}} \bigg\|_{L^q} \lesssim q.
$$
\end{lemma}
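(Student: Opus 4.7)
The plan is to apply Rosenthal's inequality (cited as Lemma~\ref{Lemma: Rosenthal's inequality} elsewhere in the appendix) directly to the i.i.d.\ centered summands $Z_{ij} := Y_{ij} - \E(Y_{ij})$. Specifically, Rosenthal gives a constant $c>0$ not depending on $n$ or $j$ such that
\begin{equation*}
\Big\|\textstyle\sum_{i=1}^n Z_{ij}\Big\|_{L^q} \ \leq \ cq \max\Big\{\sqrt{n}\,\|Z_{1j}\|_{L^2},\ n^{1/q}\|Z_{1j}\|_{L^q}\Big\},
\end{equation*}
and the goal reduces to showing that both terms inside the maximum are of order $\sqrt{n}\,\sigma_j$.

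For the $L^2$ bound, since $|Y_{1j}| = |\varphi_{t_j}(X_{1j} - \mu_j)| \leq |X_{1j} - \mu_j|$ pointwise, centering only decreases the variance, so $\|Z_{1j}\|_{L^2} \leq \sqrt{\var(X_{1j})} = \sigma_j$. For the $L^q$ bound, we use the truncation crudely: $|Y_{1j}| \leq t_j = \sigma_j\sqrt{n}$ almost surely, hence $\|Z_{1j}\|_{L^q} \leq 2 t_j = 2\sigma_j\sqrt{n}$. Plugging into Rosenthal's bound yields
\begin{equation*}
\Big\|\textstyle\sum_{i=1}^n Z_{ij}\Big\|_{L^q} \ \leq \ cq \max\bigl\{\sqrt{n}\,\sigma_j,\ 2 n^{1/q} \sigma_j \sqrt{n}\bigr\} \ \leq \ 2cq\,n^{1/q}\sqrt{n}\,\sigma_j.
\end{equation*}

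Finally, since $q \geq \log(n)$, we have $n^{1/q} \leq e$, which absorbs into the implicit constant. Dividing both sides by $\sqrt{n}\,\sigma_j$ produces the stated bound of order $q$. The argument is short and uses no structure beyond boundedness from truncation and the trivial variance bound; the only mild subtlety is noting that the $L^q$ term dominates the $L^2$ term but is kept under control precisely because $q \asymp \log(n)$ is calibrated against the truncation level $t_j = \sigma_j\sqrt{n}$.
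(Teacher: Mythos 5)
Your proof is correct and follows essentially the same route as the paper's: apply Rosenthal's inequality to the centered truncated summands, bound the $L^q$ term via the almost-sure bound $|Y_{1j}| \leq \sqrt{n}\,\sigma_j$, and use $n^{1/q} \lesssim 1$ since $q \geq \log n$. The only cosmetic difference is that you bound $\var(Y_{1j}) \leq \sigma_j^2$ directly from the pointwise inequality $|Y_{1j}| \leq |X_{1j}-\mu_j|$, whereas the paper cites Lemma~\ref{Lemma: moments closeness} for the same purpose; both work.
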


\begin{proof}
Due to Rosenthal's inequality (Lemma \ref{Lemma: Rosenthal's inequality}), we have 
\begin{align*}
\bigg\| \sum_{i=1}^n \frac{ Y_{ij} - \E(Y_{ij})}{\sqrt{n} \sigma_{j}} \bigg\|_{L^q} 
& \lesssim q \max \Bigg\{\bigg\|  \sum_{i=1}^n \frac{ Y_{ij} - \E(Y_{ij})}{\sqrt{n} \sigma_{j}} \bigg\|_{L^2}, \bigg(\sum_{i=1}^n  \bigg\| \frac{Y_{ij}-\E(Y_{ij})}{ \sqrt{n} \sigma_{j}} \bigg\|_{L^q}^q\bigg)^{1/q} \Bigg\} \\
& \lesssim q \max \big\{1 , n^{1/q} \big\} \\
& \lesssim q,
\end{align*}
where the second step uses the almost-sure bound $|Y_{ij}|\leq \sqrt n \sigma_j$, as well as Lemma~\ref{Lemma: moments closeness} to relate the variance of $Y_{ij}$ with $\sigma_j$.
\end{proof}

\section{Results on median-of-means estimators}\label{sec:mom}
The results in this section will continue to follow the convention that $\max_{1\leq j\leq p}\sigma_j^2=1$, as discussed on p.\pageref{sigmaconvention}. However, to make the results easier to interpret, we will state them so that they explicitly account for the coordinate-wise means $\mu_j=\E(X_{1j})$, $j=1,\dots,p$ (even though these parameters may be assumed to be zero without loss of generality in proving Theorem~\ref{thm:main}).

\begin{lemma}
\label{Lemma: MOM mean}
Fix any constant $\theta\in (0,1)$ and suppose that the conditions of Theorem~\ref{thm:main} hold. Then, there is a constant $c\geq 1$ not depending on $n$, such that for any $j\in \{1,\dots,p\}$, the median-of-means estimator $\tilde X_{j}$  with $b_n\asymp \log(n)$ blocks satisfies 
\begin{align}
\P\Big(|\tilde X_{j}-\mu_j| \geq   \sigma_{j} n^{-1/2+\epsilon/2}\Big) 
& \ \lesssim \ \Big(\frac{c}{n}\Big)^{\frac{\epsilon}{c}b_n} \tag{i}\label{Equa: MOM mean 1}\\[0.2cm]
 \P\Big(|\tilde X_{j}-\mu_j| \geq    \sigma_j^{1-\theta} n^{-1/2+\epsilon/2}  \Big)
 & \ \lesssim \ (c\sigma_j)^{\frac{\theta}{c}b_n} \tag{ii}\label{Equa: MOM mean sigma}.
\end{align}
Furthermore, we have
\begin{align}\sum_{j \in J(k_n)}\P\Big(|\tilde X_{j}-\mu_j| \geq   \sigma_{j} n^{-1/2+\epsilon/2}\Big) 
& \ \lesssim \ n^{-(2+\delta)}
\tag{iii}\label{Equa: MOM mean 1 sum} \\[0.2cm]
 \sum_{j=1}^p \P\Big(|\tilde X_{j}-\mu_j| \geq C   \sigma_j^{1-\theta} n^{-1/2+\epsilon/2}\Big)
 & \ \lesssim \  k_n^{-\log(n)/c}. \tag{iv}\label{Equa: MOM mean 2}
\end{align}
\end{lemma}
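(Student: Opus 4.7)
The strategy is the standard median-of-means amplification: first establish polynomial concentration for each block mean using the $L^{4+\delta}$-moment hypothesis, then boost this to super-polynomial concentration for the median via a binomial Chernoff argument, exploiting that $b_n\asymp \log(n)$. The single-coordinate statements (\ref{Equa: MOM mean 1}) and (\ref{Equa: MOM mean sigma}) follow directly, while (\ref{Equa: MOM mean 1 sum}) and (\ref{Equa: MOM mean 2}) come from summing, with variance decay absorbing the dependence on $p$.

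For the single-block step, fix $j$ and note that Assumption~\ref{A:model}(\ref{A:moments}) applied to $v=e_j$ yields $\|X_{1j}-\mu_j\|_{L^{4+\delta}}\leq C\sigma_j$. Rosenthal's inequality (Lemma~\ref{Lemma: Rosenthal's inequality}) with $q=4+\delta$, applied on each block of size $\ell_n = m_n/b_n \asymp n/\log(n)$, gives $\|\bar X_j(l)-\mu_j\|_{L^{4+\delta}}\lesssim \sigma_j/\sqrt{\ell_n}$, and Markov's inequality yields, for any $t>0$,
\begin{equation*}
p_t \, := \, \P\big(|\bar X_j(l)-\mu_j|\geq t\big) \,\lesssim \, \Big(\ts\frac{\sigma_j}{t\sqrt{\ell_n}}\Big)^{4+\delta}.
\end{equation*}
Since the median of $b_n$ i.i.d.~block means can exceed $\mu_j+t$ in absolute value only if at least $b_n/2$ of the blocks do, a binomial Chernoff bound produces $\P(|\tilde X_j-\mu_j|\geq t)\lesssim (4p_t)^{b_n/2}$. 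Substituting $t=\sigma_j n^{-1/2+\epsilon/2}$ yields $p_t\lesssim (\log n)^{(4+\delta)/2} n^{-(4+\delta)\epsilon/2}$, and raising this to the $b_n/2 \asymp \log(n)$ power produces super-polynomial decay in $n$ that absorbs the logarithmic factors, giving (\ref{Equa: MOM mean 1}). Substituting instead $t=\sigma_j^{1-\theta}n^{-1/2+\epsilon/2}$ picks up an extra factor of $\sigma_j^{\theta(4+\delta)}$ in $p_t$, so after raising to the $b_n/2$ power one obtains a bound of the form $\sigma_j^{\theta(4+\delta)b_n/2}$ times an $n$-dependent factor that is super-polynomially small; since $\sigma_j\leq 1$ under our normalization and $(4+\delta)/2\geq 1\geq 1/c$ for $c\geq 1$, this is $\leq (c\sigma_j)^{\theta b_n/c}$, establishing (\ref{Equa: MOM mean sigma}).

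The summation bounds follow from these single-coordinate bounds together with Assumption~\ref{A:model}(\ref{A:var}). For (\ref{Equa: MOM mean 1 sum}), $|J(k_n)|=k_n$ is polynomial in $n$ while $(c/n)^{\epsilon b_n/c}$ decays super-polynomially, so the product is $\lesssim n^{-(2+\delta)}$ for $n$ large. For (\ref{Equa: MOM mean 2}), the same argument as for (\ref{Equa: MOM mean sigma}) but with threshold $C\sigma_j^{1-\theta}n^{-1/2+\epsilon/2}$ (where $C$ is chosen large enough to absorb universal constants) produces a bound of the form $\alpha_n^{b_n/2}\sigma_j^{\theta(4+\delta)b_n/2}$ with $\alpha_n\lesssim n^{-(4+\delta)\epsilon/2+o(1)}$; variance decay makes $\sum_{j=1}^p \sigma_j^{\theta(4+\delta)b_n/2}\lesssim \sum_{j=1}^p j^{-\beta\theta(4+\delta)b_n/2}=O(1)$ once $b_n$ is large enough that the series converges. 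The main delicate point of the proof is then the quantitative comparison in (\ref{Equa: MOM mean 2}) between $\alpha_n^{b_n/2}\asymp e^{-c'(\log n)^2}$ and the target $k_n^{-\log(n)/c}=e^{-\log(k_n)\log(n)/c}$; using the bound $\log(k_n)\lesssim \epsilon\log(n)/(\beta\vee 1)$ from Assumption~\ref{A:model}(\ref{A:var}), both exponents are of $(\log n)^2$ scale, and matching the constants reduces to taking $c$ sufficiently large in terms of $(\epsilon,\beta,\delta)$.
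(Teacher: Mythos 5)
Your proposal is correct and follows the same high-level median-of-means amplification strategy as the paper (block concentration boosted through a binomial-tail argument, then summed with the aid of variance decay), but the technical ingredients differ in a genuine way. For the block-mean concentration, the paper applies Chebyshev's inequality to obtain $\E(\xi_{jl}) \lesssim \sigma_j^2/(\ell_n t^2)$, using only two moments, whereas you invoke Rosenthal plus Markov at the $L^{4+\delta}$ level, which gives a sharper single-block tail but is not actually needed. For the median-boosting step, the paper uses Kiefer's inequality (Lemma~\ref{Lemma: Kiefer's inequality}), which bounds $\P(\frac{1}{b_n}\sum_l\xi_{jl}\geq\tfrac{1}{2})$ by $2(e\E\xi_{j1})^{b_n(1/2-\E\xi_{j1})^2}$, while you use the elementary $\P(\mathrm{Bin}(b_n,p)\geq b_n/2)\leq (4p)^{b_n/2}$; these are interchangeable here because $\E(\xi_{j1})\to 0$. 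For part~(iv), the paper splits the sum over $J(k_n)$ and $J(k_n)^c$ and treats the two ranges separately (using~(i) on $J(k_n)$ via $C\sigma_j^{1-\theta}\geq\sigma_j$, and~(ii) plus the geometric tail of $\sum_{j\geq k_n}j^{-\beta\theta b_n/c}$ on $J(k_n)^c$), whereas you bound $\sum_{j=1}^p\sigma_j^{\theta(4+\delta)b_n/2}$ directly; your observation that the series is $O(1)$ uniformly in $b_n$ (after noting $\sigma_j\leq 1$ for small $j$ and $\sigma_j\lesssim j^{-\beta}$ for large $j$) is correct, and the final comparison of the two $e^{-c(\log n)^2}$ scales is exactly the step the paper also leaves to a generous choice of $c$. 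One small slip: your bound $\log(k_n)\lesssim\epsilon\log(n)/(\beta\vee 1)$ does hold, but it comes from the definitions of $l_n$ and $k_n$ in Assumption~\ref{A:model}(\ref{A:cor}) and Appendix~\ref{sec:prelimsupp}, not from Assumption~\ref{A:model}(\ref{A:var}).
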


\noindent \emph{Proof.} Recall the notation $\bar X_j(l)=\frac{1}{\ell_n}\sum_{i\in \mathcal{B}_l} X_{ij}$ where $l=1,\dots,b_n$.  Fix $t>0$ and let $\xi_{jl}=1\big\{ |\bar X_j(l)-\mu_j| \geq  t \big\}$. Since the event $\{|\tilde X_j(l)-\mu_j|\geq t\}$ can only occur if at least half of the random variables $\xi_{j1},\dots,\xi_{jb_n}$ are 1, we must have
\begin{align*}
    \P\big( |\tilde X_{j}-\mu_j| \geq t \big) 
& \ \leq \ \P\bigg(\frac{1}{b_n} \sum_{l=1}^{b_n} \xi_{jl} \ \geq \frac{1}{2}\bigg).
\end{align*}
Next, applying Kiefer's inequality (Lemma~\ref{Lemma: Kiefer's inequality}) to the right side gives
\begin{equation}\label{eqn:Kiefer}
    \P( |\tilde X_{j}-\mu_j| \geq t ) 
     \ \lesssim \ \big(e\E(\xi_{1j})\big)^{b_n(\frac{1}{2}-\E(\xi_{j1}))^2}.
\end{equation}
Furthermore, by Chebyshev's inequality, $\E(\xi_{jl})\lesssim \  \frac{\sigma_j^2}{\ell_n t^2}$, and so if we take $t=\sigma_j n^{-1/2+\epsilon/2}$, then
\begin{equation*}
\begin{split}
    \E(\xi_{jl})
    \ \lesssim \ \frac{ n^{1-\epsilon}}{\ell_n} \ \lesssim \  n^{-\epsilon/2},
    \end{split}
\end{equation*}
where the last step uses $n/\ell_n\asymp m_n/\ell_n=b_n\asymp \log(n)$. Thus, combining this bound on $\E(\xi_{jl})$ with~\eqref{eqn:Kiefer} establishes the first claim~\eqref{Equa: MOM mean 1}. Similarly, choosing $t=\sigma_j^{1-\theta}n^{-1/2+\epsilon/2}$ in the previous argument leads to the second claim~\eqref{Equa: MOM mean sigma}.

For the fourth claim~\eqref{Equa: MOM mean 2}, we decompose the sum over $j=1,\dots,p$ along the indices in $J(k_n)$ and $J(k_n)^c$. To bound the sum over $J(k_n)^c$, we may use \eqref{Equa: MOM mean sigma} to obtain
\begin{align*}
\sum_{j\in J(k_n)^c}\P\Big( |\tilde X_{j}-\mu_j| \geq C \sigma_{j}^{1-\theta} n^{-1/2+\epsilon/2} \Big) 
& \ \lesssim \ \sum_{j\in J(k_n)^c} (c\sigma_j)^{\frac{\theta}{c}b_n}\\[0.2cm]
& \ \lesssim \  \sum_{j\geq k_n} (Cj^{-\beta})^{\frac{\theta}{c}b_n}\\[0.2cm]
& \ \lesssim \ k_n^{{-\frac{\beta\theta}{c}b_n}+1}\\[0.2cm]
& \ \lesssim \ k_n^{-\log(n)/c}.
\end{align*}

 Regarding the sum over $j\in J(k_n)$, note that $C\sigma_j^{1-\theta}\geq \sigma_j$ holds for all $j=1,\dots,p$. Therefore the bound~\eqref{Equa: MOM mean 1} gives
\begin{align*}
\sum_{j\in J(k_n)}\P\Big( |\tilde X_{j}-\mu_j| \geq  C\sigma_{j}^{1-\theta} n^{-1/2+\epsilon/2} \Big) 
& \ \leq \sum_{j\in J(k_n)}\P\Big( |\tilde X_{j}-\mu_j| \geq  C\sigma_{j} n^{-1/2+\epsilon/2} \Big) \\
& \ \lesssim \  k_n \Big(\frac{c}{n}\Big)^{\frac{\epsilon}{c}b_n}\\
& \ \lesssim \ n^{-(2+\delta)}.
\end{align*}
This leads to the third claim \eqref{Equa: MOM mean 1 sum} and completes the proof. \qed
\begin{lemma}
\label{Lemma: MOM variance}
Fix any constant $\theta\in (0,1)$, and suppose that the conditions of Theorem~\ref{thm:main} hold. Then, there is a constant $c\geq 1$ not depending on $n$, such that the following bounds hold for any $j\in\{1,\dots,p\}$,
\begin{align}
 \P\Big(|\hat{\sigma}_j^2-\sigma_j^2 |>  \sigma_j^2 n^{-1/2+\epsilon/2} \Big) & \ \lesssim \ \Big(\frac{c}{n}\Big)^{\frac{\epsilon}{c}b_n} \tag{i}\label{Equa: MOM variance 1}\\[0.2cm]
 \P\Big(|\hat{\sigma}_j^2-\sigma_j^2 | \geq   \sigma_{j}^{2-2\theta} n^{-1/2+\epsilon/2}\Big) & \ \lesssim \ (c\sigma_j^2)^{\frac{\theta}{c}b_n}.\tag{ii}
      \end{align}    
 Furthermore, we have 
\begin{align}
\sum_{j \in J(k_n)} \P\Big(|\hat{\sigma}_j^2-\sigma_j^2 |> C^2\sigma_j^{2} n^{-1/2+\epsilon/2} \Big) 
& \ \lesssim \ n^{-(2+\delta)}.\tag{iii}\label{Equa: MOM variance 1 sum}\\[0.2cm]
 \sum_{j=1}^p \P\Big(|\hat{\sigma}_j^2-\sigma_j^2 |> C^2\sigma_j^{2-2\theta} n^{-1/2+\epsilon/2} \Big) 
& \ \lesssim \ k_n^{-\log(n)/c}.\tag{iv}\label{Equa: MOM variance 2}
\end{align}
\end{lemma}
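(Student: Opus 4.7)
The plan is to mirror the Kiefer-inequality strategy used in Lemma~\ref{Lemma: MOM mean}, replacing the block-wise sample mean by the block-wise variance estimator $\bar\sigma_j^2(l)$. The only genuinely new ingredient is a variance bound for the single-block estimator. Observe that each $\bar\sigma_j^2(l)$ is an average of $\ell_n/2$ i.i.d.\ random variables $U_k=\tfrac{1}{2}(X_{ij}-X_{i'j})^2$, each of which is unbiased for $\sigma_j^2$. Invoking Assumption~\ref{A:model}(\ref{A:moments}) with $v=e_j$ gives $\|X_{1j}-\mu_j\|_{L^4}\leq \|X_{1j}-\mu_j\|_{L^{4+\delta}}\lesssim \sigma_j$, and the elementary inequality $(a-b)^4\leq 8(a^4+b^4)$ then yields $\E(U_1^2)\lesssim \sigma_j^4$, so that $\var(\bar\sigma_j^2(l))\lesssim \sigma_j^4/\ell_n$.

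With this variance bound in hand, define the indicator $\xi_{jl}=1\{|\bar\sigma_j^2(l)-\sigma_j^2|\geq t\}$ and note that $\{|\hat\sigma_j^2-\sigma_j^2|\geq t\}$ forces at least half of $\xi_{j1},\dots,\xi_{jb_n}$ to equal $1$. Applying Kiefer's inequality (Lemma~\ref{Lemma: Kiefer's inequality}) gives
\begin{equation*}
\P(|\hat\sigma_j^2-\sigma_j^2|\geq t) \ \lesssim \ \big(e\E(\xi_{j1})\big)^{b_n(1/2-\E(\xi_{j1}))^2},
\end{equation*}
while Chebyshev provides $\E(\xi_{j1})\lesssim \sigma_j^4/(\ell_n t^2)$. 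For claim~(i), the choice $t=\sigma_j^2 n^{-1/2+\epsilon/2}$ together with $\ell_n\asymp n/\log(n)$ produces $\E(\xi_{j1})\lesssim n^{-\epsilon/2}\log(n)$, and Kiefer then delivers the desired $(c/n)^{(\epsilon/c)b_n}$ rate. For claim~(ii), substituting $t=\sigma_j^{2-2\theta}n^{-1/2+\epsilon/2}$ gives $\E(\xi_{j1})\lesssim \sigma_j^{4\theta}\log(n)/n^\epsilon$, so Kiefer yields a bound of the form $(c\sigma_j^{4\theta})^{(\epsilon/c)b_n}$; rewriting $\sigma_j^{4\theta}=(\sigma_j^2)^{2\theta}$ and absorbing constants produces the claimed $(c\sigma_j^2)^{(\theta/c)b_n}$ bound (the case where this quantity exceeds 1 is trivial).

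The summation claims~\eqref{Equa: MOM variance 1 sum} and~\eqref{Equa: MOM variance 2} then follow by decomposing the sum along $J(k_n)$ and $J(k_n)^c$, exactly as in Lemma~\ref{Lemma: MOM mean}. For indices in $J(k_n)$, claim~(i) is uniform in $j$, so the sum is at most $k_n(c/n)^{(\epsilon/c)b_n}\lesssim n^{-(2+\delta)}$ thanks to $b_n\asymp\log(n)$. For indices in $J(k_n)^c$, claim~(ii) combines with the variance decay $\sigma_{(j)}^2\lesssim j^{-2\beta}$ from Assumption~\ref{A:model}(\ref{A:var}) to produce a tail sum of the form $\sum_{j\geq k_n} j^{-(2\beta\theta/c)b_n}\lesssim k_n^{-\log(n)/c}$, as required.

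The main obstacle is modest: it lies in verifying the variance bound $\var(\bar\sigma_j^2(l))\lesssim \sigma_j^4/\ell_n$, which is where the $L^{4+\delta}$-$L^2$ moment equivalence must be transferred from the raw covariates to the squared pairwise differences. Once this is in place, every remaining step is a near-verbatim adaptation of the proof of Lemma~\ref{Lemma: MOM mean}.
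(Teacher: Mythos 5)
Your proposal is correct and follows exactly the route the paper takes: the paper's own proof is a one-paragraph remark that instructs the reader to repeat the Kiefer-inequality argument of Lemma~\ref{Lemma: MOM mean} with the i.i.d.\ random variables $\tfrac{1}{2}(X_{ij}-X_{i'j})^2$ playing the role of $X_{ij}$ and $\sigma_j^4$ playing the role of $\sigma_j^2$, which is precisely what you carry out, including the key observation that $\var\big(\tfrac{1}{2}(X_{ij}-X_{i'j})^2\big)\lesssim\sigma_j^4$ follows from condition~(\ref{A:moments}). (A trivial slip: after Chebyshev you write $\E(\xi_{j1})\lesssim n^{-\epsilon/2}\log(n)$ for claim~(i), where the exact computation gives $n^{-\epsilon}\log(n)$; both are $\lesssim n^{-\epsilon/2}$, so nothing downstream is affected.)
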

\noindent \emph{Proof.} The proof of Lemma~\ref{Lemma: MOM mean} can be repeated with the i.i.d.~random variables $\frac{1}{2}(X_{ij} - X_{i'j})^2$, playing the role that $X_{ij}$ previously did. Also note that because  $\var\big(\frac{1}{2}(X_{ij}-X_{i'j})^2\big)\lesssim \sigma_j^4$ holds under Assumption~\ref{A:model}(\ref{A:moments}), the parameter $\sigma_j^4$ plays the role that $\sigma_j^2$ did in the context of Lemma~\ref{Lemma: MOM mean}.
\qed

For the next lemma, recall that for each $l\in\{1,\dots,b_n\}$, the $l$th blockwise variance estimate $\bar\sigma_j^2(l)$ for $\sigma_j^2$ is defined in~\eqref{eqn:barsigmadef}.
\begin{lemma}
\label{Lemma: var lower bound}
Fix any constant $\theta\in (0,1)$, and suppose that the conditions of Theorem~\ref{thm:main} hold. Then, there is a constant $c\geq 1$ not depending on $n$, such that the following bound holds for any $j\in\{1,\dots,p\}$ and any $l\in\{1,\dots,b_n\}$,    $$\P\big(\bar\sigma_{j}^2(l)\leq \sigma_j^{2+2\theta}\big)
\ \lesssim \  (c \ell_n\sigma_j^{2\theta})^{\ell_n/4}.
$$
\end{lemma}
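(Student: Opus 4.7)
The plan is to exploit the non-negativity of the summands in $\bar\sigma_j^2(l)$ together with the bounded density assumption~\ref{A:model}(\ref{A:conti}) to obtain a simple and sharp lower-tail estimate. Writing $\bar\sigma_j^2(l)=\frac{2}{\ell_n}\sum_{k=1}^{\ell_n/2} Z_k$, where the $Z_k=\frac{1}{2}(X_{i_kj}-X_{i'_kj})^2$ are i.i.d.\ non-negative with $\E(Z_k)=\sigma_j^2$ (the pairs indexing the sum in~\eqref{eqn:barsigmadef} are disjoint), the event $\{\bar\sigma_j^2(l)\leq \sigma_j^{2+2\theta}\}$ is the same as $\{\sum_{k=1}^{\ell_n/2} Z_k \leq (\ell_n/2)\sigma_j^{2+2\theta}\}$.

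The key structural observation is that since $Z_k\geq 0$, the event $\{\sum_k Z_k\leq a\}$ is contained in $\bigcap_k\{Z_k\leq a\}$, and so by independence
\[
\P\!\left(\bar\sigma_j^2(l)\leq \sigma_j^{2+2\theta}\right) \ \leq\ \left(\P\!\left(Z_1\leq (\ell_n/2)\sigma_j^{2+2\theta}\right)\right)^{\ell_n/2}.
\]
So it remains only to estimate the single-variable probability and then raise it to the $\ell_n/2$ power.

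For the single-variable probability, I would use condition~(\ref{A:conti}): the density $f_j$ of $X_{1j}/\sigma_j$ satisfies $\|f_j\|_{L^\infty}\leq C$. A standard convolution bound then gives that $W:=(X_{1j}-X_{2j})/\sigma_j$ has a density bounded by $C$, hence $\P(|W|\leq r)\leq 2Cr$ for all $r\geq 0$. Setting $Z_1/\sigma_j^2=W^2/2$, this yields
\[
\P\!\left(Z_1\leq (\ell_n/2)\sigma_j^{2+2\theta}\right)\ =\ \P\!\left(|W|\leq \sqrt{\ell_n\sigma_j^{2\theta}}\right)\ \leq\ 2C\sqrt{\ell_n\sigma_j^{2\theta}}.
\]
Raising to the $\ell_n/2$ power, we obtain the bound $(4C^2\ell_n\sigma_j^{2\theta})^{\ell_n/4}$, so taking $c=4C^2$ gives exactly the statement of the lemma.

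I do not anticipate a real obstacle here: the argument is essentially a one-variable density bound together with the non-negativity trick that converts the lower-tail of an i.i.d.\ sum into an intersection. The only care needed is verifying that the pairs $(i,i')$ with $i'-i=\ell_n/2$ inside a block $\mathcal{B}_l$ index disjoint observations (so that the $Z_k$ are independent) and that the convolution step genuinely yields an $L^\infty$ density bound for $W$; both are straightforward consequences of the definitions and of condition~(\ref{A:conti}).
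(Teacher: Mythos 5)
Your proposal is correct and matches the paper's argument essentially line for line: the paper also passes from $\{\bar\sigma_j^2(l)\le\sigma_j^{2+2\theta}\}$ to the event that \emph{every} pairwise difference is small by non-negativity (phrased as a bound on the max of the summands), factors the probability using i.i.d.-ness of the $\ell_n/2$ disjoint pairs, and then invokes Young's convolution inequality with Assumption~\ref{A:model}(\ref{A:conti}) to bound the single-pair small-ball probability by $\mathcal{O}\big((\ell_n\sigma_j^{2\theta})^{1/2}\big)$. The only differences are cosmetic normalizations (you scale by $\sigma_j$ where the paper scales by $\sqrt 2\,\sigma_j$).
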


\noindent\emph{Proof.} Because the $\ell_n/2$ terms in the definition of $\bar\sigma_j^2(l)$ are i.i.d., it follows that
\begin{equation}\label{eqn:summax}
\begin{split}
\P\big(\bar\sigma_{j}^2(l\big ) \ \leq \  \sigma_j^{2+2\theta}) & \ \leq \
    \P\Bigg(\max_{\substack{i,i'\in \mathcal{B}_l\\ i'-i=\ell_n/2}} \textstyle \frac{1}{\ell_n}(X_{ij}-X_{i'j})^2\leq \sigma_j^{2+2\theta} \Bigg)\\[0.2cm]
    & \ = \   \P\bigg( \textstyle\frac{1}{2\sigma_j^2}(X_{1j}-X_{(\ell_n/2+1)j})^2\leq \frac{1}{2}\ell_n \sigma_j^{2\theta}\bigg)^{\ell_n/2}.
    \end{split}
\end{equation}
Since the independent random variables $X_{1j}/\sigma_j$ and $X_{(1+\ell_n/2)j}/\sigma_j$ have densities whose $L^{\infty}$ norms are $\mathcal{O}(1)$ under Assumption~\ref{A:model}(\ref{A:conti}), it follows from Young's convolution inequality~\citep[p.178]{stein1971introduction} that the random variable $\frac{1}{\sqrt 2\sigma_j}(X_{1j}-X_{(\ell_n/2+1)j})$ also has a density whose $L^{\infty}$  norm is $\mathcal{O}(1)$, and so
\begin{align*}
    \P\bigg( \frac{1}{2\sigma_j^2}\big(X_{1j}-X_{(\ell_n/2+1)j}\big)^2\leq \frac{1}{2} \ell_n \sigma_j^{2\theta}\bigg) 
    & \, \lesssim \, \big(\textstyle \ell_n\sigma_j^{2\theta}\big)^{1/2}.
\end{align*}
Combining this with~\eqref{eqn:summax} completes the proof.\qed

\begin{lemma}
\label{Lemma: MOM var anti-con}
Fix any constant $\theta\in (0,1)$, and suppose that the conditions of Theorem~\ref{thm:main} hold. Then, there is a constant $c\geq 1$ not depending on $n$, such that the event
$$\max_{1\leq j\leq p } \frac{\sigma_j^{2+2\theta}}{\hat \sigma_j^2} \leq c$$
holds with probability at least $1-cn^{-(2+\delta)}$.
\end{lemma}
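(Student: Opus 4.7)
\noindent\emph{Proof plan.} The strategy is to partition $\{1,\dots,p\}$ into a ``large $\sigma_j$'' regime and a ``small $\sigma_j$'' regime, and to handle each with a different tool. The same argument proving Lemma~\ref{Lemma: var lower bound}, with an arbitrary positive number $a$ replacing $\sigma_j^{2+2\theta}$, yields the more general inequality
\begin{equation*}
\P\big(\bar\sigma_j^2(l) \leq a\big) \ \lesssim \ \big(K\ell_n a/\sigma_j^2\big)^{\ell_n/4}
\end{equation*}
for some constant $K>0$ not depending on $n$. Fix $c_0 = 1/(2K)$, define the threshold $T_n = (c_0/\ell_n)^{1/(2\theta)}$, and write $J_1 = \{j : \sigma_j > T_n\}$ and $J_2 = \{j : \sigma_j \leq T_n\}$.

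For $j\in J_1$, Assumption~\ref{A:model}(\ref{A:var}) gives $|J_1| \lesssim T_n^{-1/\beta} \lesssim n^{1/(2\beta\theta)}$. Applying Lemma~\ref{Lemma: MOM variance}\eqref{Equa: MOM variance 1} at each such $j$ and taking a union bound (with $b_n\asymp \log(n)$ having a sufficiently large multiplicative constant) shows that $\hat\sigma_j^2 \geq \sigma_j^2/2$ holds simultaneously for all $j\in J_1$ with probability at least $1-cn^{-(2+\delta)}$. On this event, $\sigma_j^{2+2\theta}/\hat\sigma_j^2 \leq 2\sigma_j^{2\theta} \leq 2$, using the convention $\sigma_j \leq 1$.

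For $j\in J_2$, the defining inequality $\sigma_j^{2\theta} \leq c_0/\ell_n$ reduces the target to proving $\hat\sigma_j^2 \geq c_0\sigma_j^2/\ell_n$, which then gives $\sigma_j^{2+2\theta}/\hat\sigma_j^2 \leq \ell_n\sigma_j^{2\theta}/c_0 \leq 1$. Taking $a = c_0\sigma_j^2/\ell_n$ in the extended inequality above yields the per-block bound $\P(\bar\sigma_j^2(l)\leq c_0\sigma_j^2/\ell_n) \lesssim (Kc_0)^{\ell_n/4} \leq (1/2)^{\ell_n/4}$, which is below $1/(2e)$ for large $n$. Running the same Kiefer's inequality step used in the proof of Lemma~\ref{Lemma: MOM mean}, with the indicator $\eta_{jl}=1\{\bar\sigma_j^2(l) \leq c_0\sigma_j^2/\ell_n\}$, then gives the per-index bound $\P(\hat\sigma_j^2 \leq c_0\sigma_j^2/\ell_n) \lesssim (e(1/2)^{\ell_n/4})^{b_n/16} \lesssim 2^{-m_n/c'}$ for some $c'>0$. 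Union-bounding over $|J_2|\leq p$ contributes $p\cdot 2^{-c'n}$, which is at most $cn^{-(2+\delta)}$ under the (implicit) subexponential growth of $p$. Combining the two regimes yields the stated bound with $c=2$.

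The main obstacle is the $J_2$ analysis, because when $\sigma_j$ is extremely small the target lower bound $c_0\sigma_j^2/\ell_n$ on $\hat\sigma_j^2$ is itself a tiny fraction of $\sigma_j^2$, and one must rule out pathological concentration of the heavy-tailed pair differences near zero. The crucial structural fact exploited in Lemma~\ref{Lemma: var lower bound} is that $\bar\sigma_j^2(l) \geq \frac{1}{\ell_n}(X_{ij}-X_{i'j})^2$ for every pair $(i,i')$ in block $l$, so the event $\{\bar\sigma_j^2(l) \leq a\}$ forces all $\ell_n/2$ independent standardized pair differences to be simultaneously small; Young's convolution inequality provides a uniform density bound on $(X_{1j}-X_{(\ell_n/2+1)j})/(\sqrt 2\sigma_j)$, giving an $\mathcal{O}(\sqrt{\ell_n a/\sigma_j^2})$ bound per pair, and independence across pairs raises this to the $\ell_n/2$ power. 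The Kiefer amplification across $b_n\asymp \log(n)$ blocks then sharpens the per-index failure probability to $2^{-\Omega(n)}$, which is what allows the union bound over all of $J_2$ to survive.
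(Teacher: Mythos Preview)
Your overall strategy---partition into a large-$\sigma_j$ regime handled by Lemma~\ref{Lemma: MOM variance}\eqref{Equa: MOM variance 1}, and a small-$\sigma_j$ regime handled by the density lower bound (Lemma~\ref{Lemma: var lower bound}) plus Kiefer's inequality---is precisely the paper's approach. Two remarks.

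First, the parenthetical about needing $b_n\asymp\log(n)$ with ``a sufficiently large multiplicative constant'' for the $J_1$ union bound is unnecessary: the per-index bound $(c/n)^{\epsilon b_n/c}$ from Lemma~\ref{Lemma: MOM variance}\eqref{Equa: MOM variance 1} decays like $n^{-\Omega(\log n)}$, which beats any polynomial factor $|J_1|\lesssim n^{1/(2\theta\beta)}$ regardless of the constant in $b_n$.

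Second, and more substantively, your $J_2$ step has a gap relative to the stated hypotheses. Your per-index bound $\P(\hat\sigma_j^2\leq c_0\sigma_j^2/\ell_n)\lesssim 2^{-m_n/c'}$ is \emph{uniform in $j$}, so after a union bound you are left with $p\cdot 2^{-\Omega(n)}$, and you invoke ``(implicit) subexponential growth of $p$'' to control it. But Theorem~\ref{thm:main} places no restriction whatsoever on $p$; the result is advertised as dimension-free. The paper avoids this by retaining the $j$-dependence: it applies Lemma~\ref{Lemma: var lower bound} at the threshold $a=\sigma_j^{2+2\theta}$, obtaining $\E(\xi_{j1})\lesssim (c\ell_n\sigma_j^{2\theta})^{\ell_n/4}\lesssim (c\ell_n j^{-2\theta\beta})^{\ell_n/4}$, so that after Kiefer the per-index failure probability decays like $j^{-\Omega(m_n)}$ and the sum over $j\in J(r_n)^c$ converges absolutely, independently of $p$. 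Your argument is easily repaired in the same way: on $J_2$, keep the $\sigma_j$-dependence in the per-block bound (use $a=\sigma_j^{2+2\theta}$ rather than flattening to $c_0\sigma_j^2/\ell_n$), which on your set $J_2$ still satisfies $K\ell_n\sigma_j^{2\theta}\leq 1/2$, and then sum the resulting $j$-dependent tail bounds.
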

\noindent \emph{Proof.} Let $r_n=\lceil n^{\epsilon/(\theta \beta)} \wedge p\rceil$. It follows from Lemma~\ref{Lemma: MOM variance}\eqref{Equa: MOM variance 1} that there is a constant $c>0$ not depending on $n$ such that the bound
\begin{equation}\label{eqn:sigratiobound}
    \max_{j\in J(r_n)}\frac{\sigma_j^2}{\hat\sigma_j^2}\leq c
\end{equation}
holds with probability at least $1-cn^{-(2+\delta)}$. Therefore, a bound of the same form must also hold for  $\max_{j\in J(r_n)}\frac{\sigma_j^{2+2\theta}}{\hat\sigma_j^2}$, since $\max_{1\leq j\leq p}\sigma_j^{2\theta}\lesssim 1$.

To complete the proof, it remains to handle the maximum of $\sigma_j^{2+2\theta}/\hat\sigma_j^2$ over indices $j$ in the complementary set $J(r_n)^c$.
Letting $\xi_{jl}=1\{\bar\sigma_{j}^2(l)\leq \sigma_j^{2+2\theta}\}$ for $l=1,\dots,b_n$,
 Kiefer's inequality (Lemma \ref{Lemma: Kiefer's inequality}) implies that the following bound holds for any $j\in J(r_n)$,
\begin{align}\label{eqn:sigmajlower}
    \P\big(\hat{\sigma}_j^2\leq \sigma_j^{2+2\theta}\big)  
    \ \leq \ \P\bigg(\frac{1}{b_n}\sum_{l=1}^{b_n} \xi_{jl} \geq \frac{1}{2}\bigg) \ \lesssim \ (e\E(\xi_{j1}))^{b_n(\frac{1}{2}-\E(\xi_{j1}))^2}.
\end{align}
Also, Lemma \ref{Lemma: var lower bound} gives 
\begin{align*}
    \E(\xi_{j1})
    & \ \lesssim \ (c \ell_n j^{-2\theta\beta})^{\ell_n/4}.
\end{align*}
Therefore, combining with with~\eqref{eqn:sigmajlower}, we conclude that
\begin{align*}
\sum_{j\in J(r_n)^c}  \P(\hat{\sigma}_j^2\leq \sigma_j^{2+\theta} )
 & \ \lesssim \  (c\ell_n)^{\ell_n/4}\sum_{j\geq  n^{\frac{\epsilon}{\theta \beta}} } (j^{-\theta\beta\ell_n/2})^{b_n/c}\\
 & \ \lesssim \ (c\ell_n)^{\ell_n/4} n^{-\frac{n}{2c}+1} \\
 & \ \lesssim n^{-(2+\delta)},
\end{align*}
where the last step uses $\ell_n\asymp n/\log(n)$. Note also that the final bound $n^{-(2+\delta)}$ can be replaced with any fixed positive power of $n^{-1}$, but the current form is all that is needed.
\qed

\section{Proof of Proposition 1}
\label{supp: moments eg}
\noindent To ease notation, we let $q=4+\delta$ throughout the proof.\\

\noindent\textbf{Elliptical case.} Suppose $X_1$ is a centered elliptical random vector of the form $X_1=\eta_1\Sigma^{1/2}Z_1/\|Z_1\|_2$, where $Z_1$ is a standard Gaussian $p$-dimensional Gaussian vector, and $\eta_1$ is independent of $Z_1$ with $\E(\eta_1^2)=p$. 
 We first check the $L^{q}$-$L^2$ moment equivalence condition~(i) with $q=4+\delta$.  Letting $w=\Sigma^{1/2}v$ for a generic vector $v\in\R^p$, a direct calculation gives
\begin{align*}
\|\langle v, X_1 \rangle\|_{L^2}^2 
&=\|\eta_1 \langle w, Z_1/\|Z_1\|_2\rangle\|_{L^2}\\
&= \|\eta_1\|_{L^2}^2 \| w \|_2^2/p\\
&= \| w \|_2^2.
\end{align*}
Because the distribution of $U_1=Z_1/\|Z_1\|_2$ is invariant to orthogonal transformations, it follows that the random variables $\langle w, U_1\rangle$ and $ \|w\|_2\langle e_1, U_1\rangle$ are equal in distribution, where $e_1$ is the first standard basis vector. Therefore,
\begin{align*}
\|\langle v, X_1 \rangle \|_{L^{q}} 
& = \|w\|_2 \|\eta_{1}\|_{L^{q}} \|U_{11}\|_{L^{q}}.
\end{align*}
The quantity $\|U_{11}\|_{L^q}$ is at most of order $1/\sqrt p$, which can be shown as follows. Due to the independence of $U_{11}$ and $\|Z_1\|_2$, we have
$\|Z_{11}\|_{L^{q} } = \| \| Z_1\|_2\|_{L^{q} }  \|U_{11}\|_{L^{q} } $. Furthermore, Lyapunov's inequality gives $\|\|Z_1\|_2\|_{L^q}= \| \| Z_1\|_2^2 \|_{L^{q/2}}^{1/2} \geq  \| \| Z_1\|_2^2 \|_{L^{1}}^{1/2} = \sqrt{p}$. 
Combining the last several steps and the assumption that $\|\eta_1\|_{L^q}\lesssim \sqrt p$, we conclude that
\begin{equation}
    \|\langle v, X_1 \rangle \|_{L^{q}} \lesssim \|w\|_2=\|\langle v, X_1 \rangle\|_{L^2},
\end{equation}
which verifies condition~(i). 
Regarding the density condition (ii), note that if $e_j$ is the $j$th standard basis vector, then the vector $w=\Sigma^{1/2}e_j/\sigma_j$ satisfies $\|w\|_2=1$. So, the discussion above shows that all the random variables  $X_{11}/\sigma_1,\dots,X_{1p}/\sigma_p$ have the same distribution, which is that of $\eta_1 \langle e_1,U_1\rangle$. In particular, if the random variable $X_{11}/\sigma_1$ has a Lebesgue density $f_1$ such that $\|f_1\|_{L^{\infty}}\lesssim 1$, then condition (ii) holds, which completes the proof in the elliptical case. 

\noindent\textbf{Separable case.} Suppose that $X_1$ has a centered separable distribution so that \smash{$X_1=\Sigma^{1/2}\zeta_1$,} where $\zeta_1=(\zeta_{11},\dots,\zeta_{1p})$ has i.i.d.~entries with $\E(\zeta_{11})=0$, and $\Var(\zeta_{11})=1$.
To check the $L^{q}$-$L^2$ moment equivalence condition~(i), it suffices to show that $\|\langle w,\zeta_1\rangle \|_{L^{q}} \lesssim   \|\langle w, \zeta_1 \rangle \|_{L^{2}}$ for any $w\in\R^p$.  Using Rosenthal's inequality (Lemma \ref{Lemma: Rosenthal's inequality}), and the assumption that $\max_{1 \leq j \leq p} \|\zeta_{1j}\|_{L^{q}} \lesssim 1$, we have
        \begin{align*}
        \|\langle w, \zeta_1 \big\rangle \|_{L^{q}} & \lesssim \max \bigg\{ \|\langle w, \zeta_1 \rangle \big\|_{L^{2}}, \Big( \sum_{j=1}^p \big\| w_j \zeta_{1j} \|_{L^{q}}^{q}  \Big)^{1/q}  \bigg\} \\[0.1cm]
        & \lesssim \max \big\{ \|w\|_2, \|w \|_{q} \big\} \\[0.1cm]
        & =  \|\langle w, \zeta_1 \rangle \|_{L^{2}}
        \end{align*}
        as needed. Finally, the density condition~(ii) is a direct consequence of Theorem~1.2 in the paper~\citep{rudelson2015small} and the assumption that $\max_{1\leq j\leq p}\|g_j\|_{L^{\infty}}\lesssim 1$, where $g_j$ is the Lebesgue density of $\zeta_{1j}$. \qed
\section{Background results}\label{sec:background}
\begin{lemma}[Rosenthal inequalities~\citep{Rosenthal}]
\label{Lemma: Rosenthal's inequality}
Fix $q \geq 1$, and let $\xi_1, \ldots, \xi_n$ be independent random variables. Then, there is an absolute constant $c>0$ such that the following two statements are true.

(i). If $\xi_1, \ldots, \xi_n$ are non-negative,
then $$\Big\|\sum_{i=1}^n \xi_i \Big\|_{L^q}
\leq c \cdot q \cdot \max \bigg\{\Big\|\sum_{i=1}^n \xi_i\Big\|_{L^1},\Big(\sum_{i=1}^n \big\|\xi_i \big\|_{L^q}^q\Big)^{1 / q} \bigg\}.$$

(ii). If $q\geq 2$, and $\xi_1, \ldots, \xi_n$ are centered, then
$$
\Big\|\sum_{i=1}^n \xi_i \Big\|_{L^q} 
\leq c\cdot q \cdot \max \bigg\{\Big\|\sum_{i=1}^n \xi_i\Big\|_{L^2},\Big(\sum_{i=1}^n \big\|\xi_i \big\|_{L^q}^q\Big)^{1 / q}\bigg\}.
$$
\end{lemma}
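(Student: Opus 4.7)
This is Rosenthal's classical moment inequality in the sharpened form where the constant scales linearly in $q$. My plan is to derive the centered case (ii) from the non-negative case (i) via symmetrization together with a conditional Khintchine estimate, and to prove (i) itself by a truncation argument.

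For part (ii), I would introduce independent copies $\xi_i'$ of the $\xi_i$'s. Since $\E\xi_i = 0$, Jensen's inequality gives $\|\sum_i \xi_i\|_{L^q} \leq \|\sum_i (\xi_i - \xi_i')\|_{L^q}$, and the symmetry of each $\xi_i - \xi_i'$ allows me to insert i.i.d.\ Rademacher signs $\epsilon_i$ without changing the distribution of the sum. Conditioning on $(\xi_i, \xi_i')_i$ and applying the sharp Khintchine inequality yields
\begin{equation*}
\Big\|\sum_i \epsilon_i (\xi_i - \xi_i')\Big\|_{L^q \mid \xi, \xi'} \ \leq \ c\sqrt{q}\ \Big(\sum_i (\xi_i - \xi_i')^2\Big)^{1/2}
\end{equation*}
almost surely. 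Taking an unconditional $L^q$ norm and using $\|Y\|_{L^q}^2 = \|Y^2\|_{L^{q/2}}$, this gives $\|\sum_i (\xi_i - \xi_i')\|_{L^q}^2 \leq c^2 q\, \|\sum_i (\xi_i - \xi_i')^2\|_{L^{q/2}}$. The right-hand side is a sum of non-negative independent random variables at exponent $q/2 \geq 1$, to which I would apply part (i). The two entries of the max in (i) then translate to $2\|\sum_i\xi_i\|_{L^2}^2$ (via $\E(\xi_i - \xi_i')^2 = 2\Var\xi_i$) and a constant multiple of $(\sum_i\|\xi_i\|_{L^q}^q)^{2/q}$ (via the triangle inequality on $\|\xi_i - \xi_i'\|_{L^q}$); taking a square root yields (ii) with a constant of order $q$, the $\sqrt q$ from Khintchine combining with the $\sqrt{q/2}$ coming from the square root of the $(q/2)$-factor produced by (i).

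For part (i), set $A = \sum_i \E\xi_i$ and $B = (\sum_i\|\xi_i\|_{L^q}^q)^{1/q}$, and truncate each summand at level $B$ via $\xi_i = (\xi_i \wedge B) + (\xi_i - B)_+$. For the bounded part, a Bennett-type $L^q$ bound on the centered quantities $(\xi_i\wedge B) - \E(\xi_i\wedge B)$, which have range $B$ and summed variance at most $AB$, gives $\|\sum_i (\xi_i\wedge B)\|_{L^q} \lesssim A + \sqrt{qAB} + qB$, and each of these three terms is at most $cq\max(A,B)$. For the excess $(\xi_i - B)_+$, the key observation is that $\sum_i \P(\xi_i > B) \leq \sum_i \E\xi_i^q / B^q \leq 1$, so the number of nonzero excess terms behaves like a Poisson-type variable, while $\sum_i \E(\xi_i - B)_+^q \leq B^q$; a combined $L^1$--$L^q$ maximal estimate of Hoffmann--J{\o}rgensen type then yields $\|\sum_i (\xi_i - B)_+\|_{L^q} \lesssim qB$. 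Adding the two contributions gives (i).

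The main obstacle is the excess part in (i): achieving the linear-in-$q$ constant requires exploiting the fact that, thanks to the $L^q$-scale truncation, typically only a bounded number of indices contribute, which is precisely what the Hoffmann--J{\o}rgensen-type estimate captures. A naive triangle inequality would give only a bound in terms of $\sum_i\|\xi_i\|_{L^q}$ rather than $(\sum_i\|\xi_i\|_{L^q}^q)^{1/q}$, which is far too loose. Once (i) is in hand with this sharpened constant, the symmetrization and Khintchine reduction to (ii) is essentially mechanical and preserves the linear dependence on $q$.
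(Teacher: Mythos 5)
The paper does not prove this lemma: it appears in Appendix~\ref{sec:background} as a cited background result, so there is no house proof to compare against. Your sketch is a legitimate self-contained route to the linear-in-$q$ form. The reduction of (ii) to (i) via symmetrization, conditional Khintchine, and an application of (i) at exponent $q/2$ to the non-negative variables $(\xi_i-\xi_i')^2$ is standard and the bookkeeping is consistent: the $\sqrt q$ from Khintchine times the $\sqrt{q/2}$ from the square root of part (i)'s factor does give $O(q)$, and $\E(\xi_i-\xi_i')^2 = 2\Var(\xi_i)$ together with $\|\xi_i-\xi_i'\|_{L^q}\leq 2\|\xi_i\|_{L^q}$ recovers both entries of the max. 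For (i), truncation at the $L^q$ scale $B=(\sum_i\|\xi_i\|_{L^q}^q)^{1/q}$ is the right move, and the identities $\sum_i\P(\xi_i>B)\leq 1$ and $\sum_i\E(\xi_i-B)_+\leq B$ are exactly what a Hoffmann--J{\o}rgensen-type estimate needs. Two details you would have to nail down to make this rigorous: (a) you need a version of Hoffmann--J{\o}rgensen with an explicitly $O(q)$ constant, since the textbook form has worse dependence; and (b) the reduction of (ii) to (i) uses (i) at exponent $q/2$, which requires $q\geq 2$ --- consistent with the lemma's hypothesis in (ii), but it means you cannot bootstrap (ii) for $q$ slightly above $2$ unless (i) is proved down to $q=1$, which your truncation argument should deliver but deserves a separate check for $q\in[1,2)$. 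Finally, a small attribution point: the citation to Rosenthal's original paper does not actually yield the linear-in-$q$ constant stated; that sharpening is due to later work (Johnson--Schechtman--Zinn and subsequent authors), which is precisely where the delicate bookkeeping in your sketch earns its keep.
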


\noindent For the next lemma, recall that we denote the stable rank of a non-zero positive semidefinite matrix $A$ as ${\tt{r}}(A)=\textup{tr}(A)^2/\|A\|_F^2$.
\begin{lemma}[Lower-tail bound for Gaussian maxima~\citep{lopes2022sharp}]
\label{lem:lowertail}
Let $\xi \sim \mathcal{N}(0, R)$ be a Gaussian random vector in $\R^d$ for some correlation matrix $R$, and fix two constants $a, b \in(0,1)$ with respect to $d$. Then, there is a constant $c>0$ depending only on $(a, b)$ such that the following inequality holds for any integer $k$ satisfying
$2\leq k \leq \frac{b^2}{4} {\tt{r}}(R)$,
$$
\P\Big(\max_{1\leq j\leq d} \xi_j \leq a \sqrt{2(1-b) \log (k)}\Big)
\leq c\, k^{\frac{-(1-b)(1-a)^2}{b}}\big(\log (k)\big)^{\frac{1-b(2-a)-a}{2 b}}
 $$
\end{lemma}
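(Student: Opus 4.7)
The plan is to bound the left-tail probability by a truncated second-moment argument that exploits the stable-rank hypothesis. Writing $t = a\sqrt{2(1-b)\log k}$, I would introduce the counting variable $N(t) = |\{j \in S : \xi_j > t\}|$ for a carefully chosen subset $S \subseteq \{1,\ldots,d\}$ of cardinality $k$. Since $\{\max_{1 \le j \le d} \xi_j \le t\} \subseteq \{N(t) = 0\}$, it suffices to show $\P(N(t) = 0) \lesssim k^{-(1-b)(1-a)^2/b}(\log k)^{(1-b(2-a)-a)/(2b)}$, and the Paley--Zygmund inequality gives $\P(N(t)=0) \le \Var(N(t))/\E[N(t)]^{2}$, reducing the problem to computing a mean and a variance.

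For the choice of $S$, the hypothesis $k \leq (b^2/4)\,{\tt r}(R)$ rewrites as $\|R\|_F^2 \leq b^2 d^2/(4k)$. A probabilistic (or greedy) selection then produces a subset $S$ of size $k$ with $\sum_{i\neq j\in S} R_{ij}^2 \lesssim b^2 k$, because the expected off-diagonal squared Frobenius norm over a uniformly random subset of size $k$ equals $(k(k-1)/(d(d-1)))(\|R\|_F^2-d) \lesssim b^2 k$. Mills' ratio then yields $\E[N(t)] = k\,\bar\Phi(t) \asymp k^{\,1-a^2(1-b)}/t$, which diverges polynomially in $k$ because $a^2(1-b) < 1$.

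The principal obstacle is obtaining the exact exponent $(1-b)(1-a)^2/b$ from the variance computation. I would rely on the bivariate Gaussian identity
\[
\P(\xi_i > t,\, \xi_j > t) - \bar\Phi(t)^2 \;=\; \int_0^{R_{ij}} \frac{1}{2\pi\sqrt{1-\rho^2}}\exp\!\Bigl(-\frac{t^2}{1+\rho}\Bigr)\,d\rho,
\]
combined with a split of the pairs $(i,j)\in S\times S$ according to a cutoff $\rho^\star = \rho^\star(a,b)$. For $|R_{ij}| \le \rho^\star$ the integrand is essentially $\phi(t)^2 = (2\pi)^{-1}e^{-t^2}$, contributing $O(\rho^\star k^2 \phi(t)^2)$ in aggregate; for $|R_{ij}| > \rho^\star$ the Frobenius budget forces at most $O(b^2 k/(\rho^\star)^2)$ such pairs, each contributing at most a factor $e^{-t^2/(1+\rho^\star)}$. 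The exponent in the statement arises from balancing these two contributions through an optimal choice of $\rho^\star$ scaling with $b$, while the residual $1/t$ factors coming from Mills' ratio in both numerator and denominator produce the polylogarithmic prefactor $(\log k)^{(1-b(2-a)-a)/(2b)}$. The constant $c$ depends only on $(a,b)$ through these optimization constants and the selection constant for $S$.
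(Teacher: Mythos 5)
The paper does not prove this lemma at all; it is imported verbatim from \cite{lopes2022sharp} as background (Lemma~\ref{lem:lowertail} in Appendix~\ref{sec:background}), so there is no internal proof to compare against. Your proposal is therefore a genuinely independent attempt, and the setup (subset selection via the stable-rank budget, Mills' ratio for the first moment, the Plackett/Slepian integral identity for pair probabilities) is all sensible machinery. However, the core second-moment strategy cannot deliver the stated exponent, and this is not a detail that a cleverer variance calculation can repair.

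The obstruction is structural. With $N(t)=\sum_{j\in S}\mathbb{1}\{\xi_j>t\}$, the diagonal of the variance alone gives $\Var(N(t)) \ge k\,\bar\Phi(t)\big(1-\bar\Phi(t)\big)$, so unless the off-diagonal covariances are substantially \emph{negative} (which cannot be assumed, since the bound must hold for an adversarial correlation matrix and in particular for nonnegatively correlated $\xi$), the Chebyshev/Paley--Zygmund bound saturates at
\[
\P\big(N(t)=0\big) \ \le\ \frac{\Var(N(t))}{\E[N(t)]^2}\ \gtrsim\ \frac{1}{\E[N(t)]}\ \asymp\ \frac{\sqrt{\log k}}{k^{\,1-a^2(1-b)}}.
\]
Thus the best $k$-exponent this method can ever produce is $1-a^2(1-b)$, whereas the lemma claims $(1-b)(1-a)^2/b$. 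These are different functions of $(a,b)$, and the target exceeds the achievable ceiling whenever $(1-b)(1-a)^2/b > 1-a^2(1-b)$; for instance at $a=b=0.2$ the target exponent is $2.56$ while the ceiling is $0.968$, and as $b\to 0$ the target diverges while the ceiling stays bounded by $1$. Enlarging $S$ beyond cardinality $k$ does not save the argument either: the constraint $k\le (b^2/4)\,\mathtt{r}(R)$ only guarantees $d\gtrsim k/b^2$, and the required $|S|\gtrsim k^{(1-b)[(1-a)^2/b+a^2]}$ quickly exceeds any such budget once $b$ is small. The correct proof in \cite{lopes2022sharp} must rely on a mechanism that goes beyond a single second-moment bound.

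There is also a secondary flaw in the variance estimate: when $\rho^\star$ is a constant depending only on $(a,b)$, the low-correlation integrand is \emph{not} ``essentially $\phi(t)^2$''. With $t^2 = 2a^2(1-b)\log k$, one has $e^{-t^2/(1+\rho^\star)}/e^{-t^2} = k^{\,2a^2(1-b)\rho^\star/(1+\rho^\star)}$, which diverges polynomially in $k$. Replacing $e^{-t^2}$ by $e^{-t^2/(1+\rho^\star)}$ worsens the low-correlation contribution by a polynomial factor, so the balancing-of-$\rho^\star$ argument you sketch does not close. A valid balancing of the two regimes would force $\rho^\star\lesssim 1/\log k$, which in turn inflates the count of ``high-correlation'' pairs beyond the Frobenius budget's usefulness. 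Both gaps need to be addressed before the proposal could be considered a proof.
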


\noindent The next result is a variant of Lemma A.7 in \cite{spokoiny2015bootstrap} that can be proven in essentially the same way.

\begin{lemma}[Gaussian comparison inequality]
\label{lem:GaussianFrobenius}
Let $\zeta\sim N(0,I_d)$ and $\xi\sim N(0,A)$ for some positive semidefinite matrix $A\in \R^{d\times d}$. Then, 
$$
\sup _{s \in \mathbb{R}}\bigg|\P\Big(\max _{1 \leq j \leq d} \xi_j \leq s\Big)-\P\Big(\max _{1 \leq j \leq d} \zeta_j \leq s\Big)\bigg| \ \leq \ 2  \|A-I_d\|_F.
$$
\end{lemma}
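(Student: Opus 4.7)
The plan is to reduce to the case when $\|A - I_d\|_F$ is small and then invoke Pinsker's inequality with an explicit bound on the KL divergence between centered Gaussians.

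\textbf{Trivial case.} If $\|A - I_d\|_F \geq 1/2$, the asserted bound $2 \|A - I_d\|_F \geq 1$ is vacuous because the Kolmogorov distance between any two real random variables is at most $1$. Thus the content of the lemma is confined to the regime $\|A - I_d\|_F < 1/2$, which I henceforth assume.

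\textbf{Passage to total variation.} Because the map $x \mapsto \max_{1 \leq j \leq d} x_j$ is deterministic, the Kolmogorov distance between $\mathcal{L}(\max_j \xi_j)$ and $\mathcal{L}(\max_j \zeta_j)$ is bounded above by the total variation distance between $\mathcal{L}(\xi) = N(0, A)$ and $\mathcal{L}(\zeta) = N(0, I_d)$. By Pinsker's inequality, the latter is at most $\sqrt{KL(N(0, A) \| N(0, I_d))/2}$. I would then compute the KL divergence explicitly: under the assumption $\|A - I_d\|_F < 1/2$, Weyl's inequality yields $|\lambda_i(A) - 1| \leq \|A - I_d\|_{\textup{op}} \leq \|A - I_d\|_F < 1/2$, so every eigenvalue of $A$ lies in $(1/2, 3/2)$ and $A$ is invertible. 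The standard Gaussian formula then gives
\begin{equation*}
KL(N(0, A) \| N(0, I_d)) \ = \ \tfrac{1}{2} \sum_{i=1}^d \big(\lambda_i(A) - 1 - \log \lambda_i(A)\big).
\end{equation*}

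\textbf{Termwise inequality and conclusion.} A short calculus exercise shows that $\lambda - 1 - \log \lambda \leq (\lambda - 1)^2$ on $[1/2, 3/2]$: the function $f(\lambda) = (\lambda - 1)^2 - (\lambda - 1 - \log \lambda)$ satisfies $f(1) = 0$, $f'(\lambda) = (2\lambda - 1)(\lambda - 1)/\lambda$, so its only critical points on $[1/2, 3/2]$ are $\lambda = 1/2$ and $\lambda = 1$, and direct evaluation gives $f(1/2), f(3/2) > 0 = f(1)$, hence $f \geq 0$ on $[1/2, 3/2]$. Summing termwise gives $KL \leq \tfrac{1}{2} \|A - I_d\|_F^2$, and Pinsker yields
\begin{equation*}
\sup_{s \in \R} \Big| \P\big(\max_j \xi_j \leq s\big) - \P\big(\max_j \zeta_j \leq s\big) \Big| \ \leq \ \tfrac{1}{2} \|A - I_d\|_F \ \leq \ 2 \|A - I_d\|_F.
\end{equation*}

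\textbf{Main obstacle and alternatives.} The only delicate step is the termwise inequality $\lambda - 1 - \log \lambda \leq (\lambda - 1)^2$ on the restricted range $[1/2, 3/2]$, since the second derivative of $\lambda - 1 - \log \lambda$ is $1/\lambda^2$, which blows up as $\lambda \to 0$; a global quadratic upper bound would be false, so one genuinely needs the eigenvalue control from Step 1 (hence the split into a trivial and a nontrivial case). An alternative route would be Gaussian smart-path interpolation with a soft-max smoothing à la Chatterjee, which would avoid the invertibility step entirely but would produce additional smoothing parameters to tune; the two-case Pinsker argument above looks like the cleanest path to the stated constant $2$.
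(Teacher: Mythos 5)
Your proof is correct, and your route through the Kolmogorov-to-total-variation step, Pinsker's inequality, and the explicit eigenvalue formula for the Kullback--Leibler divergence between two centered Gaussians is sound. The trivial/nontrivial split at $\|A-I_d\|_F = 1/2$ is exactly what is needed to make the termwise inequality $\lambda - 1 - \log\lambda \leq (\lambda-1)^2$ available (and, as a useful side effect, to make the singular-$A$ case harmless, since any zero eigenvalue forces $\|A-I_d\|_F \geq 1$). The derivative computation $f'(\lambda) = (2\lambda-1)(\lambda-1)/\lambda$ and the monotonicity argument are both accurate, and the final chain TV $\leq \sqrt{KL/2} \leq \tfrac{1}{2}\|A-I_d\|_F \leq 2\|A-I_d\|_F$ checks out.

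For comparison: the paper does not write out a proof but simply asserts that the lemma is a variant of Lemma A.7 of Spokoiny and Zhilova (2015) and ``can be proven in essentially the same way.'' That cited lemma is itself a Pinsker/KL-type Gaussian comparison bound, so your argument is, in spirit, a self-contained reconstruction of the same standard technique rather than a genuinely different route. What you add beyond the citation is the explicit care that makes the constant $2$ come out for free: the reduction to the regime where $\|A-I_d\|_F < 1/2$ (so $A$ is automatically nonsingular and the KL divergence is finite), and the clean quadratic majorization of $\lambda - 1 - \log\lambda$ on $[1/2,3/2]$. This is a nice piece of bookkeeping that a bare citation obscures.
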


\begin{lemma}[Bentkus' Berry-Esseen Theorem \citep{Bentkus:2003}]
\label{Lemma: Bentkus berry esseen}
Let $V_1, \ldots, V_n$ be i.i.d.~random vectors in $\mathbb{R}^d$ with zero mean and identity covariance matrix. Furthermore, let $\gamma_d$ denote the standard Gaussian distribution on $\mathbb{R}^d$, and let $\mathscr{A}$ denote the collection of all Borel convex subsets of $\mathbb{R}^d$. Then, there is an absolute constant $c>0$ such that
$$
\sup _{\mathcal{A} \in \mathscr{A}}\left|\P\Big(\frac{1}{\sqrt{n}}\sum_{i=1}^n V_i \in \mathcal{A}\Big)-\gamma_d(\mathcal{A})\right| \leq \frac{c \cdot d^{1 / 4} \cdot \E\left\|V_1\right\|_2^3}{n^{1 / 2}} .
$$    
\end{lemma}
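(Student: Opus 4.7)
The plan is to follow the classical strategy due to Bentkus, which combines Gaussian smoothing of indicators of convex sets with a Lindeberg-type telescoping comparison. The deepest external ingredient is Ball's bound on the Gaussian surface area of convex sets, which is what produces the characteristic $d^{1/4}$ factor. Throughout, write $S_n = n^{-1/2}\sum_{i=1}^n V_i$ and let $G \sim \gamma_d$; the goal is to bound $\bigl|\P(S_n \in \mathcal{A}) - \gamma_d(\mathcal{A})\bigr|$ uniformly in convex $\mathcal{A}$.

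First, for a smoothing parameter $\varepsilon>0$, I would replace the indicator $\mathbf{1}_\mathcal{A}$ by a smooth surrogate $h_\varepsilon = \mathbf{1}_\mathcal{A} * \phi_\varepsilon$, where $\phi_\varepsilon$ is the density of $N(0,\varepsilon^2 I_d)$. Standard Gaussian-convolution calculus gives bounds of the form $\|D^k h_\varepsilon\|_\infty \lesssim \varepsilon^{-k}$ for $k=1,2,3$. The pointwise smoothing error is controlled by the Gaussian measure of the $\varepsilon$-neighborhood of $\partial\mathcal{A}$. Ball's theorem on the perimeter of convex bodies in Gauss space (refined by Nazarov) yields
\begin{equation*}
\gamma_d\bigl(\{x : \mathrm{dist}(x,\partial\mathcal{A}) \leq \varepsilon\}\bigr) \, \leq \, c\,d^{1/4}\,\varepsilon,
\end{equation*}
and this inequality is the single source of the $d^{1/4}$ in the final bound. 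A parallel bound under $\mathcal{L}(S_n)$ follows by transferring the estimate back through the smoothing error.

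Second, I would compare $\E h_\varepsilon(S_n)$ with $\E h_\varepsilon(G)$ via Lindeberg's method. Let $Z_1,\dots,Z_n$ be i.i.d.\ $N(0,I_d)$ independent of the $V_i$, and for $i=1,\dots,n$ define the hybrid partial sum $W_i = n^{-1/2}\bigl(\sum_{j<i} Z_j + \sum_{j>i} V_j\bigr)$. Telescoping,
\begin{equation*}
\E h_\varepsilon(S_n) - \E h_\varepsilon(G) \, = \, \sum_{i=1}^n \Bigl\{\E h_\varepsilon\bigl(W_i + n^{-1/2}V_i\bigr) - \E h_\varepsilon\bigl(W_i + n^{-1/2}Z_i\bigr)\Bigr\}.
\end{equation*}
A third-order Taylor expansion of each summand around $W_i$, together with the matching first and second moments $\E V_i = \E Z_i = 0$ and $\mathrm{Cov}(V_i) = \mathrm{Cov}(Z_i) = I_d$, cancels the zeroth, first, and second order contributions. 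The remainder is controlled by $\|D^3 h_\varepsilon\|_\infty \cdot n^{-3/2}\cdot (\E\|V_1\|_2^3 + \E\|Z_1\|_2^3)$, and summing over $i$ yields $\lesssim \varepsilon^{-3} n^{-1/2} \E\|V_1\|_2^3$ (noting $\E\|Z_1\|_2^3 \lesssim \E\|V_1\|_2^3$ by Jensen, since the latter dominates $d^{3/2}$).

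Combining, the total error is of order $d^{1/4}\varepsilon + \varepsilon^{-3} n^{-1/2}\E\|V_1\|_2^3$; optimizing in $\varepsilon$ and translating back from $h_\varepsilon$ to $\mathbf{1}_\mathcal{A}$ via the perimeter estimate produces the stated rate, up to the universal constant. The main obstacle is the third-derivative step: a naive supremum bound $\|D^3 h_\varepsilon\|_\infty \lesssim \varepsilon^{-3}$ combined with the perimeter inequality only barely balances to give the correct exponent, and Bentkus' original argument actually refines this by using integrated (rather than sup) norms of the derivatives evaluated against the Gaussian measure — essentially a more careful accounting that exploits the convexity of $\mathcal{A}$ and Ball's inequality simultaneously. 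Beyond that technical refinement, the remaining routine steps are Taylor expansion with explicit third-order remainder bounds and the standard decoupling inside the Lindeberg swap.
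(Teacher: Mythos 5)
The paper does not prove this lemma: it is stated as an external background result with a citation to Bentkus (2003), so there is no internal argument to compare against. Evaluating your sketch on its own terms, it correctly identifies the three standard ingredients of Bentkus's proof --- Gaussian smoothing of the indicator, Ball's perimeter bound for convex bodies in Gauss space (the source of the $d^{1/4}$), and a Lindeberg-type comparison --- but it mischaracterizes the difficulty of the final step in a way that hides a genuine gap.

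Your claim that the naive supremum bound $\|D^3 h_\varepsilon\|_\infty \lesssim \varepsilon^{-3}$ ``only barely balances to give the correct exponent'' is not accurate. Balancing $d^{1/4}\varepsilon$ against $\varepsilon^{-3} n^{-1/2}\E\|V_1\|_2^3$ gives $\varepsilon \asymp d^{-1/16} n^{-1/8}(\E\|V_1\|_2^3)^{1/4}$ and a total error of order $d^{3/16} n^{-1/8} (\E\|V_1\|_2^3)^{1/4}$; the exponent on $n$ is $-1/8$, not $-1/2$, and the moment dependence is also wrong. Passing from sup-norms to $L^1$-type bounds on $D^3 h_\varepsilon$ improves $\varepsilon^{-3}$ to $\varepsilon^{-2}$ but still only yields $n^{-1/6}$, and the improvement you allude to (``integrated norms of the derivatives evaluated against the Gaussian measure'') cannot on its own close the remaining gap, because the relevant expectation in the Lindeberg remainder is taken against the distribution of the hybrid partial sum $W_i$, which is \emph{not} Gaussian --- in fact, controlling how far it is from Gaussian is precisely the quantity being estimated. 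The step that actually produces the $n^{-1/2}$ rate in Bentkus's argument is a self-improving recursion (equivalently, an induction on the number of summands or on the sample size), in which the near-Gaussianity of $W_i$ is supplied by the inductive hypothesis, allowing the third-derivative integral to be bounded by its Gaussian counterpart plus a lower-order correction; this is where Ball's inequality re-enters inside the Lindeberg remainder, producing the matching $d^{1/4}$. Without that recursion, the smoothing-plus-Lindeberg scheme you describe terminates at a strictly suboptimal rate, so the sketch as written does not constitute a proof of the stated bound.
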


\begin{lemma}[Nazarov's inequality]
\label{Lemma: Nazarov's inequality}
Let $(\zeta_1, \ldots, \zeta_d)$ be a Gaussian random vector, and suppose the parameter $\underline{\sigma}^2:=\min _{1 \leq j \leq d} \Var (\zeta_j)$ is positive. Then, for any fixed $\epsilon>0$,
\begin{align*}
\sup _{s \in \mathbb{R}} \P\Big(\big|\max _{1 \leq j \leq d} \zeta_j-s\big| \leq \epsilon \big) 
\ \leq \ \frac{2 \epsilon}{\underline{\sigma}} (\sqrt{2 \log (d)}+2) .
\end{align*}
\end{lemma}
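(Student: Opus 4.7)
The strategy is to prove the bound by uniformly controlling the Lebesgue density of $Z := \max_{1 \leq j \leq d} \zeta_j$ and then integrating over the interval of length $2\epsilon$ centered at $s$.

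First, by replacing each $\zeta_j$ with $\zeta_j/\underline{\sigma}$ and $(s, \epsilon)$ with $(s/\underline{\sigma}, \epsilon/\underline{\sigma})$, I would reduce to the normalized setting in which $\Var(\zeta_j) \geq 1$ for every $j$. The target inequality then becomes
\[
\sup_{s \in \R} \P(|Z - s| \leq \epsilon) \ \leq \ 2\epsilon (\sqrt{2 \log d} + 2),
\]
and it suffices to show that $Z$ admits a Lebesgue density $\rho$ satisfying $\sup_{t \in \R} \rho(t) \leq \sqrt{2 \log d} + 2$, since in that case $\P(|Z - s| \leq \epsilon) = \int_{s - \epsilon}^{s + \epsilon} \rho(t)\, dt$ is immediately bounded by $2\epsilon(\sqrt{2 \log d} + 2)$.

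To prove the density bound, I would follow Nazarov's approach via Gaussian integration by parts. The starting point is that almost surely exactly one coordinate attains the maximum, since for every $j \neq k$ either $\Var(\zeta_j - \zeta_k) > 0$ (so $\P(\zeta_j = \zeta_k) = 0$) or $\zeta_j \equiv \zeta_k$ almost surely (which can be removed by de-duplicating coordinates without enlarging the bound). Partitioning $\{Z \in dt\}$ according to the arg-max coordinate yields the representation
\[
\rho(t) \ = \ \sum_{j=1}^d p_j(t)\, \P\bigl(\zeta_k \leq t \text{ for all } k \neq j \,\bigm|\, \zeta_j = t\bigr),
\]
where $p_j$ is the marginal Gaussian density of $\zeta_j$. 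A crude application of $p_j(t) \leq (2\pi)^{-1/2}$ together with the fact that the conditional probabilities sum to at most $d$ is too weak; the $\sqrt{\log d}$ factor is extracted by rewriting the sum via Stein's identity, using the relation $p_j'(t) = -(t - \E \zeta_j)\sigma_j^{-2} p_j(t)$, and then bounding the resulting expression in terms of $\E[\max_j (\zeta_j - \E \zeta_j)/\sigma_j]$, which is the expected maximum of standardized (possibly dependent) Gaussians and is controlled by the classical bound $\sqrt{2 \log d}$.

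The main obstacle will be rigorously justifying the conditional-density representation and the application of Stein's identity to the non-smooth indicator $\mathbf{1}\{Z \leq t\}$, which requires a mollification argument. Tracking constants sharply enough to arrive at the specific bound $\sqrt{2\log d} + 2$, rather than a looser $C\sqrt{\log d}$, will also require care: the additive $2$ arises from boundary/Gaussian tail contributions in the integration-by-parts step, and its precise form reflects the universal constant in $\E[\max_j g_j] \leq \sqrt{2 \log d}$ for standardized Gaussians $g_1, \ldots, g_d$.
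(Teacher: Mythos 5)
The paper does not prove this lemma; it is stated as a background result, with a pointer to Lemma 4.3 of the cited Chernozhukov--Chetverikov--Kato reference and, ultimately, to Nazarov's 2003 work. So there is no in-paper proof to compare against. That said, the strategy you outline --- normalize so $\min_j \Var(\zeta_j)\ge 1$, establish a pointwise bound on the Lebesgue density of $\max_j\zeta_j$, and integrate over the window of length $2\epsilon$ --- is precisely the route taken in the sources the paper cites, and the density representation $\rho(t)=\sum_j p_j(t)\,\P(\zeta_k\le t\ \forall k\ne j\mid\zeta_j=t)$ via the (a.s.\ unique) argmax coordinate is the correct starting point. Your scaling reduction and the translation of a density bound $\sup_t\rho(t)\le\sqrt{2\log d}+2$ into the stated inequality are both correct.

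The part that remains genuinely open in your sketch is the one that actually does the work: turning the argmax decomposition into the bound $\sup_t\rho(t)\le\sqrt{2\log d}+2$. Saying that ``Stein's identity'' together with $p_j'(t)=-(t-\E\zeta_j)\sigma_j^{-2}p_j(t)$ and the estimate $\E\big[\max_j(\zeta_j-\E\zeta_j)/\sigma_j\big]\le\sqrt{2\log d}$ does this is not yet an argument. Note in particular that a crude bound $p_j(t)\le(2\pi\sigma_j^2)^{-1/2}$ and $\sum_j F_j(t)\le d$ gives only $\rho(t)\lesssim d$, so the cancellation/conditioning mechanism that brings this down to $\sqrt{\log d}$ is exactly what needs to be exhibited; the published proofs handle this by a careful comparison argument (e.g., first proving the claim when all $\sigma_j$ are equal, then reducing the general case by an independent-Gaussian decomposition), not by a single application of Gaussian integration by parts. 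You correctly flag this as the obstacle, but as written the sketch identifies the target rather than proves it. Since the paper offers no proof either, this is a reasonable blind reconstruction of the approach, but it should not be represented as a complete proof without filling in that central step.
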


\noindent This version of Nazarov's inequality appears in Lemma 4.3 of \citep[][]{chernozhukov2016empirical} and originates from~\cite{nazarov2003maximal}.

\begin{lemma}[Fuk-Nagaev inequality]
\label{Lemma: Fuk-Nagaev inequalities}
Fix $q \geq 1$, and let $\xi_1, \ldots, \xi_n$ be centered independent random variables. Then, for any fixed $s>0$,
\begin{align*}
\P\bigg(\Big|\sum_{i=1}^n \xi_i\Big| \geq s\bigg) \ \leq \ 2\Big( \frac{q+2}{qs} \Big)^q \sum_{i=1}^n \E|\xi_i|^q
+ 2\exp\bigg(\frac{-2s^2}{(q+2)^2 e^q \sum_{i=1}^n \E(\xi_{i}^2)}\bigg).
\end{align*}
\end{lemma}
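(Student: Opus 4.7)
The plan is to prove the two-sided tail bound by the classical truncation argument that underlies Fuk--Nagaev inequalities. First I would introduce a truncation level $y$ depending on $q$ and $s$; the factor $(q+2)/(qs)$ appearing in the polynomial term of the statement suggests the choice $y = qs/(q+2)$. Writing $\xi_i = \xi_i' + \xi_i''$ with $\xi_i' = \xi_i \cdot 1\{|\xi_i|\leq y\}$ and $\xi_i'' = \xi_i \cdot 1\{|\xi_i|>y\}$, the key observation is that on the event $\{\max_i |\xi_i| \leq y\}$ every $\xi_i''$ vanishes, so $\sum_i \xi_i = \sum_i \xi_i'$. This yields the union bound
\[
\P\Big(\big|\textstyle\sum_i \xi_i\big| \geq s\Big) \ \leq \ \P\big(\max_i |\xi_i| > y\big) \ + \ \P\Big(\big|\textstyle\sum_i \xi_i'\big| \geq s\Big).
\]
A union bound combined with Markov's inequality then handles the first summand: $\P(\max_i |\xi_i| > y) \leq y^{-q}\sum_i \E|\xi_i|^q = \big((q+2)/(qs)\big)^q\sum_i \E|\xi_i|^q$, which is exactly the polynomial term appearing in the statement (the factor of $2$ is absorbed when the argument is repeated below for $-\sum_i \xi_i$).

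Next I would handle $\sum_i \xi_i'$ by centering. Since $\E \xi_i = 0$, we have $\E \xi_i' = -\E \xi_i''$, and
\[
\Big|\textstyle\sum_i \E \xi_i'\Big| \ \leq \ \sum_i \E|\xi_i|\, 1\{|\xi_i|>y\} \ \leq \ y^{1-q}\sum_i \E|\xi_i|^q,
\]
which for the intended $y$ is absorbed into the deviation $s$ without changing the order of the bound. Working with $\widetilde \xi_i' := \xi_i' - \E \xi_i'$, which satisfies $|\widetilde \xi_i'|\leq 2y$ and $\E(\widetilde \xi_i')^2 \leq \E \xi_i^2$, I would apply a Chernoff/Bennett-type argument: for $\lambda>0$, Taylor-expand $\E \exp(\lambda \widetilde \xi_i')$ and bound moments via $\E|\widetilde \xi_i'|^k \leq (2y)^{k-2}\E\xi_i^2$. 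This produces a sub-exponential tail of the form $\exp(-s^2/(c_1 V + c_2 ys))$ with $V = \sum_i \E \xi_i^2$, and plugging in $y = qs/(q+2)$ converts $ys$ into a multiple of $s^2/(q+2)$ that can be collected with $V$.

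The main obstacle will be pinning down the exact constants in the statement, namely the factor $e^q$ inside the denominator of the exponent and the prefactor $2$. The factor $e^q$ is the fingerprint of an argument that splits the Taylor series at the index $q$: for $k \leq q$ one uses $k! \geq (k/e)^k$ to introduce an $e^q$ factor, while for $k > q$ one controls the tail of the series by the truncation $|\widetilde \xi_i'|\leq 2y$. Optimizing the Chernoff parameter $\lambda$ under this splitting, together with the specific choice $y = qs/(q+2)$, should reproduce the denominator $(q+2)^2 e^q V$ in the exponent; the factor $2$ in front of each term then arises from applying the argument separately to $\sum_i \xi_i$ and $-\sum_i \xi_i$ and union-bounding. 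I would finally combine the two pieces and verify that the remainder from the centering correction $\sum_i \E \xi_i'$ is dominated by a constant fraction of $s$, so that the Bennett bound can be applied with threshold $s$ (after renaming) rather than $s - \mathcal{O}(y^{1-q}\sum_i \E|\xi_i|^q)$.
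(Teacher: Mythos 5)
The paper does not prove this lemma; it cites the statement directly from Rio~(2017), equation~(1.7). So there is no internal argument in the paper to compare your proposal against, and any derivation you supply is new material. Evaluated on its own terms, your sketch follows the classical Fuk--Nagaev template: truncate at $y = qs/(q+2)$, union-bound, Markov on the $q$th moment of the maximum to get the polynomial tail, then a Bennett-type exponential bound on the bounded, re-centered truncated sum. The first half is fine, and your explanation of the outer factor $2$ (two one-sided bounds) is standard.

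The gap is in the exponential piece, and it is structural, not just a matter of constants. For centered variables $\widetilde\xi_i'$ with $|\widetilde\xi_i'|\leq 2y$ and $\sum_i\Var(\widetilde\xi_i')\leq V:=\sum_i\E\xi_i^2$, the Bennett/Bernstein inequality gives at best
\begin{equation*}
\P\Big(\big|\textstyle\sum_i\widetilde\xi_i'\big|\geq s\Big)\ \leq\ 2\exp\Big(-\frac{s^2}{2V+\tfrac{4}{3}\,y s}\Big),
\end{equation*}
whose exponent contains the linear-in-$s$ term $\tfrac{4}{3}ys$. Substituting your truncation level $y = qs/(q+2)$ converts $ys$ into $qs^2/(q+2)$, so the exponent becomes $\frac{s^2}{2V + \frac{4q}{3(q+2)}s^2}$, which is \emph{bounded in $s$} and tends to $\frac{3(q+2)}{4q}$ as $s\to\infty$. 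The target exponent $\frac{2s^2}{(q+2)^2 e^q V}$ in the lemma is unbounded in $s$. Hence your plan of ``collecting $ys$ with $V$'' works only in the regime $s^2\lesssim V$; equivalently, the inequality $h(u)\geq 2u^2/((q+2)^2 e^q)$ with $u=2ys/V$, which your route implicitly needs for all $u$, fails once $u$ is large enough that $h(u)\sim u\log u$ drops below the quadratic. To close the gap one must treat the large-$s$ regime separately, either by a sharper moment-generating-function estimate (which is what Rio actually does) or by showing that once the Bennett exponent saturates, the probability of the truncated sum is already dominated by the polynomial term --- and the latter is not a consequence of the union-bound step, which only controls $\P(\max_i|\xi_i|>y)$, not $\P(|\sum_i\xi_i'|\geq s)$. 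The Stirling/Taylor-splitting heuristic you offer for the $e^q$ factor is a reasonable lead, but as written it does not undo the $ys$ dependence in the exponent, so the proposal as it stands does not establish the lemma.
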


\noindent This statement of the Fuk-Nagaev inequality is based on~\cite[][eqn.~1.7]{rio2017constants}.

\begin{lemma}[Kiefer's inequality]
\label{Lemma: Kiefer's inequality}
If $\xi_1, \ldots, \xi_n$ are i.i.d.~Bernoulli random variables, then 
$$
\P\bigg(\frac{1}{n}\sum_{i=1}^n \xi_i \geq \frac{1}{2}\Big) \ \leq \ 
2 \big(e\E(\xi_1)\big)^{n(1/2-\E(\xi_1))^2}.
$$
\end{lemma}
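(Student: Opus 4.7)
My plan is a two-step proof: a Chernoff bound followed by a short elementary inequality. Let $p := \E(\xi_1)$ and $S_n = \sum_{i=1}^n \xi_i$.

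\textbf{Step 1: Trivial range.} If $p > 1/2$, then $ep > e/2 > 1$, so $(ep)^{n(1/2-p)^2} \geq 1$ and the right-hand side of the claim is at least $2$. Since any probability is at most $1$, the inequality holds vacuously. Hence I may assume $p \in (0, 1/2]$ throughout.

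\textbf{Step 2: Chernoff bound.} For every $t > 0$, Markov's inequality applied to the exponential function yields
$$\P(S_n \geq n/2) \leq e^{-tn/2}\,\E[e^{tS_n}] = e^{-tn/2}(1 - p + pe^t)^n.$$
Plugging in the optimal $t^\star = \log((1-p)/p)$ (well-defined since $p \in (0,1/2]$) gives the sharp Bernoulli Chernoff bound
$$\P(S_n \geq n/2) \leq (4p(1-p))^{n/2}.$$

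\textbf{Step 3: Algebraic reduction.} It remains to show $(4p(1-p))^{n/2} \leq 2(ep)^{n(1/2-p)^2}$ for every $p \in (0, 1/2]$ and $n \geq 1$. Taking logarithms, this reduces to
$$f(p) := \tfrac{1}{2}\log(4p(1-p)) - (1/2-p)^2 (1 + \log p) \leq \tfrac{\log 2}{n},$$
which follows from the $n$-independent bound $f(p) \leq 0$ on $(0, 1/2]$. I verify this by elementary calculus: one checks $f(1/2) = 0$ and $f'(1/2) = 0$, and the substitution $q = 1/2 - p$ yields the Taylor expansion $f = -(3 - \log 2)q^2 + 2q^3 + O(q^4)$, which is strictly negative for small $q > 0$. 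As $p \to 0^+$, the difference $\tfrac{1}{2}\log(4p) - \tfrac{1}{4}\log p = \log 2 + \tfrac{1}{4}\log p$ tends to $-\infty$, so $f(p) \to -\infty$. A monotonicity check on the intermediate interval rules out interior maxima attaining positive values.

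The main obstacle is not the probabilistic part but the single-variable calculus inequality $f(p) \leq 0$ in Step 3. An alternative route that sidesteps this analysis splits $(0, 1/2]$ into sub-regimes: for $p \geq e^{-3}$, Hoeffding's inequality gives $\P(S_n \geq n/2) \leq \exp(-2n(1/2-p)^2) \leq (ep)^{n(1/2-p)^2}$ (using $\log(ep) \geq -2$), while for small $p$ a direct geometric-series estimate $\P(S_n \geq n/2) \leq 2\binom{n}{\lceil n/2\rceil}p^{\lceil n/2\rceil}$ combined with $\binom{n}{\lceil n/2\rceil} \leq 2^n$ and straightforward algebra yields the bound on the complementary subrange.
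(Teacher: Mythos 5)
The paper does not actually prove this lemma; it only cites Corollary~A.6.3 of van der Vaart and Wellner and asserts, without argument, that the hypothesis $\E(\xi_1) < 1/e$ in that reference can be dropped. A self-contained direct proof like yours is therefore a genuinely different route from the paper's one-line citation.

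Steps~1 and~2 are correct, but the primary version of Step~3 has a genuine gap. You reduce to the $n$-independent inequality $f(p) \le 0$ on $(0,1/2]$, but you only verify it at $p=1/2$, to second order near $p=1/2$, and in the limit $p\to 0^+$; the closing phrase ``a monotonicity check on the intermediate interval rules out interior maxima attaining positive values'' asserts rather than proves the global statement. Because the factor of $2$ in the claim only absorbs $\log(2)/n$ of slack, you really do need $f(p)\le 0$ uniformly on the interval, and that is not established.

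The alternative route you sketch at the end is sound and should be promoted to the main argument. For $p \ge e^{-3}$, Hoeffding gives $\P(S_n \ge n/2) \le \exp(-2n(1/2-p)^2) \le (ep)^{n(1/2-p)^2}$, since $\log(ep)\ge -2$ iff $p\ge e^{-3}$. For $p < e^{-3}$, bounding the binomial tail $\sum_{k\ge \lceil n/2\rceil}\binom{n}{k}p^k(1-p)^{n-k}$ by $\binom{n}{\lceil n/2\rceil}\sum_{k\ge \lceil n/2\rceil}p^k$ and summing the geometric series with ratio $p\le 1/2$ gives $\P(S_n\ge n/2)\le 2\binom{n}{\lceil n/2\rceil}p^{\lceil n/2\rceil}\le 2\cdot 2^n p^{n/2}$; matching this against $2(ep)^{n(1/2-p)^2}$ reduces to $\log 2 + \tfrac12\log p \le (1/2-p)^2(1+\log p)$, and since $0 < (1/2-p)^2 < 1/4$ while $1+\log p < -2 < 0$, the right side exceeds $\tfrac14(1+\log p)$, so it suffices to check $\log p \le 1 - 4\log 2$, which holds because $\log p < -3 < 1 - 4\log 2$. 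Filling in these two lines yields a complete elementary proof and, as a bonus, makes transparent why the $1/e$ threshold in the cited reference is inessential.
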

\noindent The result above is the modification of Corollary~A.6.3~in {\citep[][]{van1996weak}}. In that reference, the success probability $\E(\xi_1)$ of the Bernoulli random variables is assumed to be less than $1/e$, but in the formulation above, the result holds for all values of $\E(\xi_1)$.

\end{document}